\newcommand{\maxi}{{\mathrm{maximize}}}
\newcommand{\non}{\nonumber}
\newcommand{\bw}{\mathbf{w}}
\newcommand{\bH}{\mathbf{H}}
\newcommand{\bI}{\mathbf{I}}
\newcommand{\bh}{\mathbf{h}}
\newcommand{\bW}{\mathbf{W}}
\newcommand{\Qcal}{\mathcal{Q}}
\newcommand{\Kcal}{\mathcal{K}}
\newcommand{\UEk}{\mathtt{UE}_k}
\newcommand{\doublewidetilde}[1]{{%
		\mathpalette\double@widetilde{#1}}}
\newcommand{\double@widetilde}[2]{%
	\sbox\z@{$\m@th#1\widetilde{#2}$}%
	\ht\z@=.5\ht\z@
	\widetilde{\box\z@}}
\newcommand*{\hili}{\color{black}}
\newtheorem{theorem}{Theorem}
\newtheorem{lemma}{Lemma}
\newtheorem{corollary}{Corollary}
\newtheorem{remark}{Remark}
\newtheorem{assumption}{Assumption}
\begin{document}
	
	\title{\huge Robust Congestion Control for Demand-Based Optimization in Precoded Multi-Beam High Throughput Satellite Communications}
	
	\author{Van-Phuc~Bui, Trinh~Van~Chien, \textit{Member}, \textit{IEEE}, Eva~Lagunas, \textit{Senior Member}, \textit{IEEE}, Jo\"el~Grotz, Symeon~Chatzinotas, \textit{Senior Member}, \textit{IEEE}, and Bj\"orn~Ottersten, \textit{Fellow}, \textit{IEEE} 
		\thanks{Manuscript received xxx; revised xxx and xxx; accepted
		xxx. Date of publication xxx; date of
			current version xxx. This   work   was    supported   by  the   Luxembourg   National Research  Fund  (FNR)  under  the  project INtegrated Satellite - TeRrestrial Systems for Ubiquitous Beyond 5G CommunicaTions (INSTRUCT-FNR/IPBG19/14016225/INSTRUCT), partially supported by  the Luxembourg National Research Fund (FNR) project
			titled Dynamic Beam Forming and In-band Signalling for Next Generation
			Satellite Systems (DISBuS-FNR/BRIDGES19/IS/13778945/DISBuS),  and partially supported by the Luxembourg National Research Fund (FNR) under the project FlexSAT (C19/IS/13696663). Please note that the views of the authors of this paper do not necessarily reflect the views of ESA and/or SES. The parts of this paper have been accepted to present at the IEEE ASMS/SPSC~2022 \cite{Phuc2022}.  The associate editor coordinating the
			review of this article and approving it for publication was C. Jiang.
			\textit{(Corresponding author: Trinh Van Chien.)}}
		\thanks{V.-P. Bui was with  the Interdisciplinary Centre for Security, Reliability and Trust (SnT), University of Luxembourg, L-1855 Luxembourg, Luxembourg (email: phuc.bui@uni.lu).}
	\thanks{T. V. Chien is with the School of Information and Communication Technology (SoICT), Hanoi University of Science and Technology (HUST), 100000 Hanoi, Vietnam (email: chientv@soict.hust.edu.vn).}
	\thanks{E. Lagunas, S. Chatzinotas, and B. Ottersten are with the Interdisciplinary Centre for Security, Reliability and Trust (SnT), University of Luxembourg, L-1855 Luxembourg, Luxembourg (email: \{ eva.lagunas,symeon.chatzinotas, bjorn.ottersten\}@uni.lu).}
	\thanks{Jo\"el Grotz is with the  SES, Chateau de Betzdorf, Betzdorf 6815, Luxembourg (email: Joel.Grotz@ses.com).}
		
	}
	\maketitle
	\begin{abstract}
		\fontsize{10}{10}{High-throughput satellite communication systems are growing in strategic importance thanks to their role in delivering broadband services to mobile platforms and  residences and/or businesses in rural and remote regions globally. Although precoding has emerged as a prominent technique to meet ever-increasing user demands, there is a lack of studies dealing with congestion control. This paper enhances the performance of multi-beam high throughput geostationary satellite systems under congestion, where the users' quality of service (QoS) demands cannot be fully satisfied with limited resources. In particular, we propose congestion control strategies, relying on simple power control schemes. We formulate a multi-objective optimization framework balancing the system sum-rate and the number of users satisfying their QoS requirements. Next, we propose two novel approaches that effectively handle the proposed multi-objective optimization problem. The former is a model-based approach that relies on the weighted sum method to enrich the number of satisfied users by solving a series of the sum-rate optimization problems in an iterative manner. The latter is a data-driven approach that offers a low-cost solution by utilizing supervised learning and exploiting the optimization structures as continuous mappings. The proposed general framework is evaluated for different linear precoding techniques, for which the low computational complexity algorithms are designed. Numerical results manifest that our proposed framework effectively handles the congestion issue and brings superior improvements of rate satisfaction to many users than previous works. Furthermore, the proposed algorithms show low run-time and make them realistic for practical systems.} 
	\end{abstract}
	\begin{IEEEkeywords}
		Multi-beam high throughput satellite communications, quality of service requirements, multi-objective optimization, neural networks.
	\end{IEEEkeywords}
	
	\section{Introduction}\label{sec:intro}
	Multi-beam high throughput satellite (MB-HTS) systems have been acknowledged as an efficient solution providing ubiquitous high-speed broadband services to users in a large coverage area, especially for inaccessible or insufficiently covered places by current terrestrial networks \cite{Kodheli:21:tut}. Current broadband satellite communication systems make use of a multi-beam footprint, which boosts the frequency reuse improving spectral efficiency as well as system capacity \cite{Joroughi:17:twc,Lei:11,Choi:twc:05}. Due to low-cost and low-interference designs, an MB-HTS system may allocate limited radio resources uniformly across beams with the merits of simple procedures and inexpensive operating expenditure \cite{sat_equal}. Notwithstanding, the uniform resource allocation combined with the limited available spectrum may be inefficient in facing the rapid growth of traffic demands \cite{LagunasASMS2020, trinh2021user,Krivochiza:21:access}. In this context, full frequency reuse across satellite beams has stood up as a promising alternative boosting spectral efficiency and system capacity \cite{Vazquez:16, Shankar:21:comletter}. 
	
	There is a vast literature related to precoded MB-HTS, many of them including Quality of Service (QoS) constraints in terms of minimum Signal-to-Noise Ratio or minimum throughput per user \cite{ZhangAsilomar2020,QiWCL2020}. However, the uneven QoS requests pose a constant challenge to such works particularly for high QoS scenarios and limited satellite resources.
	To maintain the individual QoS requirement of each user, the authors in \cite{Zheng:twc:12} formulated and solved a precoding design in a multi-beam satellite system by the use of an alternating optimization algorithm. Despite the data throughput improvement over the proposed iterative procedure, the solution in \cite{Zheng:twc:12} is not scalable since the max-min fairness optimization framework is not able to guarantee an acceptable QoS level for a large-scale system with many users. A precoding design targeting the system energy efficiency maximization is presented in \cite{Qi:comletter:20} under practical total power constraint and QoS requirements. Nevertheless, this framework requires time-consuming optimization, which greatly limits its applicability to real-world systems. Linear precoding \cite{bjornson2014}, e.g., zero-forcing (ZF) or regularized zero-forcing (RZF), has demonstrated good performance with low complexity in MB-HTS systems \cite{Christopoulos2015,Qi:comletter:18,trinh2021user}. However, the aforementioned works relied on non-empty feasible regions to make sure that the proposed optimization can reach a solution. For a complex system with significant number of users with divergent QoS requirements, there is an overwhelming probability that at least one user is in an extreme adverse channel condition or the requested QoS is too high under the limited radio resources. The existed solutions will, therefore, be unattainable due to congestion  resulting in an infeasible problem. 
	No known works have studied how to detect unsatisfied users and operate MB-HTS systems with a linear precoding technique under harsh optimization conditions, where the congestion appears. In this paper, we address this gap by formulating a multi-objective optimization framework balancing the system sum-rate and the number of users satisfying their QoS requirements. To solve this, we pursue two methodologies: (i) model-based approach, and (ii) data-driven approach. \textcolor{black}{While model-based methods are known to provide accurate solutions, data-driven approaches have shown to speed up the convergence towards close-to-optimal solutions \cite{Chien19:TCOM, sun2018learning} that are motivated by advances in machine learning as presented subsequently.}
	
	Machine learning has demonstrated its potential in constructing data-driven algorithms for engineering problems in signal processing and resource allocation via the use of neural networks \cite{Chien19:TCOM, goodfellow2016deep}. Rather than requesting humans to identify, formulate, and solve a system-level model as in traditional-based optimization theories, neural networks make efforts in wireless communications to learn the essential features of a data set, then use such information for predicting and decision making. One critical role is to design low complexity neural networks in which machine learning is applied for approximating high-cost optimization algorithms. In contrast to the maturity of machine learning developed for  terrestrial networks, learning-based approaches applied to satellite communications and performance evaluations are in their infancy \cite{vazquezMagazine2021}. To name a few,  the inherent NP-hard issues of different beam hopping optimization problems  were effectively handled with high accuracy in \cite{Lei:Access:20}. Moreover, channel allocation strategies under the viewpoints of mixed-integer programming were studied in \cite{Hucomletter18}, where authors exploited reinforcement learning to minimize the service blocking probability and enhance the data throughput. Regarding the power allocations,  the authors in \cite{Chen:cof:20} optimized the transmit power coefficients subject to the traffic demands for a multi-beam satellite network without considering precoding. Furthermore, the work in \cite{Luis:conf:19} proposed a deep learning model for power allocation with a simplified rate expression. We emphasize that these related works only studied single-objective optimization problems without raising concerns on the congestion controls that cannot be avoided in practical systems. For future MB-HTS systems, the applications of machine learning  for multi-objective signal processing optimization are promising to balance conflicting metrics and to ensure the individual QoS requirements with a tolerable computational complexity towards online resource allocation.
	
	{\hili The congestion problem was investigated and handled in \cite{1618923, 4567596, 1459062, 7497508} and references therein in the terrestrial networks. In particular, the authors in \cite{1618923} considered a primal-dual decomposition to determine and withdraw users interfering the most with other users until the remaining spectral efficiency demands can be satisfied. However, no power constraints were considered in \cite{1618923}. By using a limited power budget, a game-theoretic formulation of the power control issue was developed in \cite{4567596} to guarantee users' information rates. Also, a power allocation policy to decrease the requested throughput of users with poor channel conditions was proposed in \cite{1459062}. Besides, in \cite{7497508}, the congestion issue was handled by maximizing the minimum spectral efficiency of the users and neglecting the users' demands, which is a distinct issue that could result in none of the QoS requirements being met. Different solutions to handle the total energy minimization optimization problem under congestion was introduced in \cite{9531522}. Nonetheless, all these related works considered the congestion control by formulating single objective optimization problems and using a traditional model-based optimization theory to obtain the solution.
	To the best of the authors' knowledge, the transmit power allocation and the QoS satisfactions for the multi-objective optimization to tackle the joint maximization of both the sum rate and demand-based constraints subject to the limited power budget has never been considered before. This paper considers MB-HTS systems under multiple-access scenarios where many users with individual data throughout requirements share the same time and frequency resource. Congestion may appear for different reasons. For example, congestion may occur when one of the users has a sudden peak of demand (i.e. high QoS constraint), when their channel condition is not good, and/or when he is receiving too strong interference. We handle the congestion issue that appears when solving the sum data throughput maximization due to the practical aspects such as the weak channel conditions and limited power budget at the satellite.} Thanks to the European Space Agency (ESA) \cite{ESA}, the proposed algorithms are tested with a practical beam pattern. Our main contributions are summarized as follows:
	\begin{itemize}
		\item We formulate a new multi-objective optimization problem for the MB-HTS systems to maximize the number of users served satisfying their QoS requirements and the sum rate of the entire network. {Even though the problem is a non-smooth nonlinear program, it effectively handles the congestion issue by splitting the scheduled user set into the satisfied and unsatisfied user sets and combining both of them into the multi-objective optimization framework.} 
		\item We propose a general model-based solution that exploits the weighted sum method to transfer the original multi-objective problem to a single-objective maximization with a balance between the utility metrics.  Conditioned by the total transmit power limit, a heuristic algorithm iteratively solves the single-objective problem by prioritizing the number of satisfied users. This proposed algorithm then allocates the remaining power to maximize the sum rates. The generality of the model-based approach lets room for network operators to design a sum rate maximization solver.
		\item Next, we propose a general data-driven methodology where a neural network is used to predict the transmit power coefficients and satisfied-user set solutions with low computational complexity. It is achieved by exploiting supervised learning and based on the solution from the model-based approach. From a series of continuous mappings, the neural network only requires the channel gains as input. The generality of the data-driven approach is a consequence of the model-based approach and the network can opt for an arbitrary type of neural network architectures.
		\item By the convenience of the semi-closed form solution to the power allocation from the water-filling method, we typically design the low-cost algorithms for the MB-HTS systems by adapting the general model-based approach. The channel orthogonality can effectively contribute to reducing the computational complexity, even though the water filling method needs to be applied in an iterative manner. The power solutions can be effectively used for training fully connected neural networks.
		\item By using a practical satellite beam pattern provided by ESA, the performance of the proposed algorithms is evaluated by extensive numerical results. The solution is compared with the benchmarks \cite{lu2019robust,trinh2021user,Krivochiza:21:access} in the literature in terms of both sum rate and users' QoS satisfaction. Meanwhile, the neural network achieves the solution with high prediction accuracy in a few milliseconds.
	\end{itemize}
	
	\textit{Notation}: The upper and lower bold letters are used to denote the matrix and vectors, respectively. The notation $\mathcal{CN}(\cdot, \cdot)$ denotes the circularly symmetric Gaussian distribution and $\mathbb{E}\{ \cdot\}$ is the expectation operator. The notation $\| \cdot \|$ is the Euclidean norm and $| \mathcal{K} |$ is the cardinality of the set $\mathcal{K}$. The superscripts $(\cdot)^H$  and $(\cdot)^T$ are the Hermitian transpose and regular transpose, respectively. The element-wise inequality is denoted as $\succeq $. A unit vector of length $K$ is denoted as $\mathbf{1}_K$. The trace of a matrix is denoted as $\mathrm{tr}(\cdot)$. The complex, real, non-negative real,  extended non-negative real field is $\mathbb{C}$, $\mathbb{R}$, $\mathbb{R}_{+}$, and $\mathbb{R}_{++} = \mathbb{R}_{+} \cup \varnothing$, respectively. To the end, the imaginary unit of a complex number is $j$ with $\sqrt{j}=-1$.
	
	The rest of this paper is organized as follows: Section~\ref{sec:sys_model} presents in detail the satellite system model and formulates a category of multi-objective optimization problems jointly optimizing the sum rate and individual QoS per user. Section~\ref{Sec:ModelDL} describes the model-based and data-driven approaches to solve the above optimization problem in polynomial time.  The practical applications of our framework are demonstrated by the state-of-the-art practical communication satellite systems with a linear precoding technique and the water-filling method. 
	Section~\ref{Sec:Results} gives extensive numerical results
	, while the main conclusions are finally drawn in Section~\ref{sec:conclusion}.
	
	\begin{figure*}[t]
		\begin{minipage}{0.5\textwidth}
			\centering
			\includegraphics[width=1\textwidth]{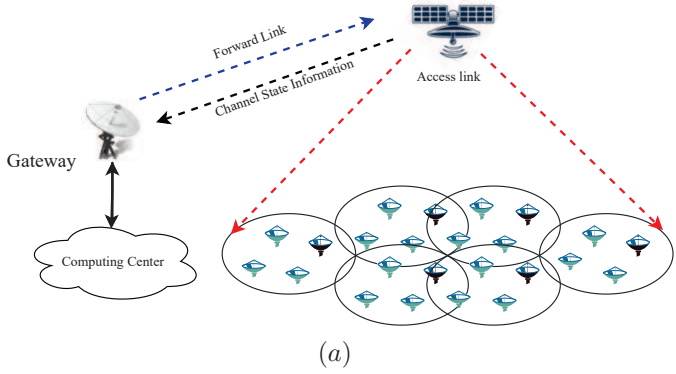}
			\\
			$(a)$
		\end{minipage}
		\hfill
		\begin{minipage}{0.45\textwidth}
			\centering
			\includegraphics[width=0.9\textwidth]{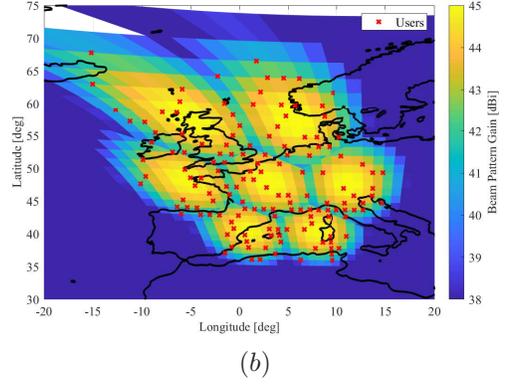}
			\\
			$(b)$
		\end{minipage}
		\caption{The precoded multi-beam multi-user satellite system model: $(a)$ Schematic diagram of our considered system model with one single scheduled  user per beam; and $(b)$ The considered overlapping beam pattern.} 
		\label{fig:systemmodel}
	\end{figure*}
	\section{System Model and Problem Statement}\label{sec:sys_model}
	In this section, we first introduce the MB-HTS system architecture, where the full available bandwidth is simultaneously used by all beams and, within each beam, the multiple users are multiplexed in a Time Division Multiple (TDM) manner in the forward link on a DVB-S2X carrier from the Gateway to the user beams. Meanwhile, Time Division Multiple Access (TDMA) is used on the return link.
	Next, motivated by the shortcomings of previous works in handling the demand-based constraints, a new multi-objective optimization framework is proposed.
	\subsection{System Model \& Channel Capacity}
	
	\textcolor{black}{We consider the forward link of a broadband MB-HTS system that aggressively reuses the user link frequency to simultaneously serve multiple users sharing the same time and frequency plane as schematically shown in Fig.~\ref{fig:systemmodel}(a), with the overlapping beam pattern depicted in Fig.~\ref{fig:systemmodel}(b).}\footnote{\textcolor{black}{The capacity of multi-beam GEO systems allow multiple users to simultaneously access the network. The considered multiple-access scenarios bring superior improvements of the sum rate by serving more users and exploiting a proper precoding technique to mitigate mutual interference. However, the congestion will be problematic if, for example, each user is associated with its individual QoS demand and a limited power budget at the satellite. The present paper will address this raising issue by using both the model-based and data-driven approaches.}} Assuming $N$ overlapping beams, a maximum of $N$ users in the coverage area can be scheduled and served in each scheduling instance by the satellite. We assume that the actual scheduled  users per scheduling instance is $K$, as illustrated by the black-colored users in Fig.~\ref{fig:systemmodel}(a). In this paper, the system operates in a unicast mode, i.e., $K \leq N$.	We denote $\UEk$ the scheduled user~$k$ with  $k \in \Kcal \triangleq \{1, 2, \dots, K\}$  and $|\Kcal| = K$. {\color{black} Let us define $\bh_k\in\mathbb{C}^{N}$ the channel vector between the satellite and $\mathtt{UE}_k$, then the channel matrix ${\mathbf{H}}$ is defined as ${\mathbf{H}} = [{\bh}_1,{\bh}_2,\dots,{\bh}_K] \in \mathbb{C}^{N\times K}$. In particular, the channel is modeled in LOS link \cite{channel_model, 8746876}, and collects the channel state information (CSI) and phase rotations from the over-air propagation in the forward link, which is split into the two components as $	{\mathbf{H}} = \bar{\mathbf{H}}\mathbf{\Phi}$, where $\bar{\mathbf{H}}\in\mathbb{R}_+^{N\times K}$ indicates the practical features involving the satellite antenna radiation pattern, thermal noise, received antenna gain, and path loss. }
	The $(n,k)$-th element of $\bar{\bH}$ is concretely computed as $[\bar{\bH} ]_{nk} = ({\lambda\sqrt{G_R G_{nk}}})/({4\pi d_k\sqrt{K_BTB}})$, 
	where $\lambda$ is the wavelength of a plane wave; $d_k$ is the distance from $\UEk$ to the satellite; $G_R$ and $G_{nk}$ are the receiver antenna gain and the gain from the $n$-th satellite feed towards $\UEk$, $\forall n = 1, \ldots N$; $K_B$ is the Boltzmann constant; $T$ is the receiver noise temperature. The diagonal matrix $\mathbf{\Phi}\in\mathbb{C}^{K\times K}$ indicates the signal phase rotations owing to  different propagation paths, whose the $(k,l)$-th component is given as $[\mathbf{\Phi}]_{kl}= e^{j\phi_k}$ if $k = \ell$,
	where  $\phi_k$ is a residual random phase component introduced by the satellite payload \cite{ESA}. Otherwise, $[\mathbf{\Phi}]_{kl}= 0$. 
	
	Let us define $s_k$ the data symbol that the system transmits to $\UEk$ with $\mathbb{E} \{ |s_k|^2 \} = 1$ and its allocated transmit power $p_k \in \mathbb{R}_{+}$. A predetermined precoding technique is implemented at the gateway to eliminate mutual interference among users and boost the system performance. Denoting $\mathbf{w}_k \in \mathbb{C}^N$ as the normalized precoding vector for $\UEk$ with $\| \bw_k\| = 1$, then the transmitted signal to all the $K$ scheduled users, denoted by $\mathbf{x} \in \mathbb{C}^N$, is $\mathbf{x} = \sum\nolimits_{k\in\Kcal} \sqrt{p_k} \mathbf{w}_k s_k$.
	For practical satellite systems, the following system transmit power constraint must be satisfied:
	\begin{align}
			&\mathbb{E} \{ \| \mathbf{x} \|^2 \}  \leq P_{\max} \non \\
			\Rightarrow  &\sum\nolimits_{k\in\Kcal} p_k \| \mathbf{w}_k \|^2 \mathbb{E} \{ |s_k|^2 \}  \stackrel{(a)}{=} \sum\nolimits_{k\in\Kcal} p_k \leq P_{\max},
	\end{align}
	where $(a)$ is obtained assuming that the data symbols are  mutually independent and the precoding vectors are normalized. Moreover, $P_{\max}$ is the maximum power that the satellite can allocate to the data transmission. By exploiting the transmitted signal notation $\mathbf{x}$, the received signal at $\UEk$, denoted by $y_k \in \mathbb{C}$, is a projection of the transmitted signal onto its propagation channel as
	\begin{align} \label{eq:y}
		y_k &= \mathbf{h}_k^H \mathbf{x} + n_k, \\
		&=  \sqrt{p_k} \bh_k^H\bw_k s_k + \sum\nolimits_{\ell\in\Kcal\backslash \{k\}} \sqrt{p_\ell} \bh_k^H\bw_\ell s_\ell + n_k, \forall k\in\Kcal, \non
	\end{align}
	where $n_k$ denotes the additive noise at the receiver with $n_k \sim \mathcal{CN}(0, \sigma^2)$. In the last equality of \eqref{eq:y},  the first part contains the desired signal for $\UEk$, while the remaining parts are mutual interference and noise. Assuming the availability of perfect channel state information (CSI) available at the gateway side,\footnote{\textcolor{black}{This paper  assumes perfect CSI with the purpose of validating our robust congestion control as an initial framework focused on static users. The impact of imperfect CSI besides channel aging problems and many issues are left for future work.}} the channel capacity of $\UEk$ is computed as follows
	\begin{equation}\label{eq:rate}
		R_k ( \{ p_{k'} \} ) = B \log_2 \left(1+\gamma_k( \{ p_{k'} \}) \right),\mbox{ [Mbps]},\mbox{ } \forall k\in\Kcal,
	\end{equation}
	where $\{ p_{k'} \} = \{ p_1, \ldots, p_K \}$ is the set of all the transmit power coefficients, and $B$~[MHz] is the overall bandwidth used for the user link. The signal-to-interference-and-noise ratio (SINR), 
	$\gamma_k(\{ {p}_{k'} \})$, is 
	\begin{equation} \label{eq:SINR}
		\gamma_k( \{ p_{k'} \}) = \frac{p_k|\bh_k^H\bw_k|^2}{\sum_{\ell\in\Kcal\backslash \{k\}}p_\ell |\bh_k^H\bw_\ell|^2 +\sigma^2}, \  \forall k\in\Kcal.
	\end{equation}
	We emphasize that  the SINR expression~\eqref{eq:SINR} can be applied to an arbitrary channel model and precoding technique. In this paper, we exploit \eqref{eq:SINR} to formulate and solve the demand-based optimization problems with the practical constraints that arise in the future satellite communications.

	\subsection{Single-Objective Optimization With QoS Constraints}
	
	For MB-HTS systems, conventional power allocation problems focus on maximizing a utility function while maintaining the QoS requirements of the scheduled users under a limited power budget. By taking the sum-rate as an objective function example, a popular optimization formulation \cite{Bjon13:tcit, Liu13:TSP, Chiang08} is
	\begin{subequations} \label{RelatedWork}
		\begin{alignat}{2}
			&\underset{\{ p_{k'} \in \mathbb{R}_{+} \}}\maxi &\quad & f_0(\{ p_{k'} \}) \triangleq  \sum\nolimits_{k\in\Kcal}R_k( \{ p_{k'} \} ) \label{RelatedWorka}\\
			&\mbox{subject to} && R_k(\{ p_{k'} \})  \geq \xi_k, \mbox{ } \forall k\in\Kcal, \label{RelatedWorkb}\\
			&&& \sum\nolimits_{k\in\Kcal} p_{k} \leq P_{\max}, \label{RelatedWorkc}
		\end{alignat}
	\end{subequations}
	where $\xi_k$ [Mbps] corresponds the QoS requested by $\UEk$. In \eqref{RelatedWork}, the objective function $f_0(\{ p_{k'} \})$ can be an arbitrary utility function in satellite communications \cite{gao2021sum,lei2011secure}. Even though all the constraints are affine, solving problem~\eqref{RelatedWork} is still challenging when the objective function is non-convex. However, the feasible domain  is a convex set, thus if  $f_0 \left(\{ p_{k'}\} \right)$ is continuous and bounded from below, the global optimum to problem~\eqref{RelatedWork} always exists by means of the Weierstrass' theorem \cite{van2018joint}.

	\textcolor{black}{Problem~\eqref{RelatedWork} optimizes the transmit powers to simultaneously satisfy the QoS requirements of all the $K$ scheduled  users conditioned on the power limitation. Indeed, if the system is able to provide the QoS requirements simultaneously to all the users, problem~\eqref{RelatedWork} has a non-empty feasible set and it can be solved to obtain the global optimal solution. However, for many unfortunate users' locations and channel conditions, as well as for systems with strict power limitations, the system cannot provide the QoS requirements to every scheduled user that results in the congestion issue, where at least one user is served less data throughput than requested. This is because, in many user locations, one or more scheduled  users are located in places where the propagation channels are inferior with the dramatically small channel gains. Furthermore, the interference-limited scenario considered herein may further enlarge the infeasible cases. The congestion issue makes it challenging for the satellite to meet the requested demands simultaneously. In other words, this leads problem~\eqref{RelatedWork}  to be infeasible with high probability due to an empty feasible domain, i.e. problem~\eqref{RelatedWork} has no solution.}
	\begin{figure*}[t]
		\begin{minipage}{0.325\textwidth}
			\includegraphics[width=0.9\textwidth]{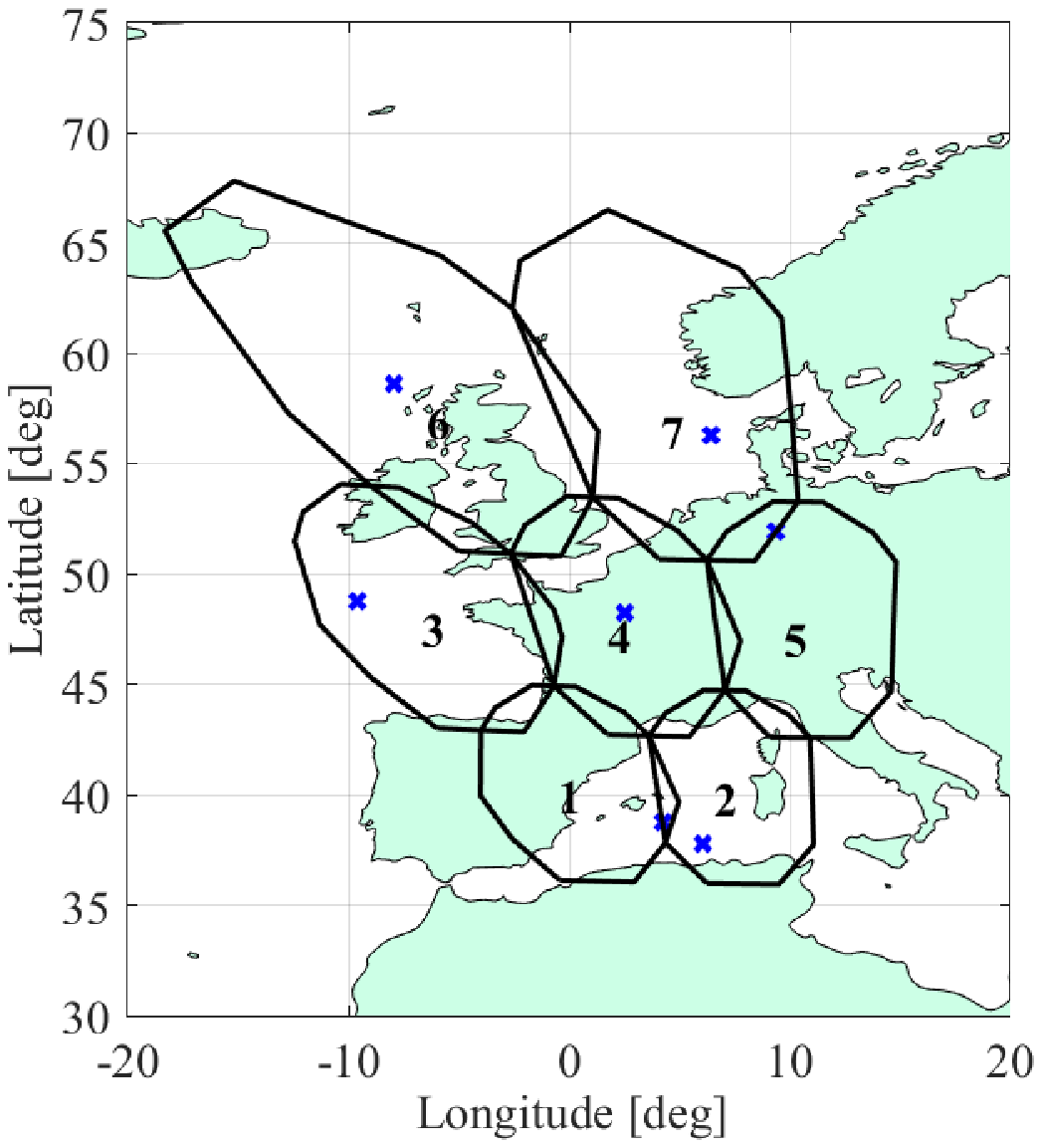} \\
			\centering $(a)$
		\end{minipage}
		\begin{minipage}{0.325\textwidth}
			\includegraphics[width=0.9\textwidth]{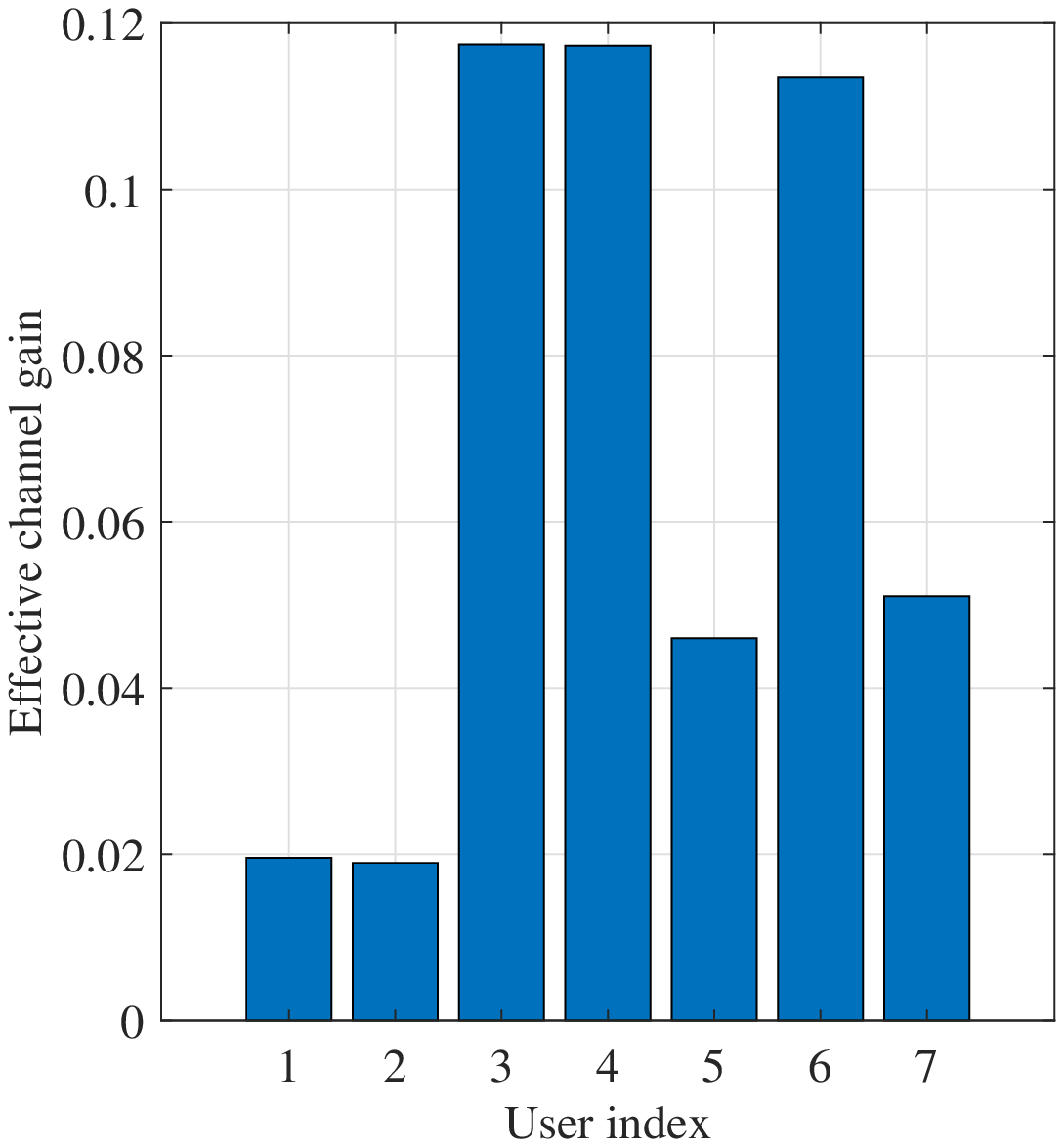}\\
			\centering $(b)$
		\end{minipage}
		\begin{minipage}{0.325\textwidth}
			\includegraphics[width=0.9\textwidth]{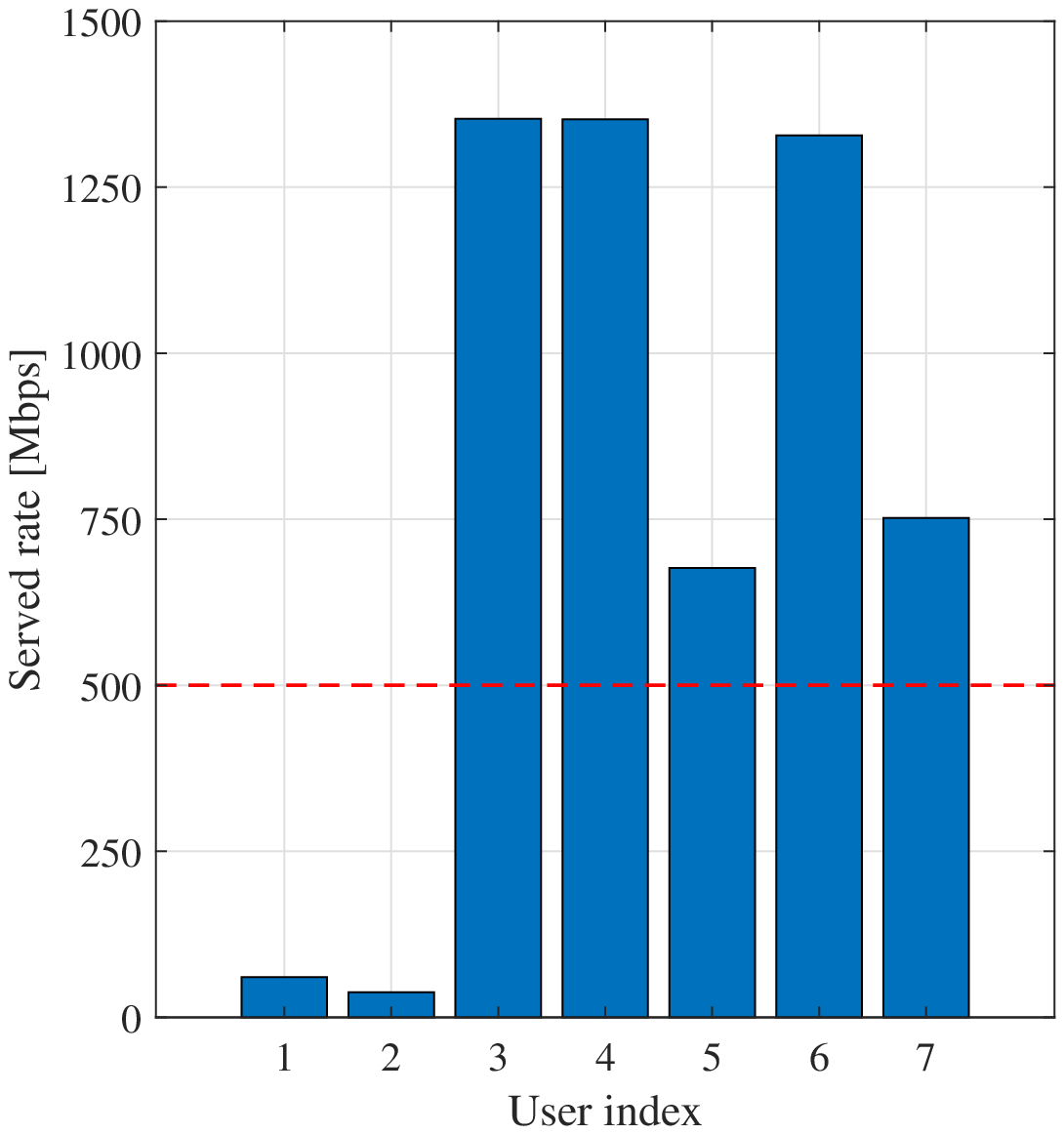} \\
			\centering $(c)$
		\end{minipage}
		\caption{A scheduling instance where each beam serves one scheduled  user: $(a)$  the user locations; $(b)$  the effective channel gains, i.e., defined as $|\mathbf{h}_k^H \mathbf{w}_k|^2, \forall k \in \Kcal$; and $(c)$  the served rate [Mbps] by utilizing the ZF precoding technique}
		\label{Fig:ChannelvsRate}
	\end{figure*}
	For tractability, we can formulate an optimization without the demand-based constraints as follows
	\begin{subequations} \label{RelatedWorkSumRate}
		\begin{alignat}{2}
			&\underset{\{ p_{k'} \in \mathbb{R}_{+} \}}\maxi &\quad & \sum\nolimits_{k\in\Kcal}R_k( \{ p_{k'} \} )\\
			&\mbox{subject to} && \sum\nolimits_{k\in\Kcal} p_{k} \leq P_{\max}, 
		\end{alignat}
	\end{subequations}
	which was considered in \cite{aravanis2015power} and references therein.
	Fig.~\ref{Fig:ChannelvsRate}(a) shows an example of $N=7$ beams with $K=7$ scheduled users.  Fig.~\ref{Fig:ChannelvsRate}(c) plots the achievable rates for each of the scheduled users by considering problem~\eqref{RelatedWorkSumRate} as a consequence of their effective channel gains, which are depicted in Fig.~\ref{Fig:ChannelvsRate}(b) for completeness. The detailed parameter settings are given in Section~\ref{Sec:Results}. For this particular realization of user locations, there are two users with unfortunate effective channel conditions, which combined with the limited power budget will make it challenging for the satellite to ensure the users to be simultaneously serve with the same individual QoS requirement (say $500$ [Mbps]). However, the remaining scheduled users would still get their requested QoS or even better data throughput if some demand-based constraints would have been relaxed (such that the ones of user~$1$ and $2$).  It is because those users are located at the extreme locations as the boundary of the beams. Not shown here, but the harsh situation also comes from the fact that the QoS requirements are too high and the system cannot meet their services even consuming the entire power budget. 
	Motivated by the results in Fig.~\ref{Fig:ChannelvsRate}, a practical solution for power allocation is developed in this paper where QoS requirement satisfaction for the majority of the users is sought. For those users who cannot satisfy the QoS constraints, it may be sufficient to relax their QoS constraints or skip them for these particular scheduling instances. For such, we propose to convert \eqref{RelatedWork} from an infeasible problem to a feasible one. However, the identification of the users who are not able to reach their QoS requirements is not trivial. This paper investigates a class of  power allocation problems whose objective function includes both the sum-rate and the total number of satisfied users, which can effectively cope with such infeasible instances due to the network dimension whenever the congestion issue appears.\footnote{\textcolor{black}{The congestion is a complex issue in satellite communications. One potential solution for this issue is based on  the user scheduling over the time and frequency plane. However, for a given set of scheduled users, the congestion may still appear when allocating the limited power budget to maximize the total sum rate of the entire network  and satisfy the individual QoS demands.  Since user scheduling may help mitigating partially the congestion, the combination of the proposed
			power and congestion control approach with more advanced user scheduling is left for future
			work.}}
	
	\subsection{Proposed Multi-Objective Optimization} \label{sec:MultiObject}
	To deal with congestion scenarios, we propose to split the $K$ scheduled users into two sets: $\Qcal$ with  $\Qcal \subseteq \Kcal$ being the satisfied-user set that contains users served by the system with data throughput equal or greater than their QoS requirements. The remaining users belong to the unsatisfied-user set $\Kcal \setminus \Qcal$. Our goal is to maximize the cardinality of the satisfied-user set $\Qcal$ and also to seek for the maximal value of the sum-rate metric $\sum_{k\in\Kcal}R_k( \{ p_{k'} \} )$. The ultimate goal is introduced as 
	\begin{equation} \label{eq:ObjFunc2}
		\mathbf{g}\left(\{ p_{k'}\}, \Qcal \right) = \left[ \sum\nolimits_{k\in\Kcal}R_k( \{ p_{k'} \} ),  |\Qcal| \right]^T,
	\end{equation}
	which should be categorized as a multi-objective function, where the two performance metrics are optimized in a single framework. Motivated by the use of \eqref{eq:ObjFunc2}, we study a joint design of the power allocation and the satisfied-user selection to optimize the multi-objective function $\mathbf{g}\left(\{ p_{k'}\}, \Qcal \right)$ 
	\begin{subequations} \label{probGlobal}
		\begin{alignat}{2}
			&\underset{ \{ p_{k'} \in \mathbb{R}_{+} \}, \Qcal}{\mathrm{maximize}}&\quad & \mathbf{g}\left(\{ p_{k'}\}, \Qcal \right) \label{probGlobala}\\
			&\mbox{subject to} && R_k( \{ p_{k'} \}) \geq \xi_k,  \forall k\in\Qcal, \label{probGlobalb}\\
			&&& \sum\nolimits_{k \in\Kcal} p_k \leq P_{\max},\label{probGlobalc}\\
			&&& \Qcal  \subseteq \Kcal. \label{probGlobald}
		\end{alignat}
	\end{subequations}
	\textcolor{black}{The key distinction from previous works in the literature is that problem~\eqref{probGlobal} is always feasible since the satisfied-user set $\Qcal$ can span from an empty set, i.e., no user satisfies its QoS demand; to the scheduled-user set $\mathcal{K}$, i.e., all the $K$ scheduled users satisfy their QoS requirement. The proposed formulation is very convenient in practice as problem \eqref{probGlobal} can provide a power allocation solution in any channel conditions whilst still ensuring the system's performance in some extended aspect. Expressly, the objective function \eqref{probGlobala} indicates that we find an optimal set of the transmit power coefficients that simultaneously maximizes the utility function $f_0(\{ p_{k'}\})$ and the satisfied-user set $\Qcal$.} We stress that thanks to the constraint \eqref{probGlobalb},  problem~\eqref{probGlobal} only guarantees the individual QoS requirements of the satisfied-user set $\Qcal$. Different from a single objective function in \eqref{RelatedWork}, the decision space of problem~\eqref{probGlobal} is defined by
	\begin{equation}
		\begin{split}
			\mathcal{D} = \Big\{ \{ p_{k'}\}, \Qcal  \big|  R_k( \{ p_{k'} \}) &\geq \xi_k, \forall k\in\Qcal, \\
			P_{\max} &\geq \sum\nolimits_{k \in\Kcal} p_k , \Qcal \subseteq \Kcal \Big\},
		\end{split}
	\end{equation}
	which is a non-convex set. The data of problem~\eqref{probGlobal} consists of the decision space $\mathcal{D}$, the objective function vector $\mathbf{g}\left(\{ p_{k'}\}, \Qcal \right)$, together with the objective space $\mathbb{R}_{++}^2$. In principle, $\mathbf{g}\left(\{ p_{k'}\}, \Qcal \right)$  is mapped from the objective space to an ordered space, say $(\mathbb{R}_{++}^2, \geq, \subseteq)$, in which the feasibility is testified along with iterations by the order relations $\geq$ and $\subseteq$. This mapping is referred to as the $\theta$ model that depicts a relation between the objective space and the order space, where the maximization in \eqref{probGlobal} is determined. Alternatively speaking, problem~\eqref{probGlobal} should be completely defined by the data $(\mathcal{D}, \mathbf{g}(\{ p_{k'} \}, \Qcal), R_{++}^2)$, the model map $\theta$, and the order space $\mathbb{R}_{++}^2$. We now characterize  an $\pmb{\epsilon}$-\textit{Pareto optimal} solution $\{ \{ p_{k'}^\ast \}, \Qcal^{\ast} \} \in \mathcal{D}$  to problem~\eqref{probGlobal}, if there exists no $\{ \{ p_{k'} \}, \Qcal \} \in \mathcal{D}$ such that
	\begin{equation} \label{eq:ParetoOptimal}
		\mathbf{g}\left(\{ p_{k'} \}, \Qcal \right) + \pmb{\epsilon} \succeq  \mathbf{g}\left(\{ p_{k'}^\ast \}, \Qcal^\ast \right)  , 
	\end{equation}
	where $\pmb{\epsilon} = [\epsilon_1, \epsilon_2]^T$ with $\epsilon_1, \epsilon_2 \in \mathbb{R}_+$ are the tolerance corresponding to the two objective functions. The property \eqref{eq:ParetoOptimal} implies no other solutions $\{ \{ p_{k'} \}, \Qcal \} \in \mathcal{D}$ fulfilled the coexisted conditions: $f_0\left(\{ p_{k'} \}, \Qcal \right) + \epsilon_1 \geq f_0 \left(\{ p_{k'}^\ast \}, \Qcal^\ast \right)$, and $|\Qcal| + \epsilon_2 \geq |\Qcal^\ast|$, 
	which unveils a balance between the two objective functions at the optimum. We observe that if $\epsilon_1 = \epsilon_2 = 0$, the above definition reduces to an $\pmb{\epsilon}$-Pareto optimal solution
	, which can be only improved by upgrading one objective function and scarifying the other. Thus, an $\pmb{\epsilon}$-\textit{properly Pareto optimal} solution is introduced as an $\pmb{\epsilon}$-Pareto optimal solution with a bound trade-off between the two objectives defined in \eqref{eq:ObjFunc2}. An $\pmb{\epsilon}$-\textit{Pareto dominant vector} is derived as the objective function vector $\mathbf{g}(\{ p_{k'} \}, \Qcal)$ at the corresponding $\pmb{\epsilon}$-properly Pareto optimal solution. We notice that the $\pmb{\epsilon}$-\textit{Pareto frontier} collects all the properly $\pmb{\epsilon}$-Pareto optimal vectors.
	\begin{remark}
		Problem~\eqref{probGlobal} jointly optimizes the sum rate and the total number of satisfied users subject to the limited transmit power constraint under the viewpoints of multi-objective optimization. The proposed problem \eqref{probGlobal} is a generalized version of previous works on a single-objective function with/without demand-based constraints as \cite{gao2021sum,aravanis2015power} and references therein. Problem~\eqref{probGlobal} can effectively handle the congestion issue appearing when some users do not meet their QoS requirements. This practical matter in multiple access communications originates from the limited power budget, the channel conditions, and the individual QoS requirements. An extension to a multiple-objective optimization framework with more than two objective functions or with different metrics should be interesting for a future work.
	\end{remark}
	{ By exploiting either  the scalarization  or  nonscalarization approach to handle the multiple objective functions}, we may attain an $\pmb{\epsilon}$-properly Pareto optimal solution to  problem~\eqref{probGlobal}, { following by the $\pmb{\epsilon}$-Pareto frontier}. If the nonscalarization approach is employed, there is no prior information about the objective functions available in advance. For this direction, natural inspired algorithms that simultaneously optimize all the objective functions are often exploited to attain the $\pmb{\epsilon}$-Pareto frontier \cite{le2020pareto}. The nonscalarization approach requires significantly high computational complexity since the Pareto frontier is obtained by directly solving the multiple-objective optimization problem.  Once the scalarization approach is utilized by exploiting the preferential information from the decision maker about the objective functions, we can transfer the multi-objective optimization problem \eqref{probGlobal} to a single-objective optimization problem. The scalarization approach obtains the $\pmb{\epsilon}$-Pareto frontier by iteratively solving some single objective optimizations, each concentrating on a given set of priorities between the objective functions. Consequently, the scalarization approach usually offers the solution to  problem~\eqref{probGlobal} with lower computational complexity than the nonscalarization approach \cite{ehrgott2005multicriteria}.

	\section{Model-based and Data-driven Approaches} \label{Sec:ModelDL}
	This section presents the model-based approach to obtain an $\pmb{\epsilon}$-properly Pareto optimal solution to problem~\eqref{probGlobal} in polynomial time by exploiting the scalarization approach. The obtained solution is then utilized in Section \ref{Sec:DataDriven} for training a neural network that can predict a solution to  problem~\eqref{probGlobal} with extremely low computational complexity and tolerable accuracy.
	\subsection{Model-based Approach}
	In this section, we proceed with  problem~\eqref{probGlobal} by exploiting the weighted sum method \cite{ehrgott2005multicriteria}. Specifically, we define the weights $\mu_1 \geq 0$ and $\mu_2 \geq 0$ with $\mu_1 + \mu_2 = 1$ that respectively stand for the priority of the two objective functions in $\mathbf{g}(\{ p_{k'} \}, \Qcal)$. If $\{ \{ p_{k'}^{\ast} \}, \Qcal^\ast \}$ is an optimal solution to the single-objective optimization problem:
	\begin{subequations} \label{probGlobalv1}
		\begin{alignat}{2}
			&\underset{\{ p_{k'} \in \mathbb{R}_{+} \}, \Qcal}{\mathrm{maximize}} &\quad& \mu_1 \sum\nolimits_{k\in\Kcal}R_k( \{ p_{k'} \} ) +  \mu_2 |\Qcal| \\
			&\mbox{subject to} &\quad& R_k(\{ p_{k'} \} ) \geq \xi_k, \forall k\in\Qcal, \\
			&&& \sum\nolimits_{k\in\Kcal}p_k \leq P_{\max}, \label{eq:PowerConst} \\
			&&& \Qcal  \subseteq \Kcal,
		\end{alignat}
	\end{subequations}
	with an $\pmb{\epsilon}$-accuracy, then $\{ \{ p_{k'}^{\ast} \}, \Qcal^\ast \}$  is an $\pmb{\epsilon}$-properly  Pareto  optimal  solution  to  problem~\eqref{probGlobal}. We emphasize that in \eqref{probGlobalv1}, the weights $\mu_1$ and $\mu_2$ are flexibly designed by the decision maker. By adjusting these two values, an $\pmb{\epsilon}$-Pareto frontier to  problem~\eqref{probGlobal} is obtained. After that, the most desirable solution to the decision maker is chosen from the $\pmb{\epsilon}$-Pareto frontier.  Even though \eqref{probGlobalv1} is a single-objective problem, it is still non-convex due to a hybrid between the continuous and discrete feasible domains of the optimization variables. We, therefore, make an assumption follow the trends in QoS satisfaction in future satellite communications \cite{TedrosTWC}.
	\begin{assumption}\label{AssumpPrior}
		In order to offer the QoS requirements for a maximum number of users in the coverage area with a finite transmit power level, we focus on a scenario that the decision maker selects $\mu_1$ and $\mu_2$ to obtain the largest cardinality of the satisfied-user set $\Qcal$ before paying attention to maximize the sum-rate for a given assigned bandwidth. The channel conditions may lead to some scheduled  users not reaching their QoS requirements. One can improve the QoSs for those unsatisfied users by subtracting the power leftover, which is allocated to the satisfied users with a higher served rate than requested.
	\end{assumption}
	A priority on the QoSs of the scheduled users has been claimed by Assumption~\ref{AssumpPrior} and is effectively achieved by the satisfied-user set $\Qcal$. The limited power budget is therefore utilized in a strategy to maximize the rate demands for all the scheduled  users in the network instead of focusing on an individual entity. The remaining power, if possible, will be dedicated to maximizing the sum rate. { Motivated by the Perron-Frobenius theorem \cite{pillai2005perron,bambos2000channel},}  we observe the conditions required to all the scheduled  users with their rate satisfactions as shown in Theorem~\ref{Cardinality}.
	\begin{theorem} \label{Cardinality}
		If $\UEk$ requests a non-zero QoS, i.e., $\xi_k > 0,$ then all the $K$ scheduled  users can be served with at least their individual QoS requirements as the following conditions hold
		\begin{align}
			&\lambda(\mathbf{R} \mathbf{Q}) < 1, \label{eq:rho}\\
			& \mathbf{1}_K^T (\mathbf{I}_K - \mathbf{R}\mathbf{Q})^{-1} \pmb{\nu} \leq P_{\max}, \label{eq:PowerConstr}
		\end{align}
		where $\pmb{\nu} = [\nu_1, \ldots, \nu_K]^T \in \mathbb{R}_{+}^K$ with $\nu_k = \alpha_k \sigma^2 / ( (\alpha_k +1 ) |\mathbf{h}_k^2 \mathbf{w}_k|^2 )$ and $\alpha_k = 2^{\xi_k/B} -1, \forall k \in \Kcal$. The matrix $\mathbf{R} \in  \mathbb{R}^{K \times K}$ has the $(k,k')-$th element defined as $[\mathbf{R}]_{kk'} = \frac{\alpha_k}{(\alpha_k +1)|\mathbf{h}_k^H \mathbf{w}_k|^2}$ if $k=k'$. Otherwise, $[\mathbf{R}]_{kk'} = 0$. 
		The $(k,k')$-th element of matrix $ \mathbf{Q} \in  \mathbb{R}^{K \times K}$ is 
		$[\mathbf{Q}]_{kk'} = | \mathbf{h}_k^H \mathbf{w}_{k'} |^2$.
		In \eqref{eq:rho}, $\lambda(\mathbf{R} \mathbf{Q}) = \max \{ |\lambda_1|, \ldots, |\lambda_K| \}$ is the spectral radius of $\mathbf{R} \mathbf{Q}$, whose eigenvalues are denoted as $\lambda_1, \ldots, \lambda_K$.
	\end{theorem}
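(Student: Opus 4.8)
The plan is to translate the QoS constraints $R_k(\{p_{k'}\}) \geq \xi_k$ for all $k \in \Kcal$ into a system of linear inequalities on the power vector $\mathbf{p} = [p_1,\dots,p_K]^T$ and then invoke Perron--Frobenius theory. First I would note that $R_k \geq \xi_k$ is equivalent to $\gamma_k(\{p_{k'}\}) \geq \alpha_k$ with $\alpha_k = 2^{\xi_k/B} - 1$; substituting the SINR expression \eqref{eq:SINR} and clearing the positive denominator gives
\begin{equation}
p_k |\bh_k^H \bw_k|^2 \geq \alpha_k \Big( \sum\nolimits_{\ell \in \Kcal \setminus \{k\}} p_\ell |\bh_k^H \bw_\ell|^2 + \sigma^2 \Big), \quad \forall k \in \Kcal. \non
\end{equation}
Rearranging and dividing through by $(\alpha_k+1)|\bh_k^H\bw_k|^2$, and using the diagonal matrix $\mathbf{R}$ together with $\mathbf{Q}$ (so that $[\mathbf{R}\mathbf{Q}]_{kk'} = \frac{\alpha_k |\bh_k^H\bw_{k'}|^2}{(\alpha_k+1)|\bh_k^H\bw_k|^2}$, with the diagonal term reducing to $\frac{\alpha_k}{\alpha_k+1}$), the whole system can be written compactly as $\mathbf{p} \succeq \mathbf{R}\mathbf{Q}\,\mathbf{p} + \pnu$, i.e. $(\mathbf{I}_K - \mathbf{R}\mathbf{Q})\mathbf{p} \succeq \pnu$. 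The bookkeeping to get exactly the stated $\mathbf{R}$, $\mathbf{Q}$, $\pnu$ is the one routine step; I would present the algebra for a single $k$ and then stack.

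Next I would bring in the Perron--Frobenius machinery. The matrix $\mathbf{R}\mathbf{Q}$ is entrywise non-negative (all entries are squared magnitudes times positive scalars, so $\mathbf{R}\mathbf{Q} \succeq \mathbf{0}$), and $\pnu \succeq \mathbf{0}$ with $\pnu \neq \mathbf{0}$ whenever all $\xi_k > 0$. The standard result (as used in \cite{bambos2000channel}) is that the linear feasibility system $(\mathbf{I}_K - \mathbf{R}\mathbf{Q})\mathbf{p} \succeq \pnu$ admits a solution $\mathbf{p} \succeq \mathbf{0}$ if and only if the spectral radius satisfies $\lambda(\mathbf{R}\mathbf{Q}) < 1$; moreover, in that case $\mathbf{I}_K - \mathbf{R}\mathbf{Q}$ is invertible with $(\mathbf{I}_K - \mathbf{R}\mathbf{Q})^{-1} = \sum_{n=0}^{\infty}(\mathbf{R}\mathbf{Q})^n \succeq \mathbf{0}$, so the componentwise-smallest feasible power vector is precisely $\mathbf{p}^\star = (\mathbf{I}_K - \mathbf{R}\mathbf{Q})^{-1}\pnu$. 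This is condition \eqref{eq:rho}. Any other feasible $\mathbf{p}$ dominates $\mathbf{p}^\star$ entrywise, hence has total power at least $\mathbf{1}_K^T \mathbf{p}^\star$.

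Finally, I would match this against the power budget \eqref{RelatedWorkc}/\eqref{eq:PowerConst}. Since $\mathbf{p}^\star$ is the minimal-norm-in-each-coordinate feasible point, all $K$ users can be served at (or above) their QoS within the budget $P_{\max}$ if and only if $\mathbf{1}_K^T \mathbf{p}^\star = \mathbf{1}_K^T(\mathbf{I}_K - \mathbf{R}\mathbf{Q})^{-1}\pnu \leq P_{\max}$, which is \eqref{eq:PowerConstr}. Combining the two gives the claimed sufficient conditions. (For sufficiency alone one only needs: if \eqref{eq:rho} holds then $\mathbf{p}^\star$ is well-defined, entrywise non-negative, and QoS-feasible; if in addition \eqref{eq:PowerConstr} holds then $\mathbf{p}^\star$ also respects the power constraint, so $\Qcal = \Kcal$ is achievable.) The main obstacle is not any single computation but making the Perron--Frobenius step rigorous — specifically justifying that $\mathbf{R}\mathbf{Q}$ being non-negative with $\lambda(\mathbf{R}\mathbf{Q}) < 1$ guarantees both the invertibility of $\mathbf{I}_K - \mathbf{R}\mathbf{Q}$ and the non-negativity (hence feasibility and minimality) of $(\mathbf{I}_K - \mathbf{R}\mathbf{Q})^{-1}\pnu$; I would cite \cite{pillai2005perron,bambos2000channel} for the Neumann-series argument rather than reprove it, and take care that the argument does not require irreducibility of $\mathbf{R}\mathbf{Q}$ since $\mathbf{Q}$ need not be irreducible in general.
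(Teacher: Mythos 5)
Your proposal is correct and follows essentially the same route as the paper's Appendix A: converting the QoS constraints to SINR constraints, rearranging into the linear system $(\mathbf{I}_K - \mathbf{R}\mathbf{Q})\mathbf{p} = \pmb{\nu}$ (the paper reaches the equality form via a power-minimization reformulation, you via the minimality of the Perron--Frobenius fixed point over the inequality system — an equivalent step), invoking the Neumann-series/nonnegativity argument for condition \eqref{eq:rho}, and summing the minimal power vector against $\mathbf{1}_K$ for condition \eqref{eq:PowerConstr}. Your added care about not needing irreducibility and about why the equality solution is entrywise minimal is a slight refinement of, not a departure from, the paper's argument.
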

	\begin{proof}
	See  Appendix~\ref{Appendix:Cardinality}.
	\end{proof}
	Theorem~\ref{Cardinality} gives the necessary and sufficient conditions for the satellite to serve all the $K$ scheduled  users with the QoS requirements in an MB-HTS system, while still maximizing a utility function $f_0 (\{ p_{k'}\})$. {Unlike previous works, the conditions \eqref{eq:rho} and \eqref{eq:PowerConstr} explicitly represent the existed unique power solution for a precoded satellite system, which point out  the power allocation solution as a multi-variate function of many variables such as the propagation channels, the precoding vectors, the noise power, the QoS requirements, and the power budget. More precisely, the necessary condition in \eqref{eq:rho} ensures a unique power solution. The sufficient condition \eqref{eq:PowerConstr} ensures the satellite having enough power to provide the demand to each user. Though Theorem~\ref{Cardinality} assumes that $\Qcal = \Kcal$, it gives an efficient way to testify if all the $K$ scheduled  users can be served with their QoSs, and thus facilitates the reformulation of  problem~\eqref{probGlobalv1} in an efficient fashion by removing the optimization variable $\Qcal$. Conditioned on the  power budget of the satellite, the total transmit power needed to satisfy the QoS requirements can be bounded from below  as shown in Corollary~\ref{CorrNoSatisfied}.}
	\begin{corollary}\label{CorrNoSatisfied}
		For a given realization of users' locations and QoS requirements, the total transmit power is lower bounded by
		\begin{equation} \label{eq:LowerboundPower}
			\sum\nolimits_{k \in \Kcal} p_k \geq \mathbf{1}_K^T \pmb{\nu} / \| \mathbf{I}_K - \mathbf{R} \mathbf{Q} \|_2.
		\end{equation}
	\end{corollary}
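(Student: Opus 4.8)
The plan is to read the bound straight off the fixed-point characterization that already underlies Theorem~\ref{Cardinality}, so this corollary is essentially a one-line consequence plus a norm estimate. Collect the powers in $\mathbf{p}=[p_1,\dots,p_K]^T$. Exactly as in the proof of Theorem~\ref{Cardinality}, each demand $R_k(\{p_{k'}\})\ge\xi_k$, i.e. $\gamma_k(\{p_{k'}\})\ge\alpha_k$ with $\alpha_k=2^{\xi_k/B}-1$, rearranges (using $[\mathbf{Q}]_{k\ell}=|\mathbf{h}_k^H\mathbf{w}_\ell|^2$) into $p_k\ge[\mathbf{R}]_{kk}[\mathbf{Q}\mathbf{p}]_k+\nu_k$; stacking these over $k\in\Kcal$ gives the componentwise system $(\mathbf{I}_K-\mathbf{R}\mathbf{Q})\mathbf{p}\succeq\pmb{\nu}$. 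In the regime of Theorem~\ref{Cardinality}, where all $K$ users are to be served, condition~\eqref{eq:rho} makes $\mathbf{I}_K-\mathbf{R}\mathbf{Q}$ invertible with $(\mathbf{I}_K-\mathbf{R}\mathbf{Q})^{-1}=\sum_{n\ge0}(\mathbf{R}\mathbf{Q})^n$ entrywise nonnegative, so left-multiplying the inequality by this nonnegative matrix yields $\mathbf{p}\succeq\mathbf{p}^\star:=(\mathbf{I}_K-\mathbf{R}\mathbf{Q})^{-1}\pmb{\nu}\succeq\mathbf{0}$. Hence every admissible allocation obeys $\sum_{k\in\Kcal}p_k=\mathbf{1}_K^T\mathbf{p}\ge\mathbf{1}_K^T\mathbf{p}^\star=\mathbf{1}_K^T(\mathbf{I}_K-\mathbf{R}\mathbf{Q})^{-1}\pmb{\nu}$.

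It then remains to bound $\mathbf{1}_K^T(\mathbf{I}_K-\mathbf{R}\mathbf{Q})^{-1}\pmb{\nu}$ from below. Since $\mathbf{p}^\star\succeq\mathbf{0}$, this quantity equals $\|\mathbf{p}^\star\|_1\ge\|\mathbf{p}^\star\|$, and from $\pmb{\nu}=(\mathbf{I}_K-\mathbf{R}\mathbf{Q})\mathbf{p}^\star$ together with submultiplicativity one gets $\|\pmb{\nu}\|\le\|\mathbf{I}_K-\mathbf{R}\mathbf{Q}\|_2\,\|\mathbf{p}^\star\|$, hence $\|\mathbf{p}^\star\|\ge\|\pmb{\nu}\|/\|\mathbf{I}_K-\mathbf{R}\mathbf{Q}\|_2$; chaining these and identifying $\mathbf{1}_K^T\pmb{\nu}$ with the $\ell_1$-norm of the nonnegative vector $\pmb{\nu}$ produces the displayed inequality~\eqref{eq:LowerboundPower}. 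Substituting $\nu_k=\alpha_k\sigma^2/((\alpha_k+1)|\mathbf{h}_k^H\mathbf{w}_k|^2)$ and the definitions of $\mathbf{R},\mathbf{Q}$ from Theorem~\ref{Cardinality} is then purely mechanical.

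The one place that needs care — and the step I expect to be the real obstacle — is the matching of norms at the end: the Euclidean/spectral chain above literally delivers $\sum_{k\in\Kcal}p_k\ge\|\pmb{\nu}\|/\|\mathbf{I}_K-\mathbf{R}\mathbf{Q}\|_2$, whereas the corollary's numerator is $\mathbf{1}_K^T\pmb{\nu}=\|\pmb{\nu}\|_1\ge\|\pmb{\nu}\|$, a strictly stronger claim in general. Closing this gap relies on the M-matrix structure inherited from the Perron--Frobenius argument in Theorem~\ref{Cardinality}: because $\mathbf{p}^\star$ and $\pmb{\nu}=(\mathbf{I}_K-\mathbf{R}\mathbf{Q})\mathbf{p}^\star$ are nonnegative, the ratio $\mathbf{1}_K^T\pmb{\nu}/\mathbf{1}_K^T\mathbf{p}^\star$ is a convex combination of the (signed) column sums of $\mathbf{I}_K-\mathbf{R}\mathbf{Q}$ with nonnegative weights $p_j^\star$, which is dominated by $\|\mathbf{I}_K-\mathbf{R}\mathbf{Q}\|$; equivalently, the whole argument can be carried out with the induced $\ell_1$ operator norm, for which $\|(\mathbf{I}_K-\mathbf{R}\mathbf{Q})^{-1}\pmb{\nu}\|_1\ge\|\pmb{\nu}\|_1/\|\mathbf{I}_K-\mathbf{R}\mathbf{Q}\|_1$ holds verbatim and returns exactly the stated form. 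Everything else is bookkeeping; nothing about the power control itself is delicate here.
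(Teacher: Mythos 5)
Your setup reproduces the paper's starting point exactly: both you and the authors reduce the claim to bounding $\mathbf{1}_K^T(\mathbf{I}_K-\mathbf{R}\mathbf{Q})^{-1}\pmb{\nu}$ from below, using the fixed-point power solution $\mathbf{p}^\star=(\mathbf{I}_K-\mathbf{R}\mathbf{Q})^{-1}\pmb{\nu}$ from the proof of Theorem~\ref{Cardinality}. The divergence — and the gap — is in the final norm step. The paper closes it in one line by writing $\mathbf{1}_K^T(\mathbf{I}_K-\mathbf{R}\mathbf{Q})^{-1}\pmb{\nu}=\mathrm{tr}\big((\mathbf{I}_K-\mathbf{R}\mathbf{Q})^{-1}\pmb{\nu}\mathbf{1}_K^T\big)$ and invoking a trace inequality (Lemma B.8 of Bj\"ornson et al.) to get $\mathrm{tr}\big((\mathbf{I}_K-\mathbf{R}\mathbf{Q})^{-1}\pmb{\nu}\mathbf{1}_K^T\big)\ge\mathrm{tr}(\pmb{\nu}\mathbf{1}_K^T)/\|\mathbf{I}_K-\mathbf{R}\mathbf{Q}\|_2$, which delivers the $\ell_1$-numerator and spectral-norm denominator simultaneously. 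Your Euclidean chain, as you candidly admit, only yields $\|\pmb{\nu}\|_2/\|\mathbf{I}_K-\mathbf{R}\mathbf{Q}\|_2$, which is weaker than the claim since $\mathbf{1}_K^T\pmb{\nu}=\|\pmb{\nu}\|_1\ge\|\pmb{\nu}\|_2$.

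Neither of your two proposed repairs, as written, closes this. The second one is simply wrong as a route to \eqref{eq:LowerboundPower}: carrying the argument in the induced $\ell_1$ operator norm gives $\mathbf{1}_K^T\pmb{\nu}/\|\mathbf{I}_K-\mathbf{R}\mathbf{Q}\|_1$, and the induced $\ell_1$ norm (maximum absolute column sum) neither equals nor is generically dominated by the spectral norm $\|\cdot\|_2$ that the corollary states and that the paper explicitly interprets as such in the discussion following it; so this does not "return exactly the stated form." The first repair — that the convex combination $\sum_j c_j p_j^\star/\sum_j p_j^\star$ of signed column sums $c_j$ of $\mathbf{I}_K-\mathbf{R}\mathbf{Q}$ is dominated by $\|\mathbf{I}_K-\mathbf{R}\mathbf{Q}\|_2$ — is asserted but not proved, and it is false for a general matrix (a column sum can exceed the spectral norm by a factor up to $\sqrt{K}$). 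It does happen to be true here, but for a structural reason you never supply: since $\mathbf{R}\mathbf{Q}$ is entrywise nonnegative, $\mathbf{I}_K-\mathbf{R}\mathbf{Q}$ has unit diagonal and nonpositive off-diagonal entries, so every column sum satisfies $c_j\le 1$, while $\|\mathbf{I}_K-\mathbf{R}\mathbf{Q}\|_2\ge\max_j\|(\mathbf{I}_K-\mathbf{R}\mathbf{Q})\mathbf{e}_j\|_2\ge 1$; combining these with $\mathbf{p}^\star\succeq\mathbf{0}$ gives $\mathbf{1}_K^T\pmb{\nu}\le\mathbf{1}_K^T\mathbf{p}^\star\le\|\mathbf{I}_K-\mathbf{R}\mathbf{Q}\|_2\,\mathbf{1}_K^T\mathbf{p}^\star$ and hence the corollary. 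Adding that two-line observation would make your argument complete (and arguably cleaner than the paper's appeal to a PSD trace lemma, since $\mathbf{I}_K-\mathbf{R}\mathbf{Q}$ and $\pmb{\nu}\mathbf{1}_K^T$ are not symmetric); without it, the decisive step is missing.
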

	\begin{proof}
		From \eqref{eq:PowerSol} in Appendix~\ref{Cardinality}, the total transmit power that the $K$ scheduled  users need to satisfy the individual rate demand is reformulated as $\sum\nolimits_{k \in \Kcal} p_k \stackrel{(a)}{=} \mathrm{tr}( (\mathbf{I}_K - \mathbf{R}\mathbf{Q})^{-1} \pmb{\nu} \mathbf{1}_K^T ) \stackrel{(b)}{ \geq}  \mathrm{tr}( \pmb{\nu} \mathbf{1}_K^T) / \| \mathbf{I}_K - \mathbf{R} \mathbf{Q} \|_2 \stackrel{(c)}{=} \mathbf{1}_K^T \pmb{\nu} /\| \mathbf{I}_K - \mathbf{R} \mathbf{Q} \|_2$, 
		where $(a)$ and $(c)$ is obtained by utilizing the identity $\mathrm{tr}(\mathbf{X} \mathbf{Y}) = \mathrm{tr}(\mathbf{Y} \mathbf{X})$ with the two matched-size matrices $\mathbf{X}$ and $\mathbf{Y}$; $(b)$ is because $\mathbf{I}_K - \mathbf{R} \mathbf{Q}$ is a positive semidefinite matrix and then using \cite[Lemma B.8]{bjornson2017massive}. We  conclude the proof.
	\end{proof}
	The lower bound in \eqref{eq:LowerboundPower} is two-fold: First, the total transmit power is always positive if each scheduled  user requires a non-zero rate due to the mutual interference and the thermal noise. Second, it unveils the effectiveness of the precoding technique. A good selection should effectively mitigate the mutual interference among the scheduled users to attain the large spectral norm of matrix $\mathbf{I}_K - \mathbf{R} \mathbf{Q}$.

	
	Motivated by the aforementioned discussions, we next propose an algorithm to effectively address problem \eqref{probGlobal} and achieve a good local solution by solving the weighted sum optimization problem \eqref{probGlobalv1}. The satisfied-user set $\Qcal$ is initialized as an empty set due to no prior information. For given precoding vectors $\{\bw_{k'}\}$, the conditions \eqref{eq:rho} and \eqref{eq:PowerConstr} result in two possible cases:
	\begin{itemize}
		\item[$i)$] If those conditions hold, then all the $K$ scheduled users achieve (at least) their individual QoS requirements. Therefore, $R_k(\{p_{k'}\}) \geq  \xi_k, \forall k\in\Kcal$, and $\Qcal = \Kcal$. From Theorem~\ref{Cardinality} and Assumption~\ref{AssumpPrior}, problem~\eqref{probGlobalv1} is mathematically equivalent to \eqref{RelatedWork}. This case always offers a nonempty feasible set and corresponds to a system with no congestion.
		\item[$ii)$] As one of those conditions is not satisfied, at least one scheduled  user does not satisfy its QoS requirement (unsatisfied user), and therefore congestion appears. A special mechanism needs to handle this case if one considers the traditional sum-rate optimization \eqref{RelatedWork} due to an empty feasible set. However, it is not such the case for problem~\eqref{probGlobalv1}.
	\end{itemize}
	We stress that the first case maximizes the sum rate that satisfies the demand-based constraints of all the $K$ scheduled  users by a limited power budget. Since the feasible region must have an interior point, we can apply an interior-point method to obtain the solution to  problem~\eqref{RelatedWork}, e.g., \cite{qi2018precoding},  which may be implemented by a general-purpose toolbox such as CVX \cite{cvx2015}. However, it is a high computational complexity solution and does not work for the second case when at least one unsatisfied user gets a lower data throughput than the requirement. In this case, to solve problem~\eqref{probGlobalv1}, the priority is to maximize the number of satisfied users. 
	Mathematically, we optimize the cardinality of $\Qcal$ as follows
	\begin{subequations} \label{alg_SRM:phase21}
		\begin{alignat}{2}
			&\underset{\{ p_{k'} \in \mathbb{R}_{+} \}}{\mathrm{maximize}} &\quad&   |\Qcal|\\
			&\mbox{subject to} &\quad& R_k(\{ p_{k'} \} ) \geq \xi_k, \forall k\in\Qcal, \\
			&&& \sum\nolimits_{k'\in\Kcal}p_{k'} \leq {P}_{\max},\\
			&&& \Qcal \subseteq \Kcal.
		\end{alignat}
	\end{subequations}
	Since problem~\eqref{alg_SRM:phase21} is a non-convex and non-smooth problem, it is not trivial to obtain the global optimum of the transmit powers. We now propose an iterative low-cost solution to get rid of this issue with a good local solution for problem \eqref{alg_SRM:phase21}. As an effective way to initialize the satisfied-user set $\Qcal$, we solve the sum-rate maximization problem without the demand constraints in \eqref{RelatedWorkSumRate} to obtain an initial set of the power allocation coefficients $\{ p_{k'}^{\ast, (0)}\}$. Next, we use these power allocation coefficients to define the initial satisfied-user set $\Qcal^{\ast, (0)}$ as shown below,
	\begin{equation} \label{eq:Q0}
		\Qcal^{\ast, (0)} = \big\{ k \big| R_k \big( \{ p_{k'}^{\ast, (0)}\} \big) \geq \xi_k, k \in \Kcal \big\},
	\end{equation}
	where $ R_k \{ p_{k'}^{\ast, (0)}\}$ is given in \eqref{eq:rate} but with $p_{k'} = p_{k'}^{\ast, (0)}, \forall k$. We numerically observe that the scheduled users that typically satisfy its QoS requirements are those with good effective channel gains and/or those suffering less mutual interference. Those scheduled  users contribute significantly to the objective function of  problem \eqref{RelatedWorkSumRate}. 
	
	
	In the following, we exploit the fact that we can move a portion of the power that is assigned to scheduled users that are getting more than what they actually requested to improve the conditions of less fortunate users. In more details, we design an iterative approach that enables to expand the set $\Qcal$ after each iteration. The main idea is that the satisfied users in $\Qcal$ are only served by the exact QoS requirements, all the remaining power budget of the satellite should be allocated to the other scheduled users to enhance their data throughput such that there is an opportunity to join the satisfied-user set $\Qcal$. To find new users to be added to $\Qcal$ at iteration~$n$, we focus on the following optimization problem: 
	\begin{subequations} \label{alg_SRM:phase23}
		\begin{alignat}{2}
			&\underset{\{ p_{k'}^{(n)} \in \mathbb{R}_{+} \}}{\mathrm{maximize}} &\quad& \sum\nolimits_{k\in\Kcal}R_{k} \big(\{p_{k'}^{(n)}\} \big) \label{alg_SRM:phase23a}\\
			&\mbox{subject to} &\quad& R_k(\{ p_{k'}^{(n)} \} ) = \xi_k, \forall k \in\Qcal^{\ast, (n-1)}, \label{alg_SRM:phase23b}\\
			&&& \sum\nolimits_{k\in\Kcal}p_k^{(n)} \leq P_{\max}, \label{alg_SRM:phase23c}
		\end{alignat}
	\end{subequations}
	with the optimal power solution $\{ p_{k'}^{\ast,(n)} \}$. Different from  aforementioned problems, it is worth noting that the constraints \eqref{alg_SRM:phase23b} target the satellite to serve the satisfied users  in $\Qcal$ with only their QoS demands. With a finite power level $P_{\max}$, the remaining satellite energy should be allocated to the scheduled  users with bad channel conditions by expecting that they are potential candidates to join the satisfied-user set $\Qcal$. If there are scheduled  users served equal to or greater than their demands at iteration~$n$, they will be added to the satisfied-user set $\Qcal$ by
	\begin{equation} \label{eq:updateQ}
		\Qcal^{\ast, (n)} =  \big\{ k \big| R_k \big(\{ p_{k'}^{\ast, (n)} \big) \geq \xi_k, k \in \Kcal \big\},
	\end{equation}
	where $ R_k \{ p_{k'}^{\ast, (n)}\}$ is given in \eqref{eq:rate} but with $p_{k'} = p_{k'}^{\ast, (n)}, \forall k$. After that the iteration index is increased as $n=n+1$, which leads to an iterative approach. 
	Notice that it should maximize the number of scheduled  users that satisfy their requirements in each iteration with the objective to maximize the sum rate of all the $K$ scheduled  users. We emphasize that the second case is only executed after checking that conditions \eqref{eq:rho} and \eqref{eq:PowerConstr} are not satisfied, so the cardinality of the satisfied-user set is less than the number of scheduled  users along iterations, i.e., $|\Qcal^{\ast,(n)} | < K, \forall n$. 
	Our proposed approach is summarized in Algorithm \ref{alg:glob_alg} with its convergence given in Theorem~\ref{theorem:Coverge}.
	\begin{theorem} \label{theorem:Coverge}
		If all the $K$ scheduled  users cannot be served with their QoS requirements under a given power budget $P_{\max}$ and the obtained optimized power coefficients at each iteration by solving \eqref{alg_SRM:phase23},  the following convergence properties hold and therefore Algorithm~\ref{alg:glob_alg} converges to a fixed point solution,
		\begin{align}
			\ldots & \geq |\Qcal^{\ast,(n)}| \geq |\Qcal^{\ast,(n-1)}| \geq \ldots \geq |\Qcal^{\ast,(0)}|, \label{eq:QSeries}\\
			\ldots & \leq \sum\nolimits_{k \in \Kcal} R_k \big( \{ p_{k'}^{\ast,(n)} \} \big) \leq \sum\nolimits_{k \in \Kcal} R_k \big( \{ p_{k'}^{\ast,(n-1)} \} \big)\non \\
			 &\leq  \ldots \leq \sum\nolimits_{k \in \Kcal} R_k \big( \{ p_{k'}^{\ast,(0)} \} \big), \label{eq:RSeries}
		\end{align}
	\end{theorem}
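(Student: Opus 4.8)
The plan is to establish the two monotone chains \eqref{eq:QSeries} and \eqref{eq:RSeries} separately, as short set-inclusion arguments, and then invoke monotone convergence together with the integrality of $|\Qcal^{\ast,(n)}|$ to conclude that Algorithm~\ref{alg:glob_alg} reaches a fixed point.

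\textbf{Step 1 (monotonicity of the satisfied-user set).} I would first show $\Qcal^{\ast,(n-1)} \subseteq \Qcal^{\ast,(n)}$ for every $n\geq 1$. By construction, problem~\eqref{alg_SRM:phase23} at iteration~$n$ carries the equality constraints~\eqref{alg_SRM:phase23b}, i.e.\ $R_k(\{p_{k'}^{\ast,(n)}\}) = \xi_k$ for every $k\in\Qcal^{\ast,(n-1)}$; in particular $R_k(\{p_{k'}^{\ast,(n)}\}) \geq \xi_k$ for those users, so the update rule~\eqref{eq:updateQ} places each such $k$ in $\Qcal^{\ast,(n)}$. Hence $\Qcal^{\ast,(n-1)}\subseteq\Qcal^{\ast,(n)}$ and $|\Qcal^{\ast,(n)}|\geq|\Qcal^{\ast,(n-1)}|$; chaining from $n=0$ upward, with $\Qcal^{\ast,(0)}$ defined by~\eqref{eq:Q0}, yields~\eqref{eq:QSeries}.

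\textbf{Step 2 (monotonicity of the sum rate).} Let $\mathcal{F}^{(n)}$ be the feasible set at iteration~$n$, namely $\{p_{k'}\}\in\mathbb{R}_{+}^{K}$ with $\sum_{k\in\Kcal}p_k\leq P_{\max}$ and $R_k(\{p_{k'}\})=\xi_k$ for all $k\in\Qcal^{\ast,(n-1)}$ (with $\mathcal{F}^{(0)}$ the feasible set of~\eqref{RelatedWorkSumRate}, carrying only the power constraint). By Step~1, $\Qcal^{\ast,(n-2)}\subseteq\Qcal^{\ast,(n-1)}$, so any $\{p_{k'}\}\in\mathcal{F}^{(n)}$ also satisfies $R_k(\{p_{k'}\})=\xi_k$ on $\Qcal^{\ast,(n-2)}$ together with the power constraint, i.e.\ $\{p_{k'}\}\in\mathcal{F}^{(n-1)}$; hence $\mathcal{F}^{(n)}\subseteq\mathcal{F}^{(n-1)}$, and for $n=1$ this reduces to the trivial inclusion $\mathcal{F}^{(1)}\subseteq\mathcal{F}^{(0)}$. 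Since all the problems in the chain maximize the same objective $\sum_{k\in\Kcal}R_k(\{p_{k'}\})$, maximizing over a smaller feasible set cannot increase the optimal value, so $\sum_{k\in\Kcal}R_k(\{p_{k'}^{\ast,(n)}\})\leq\sum_{k\in\Kcal}R_k(\{p_{k'}^{\ast,(n-1)}\})$ for all $n$; as the $n=0$ iterate solves the demand-unconstrained problem~\eqref{RelatedWorkSumRate}, this gives~\eqref{eq:RSeries}.

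\textbf{Step 3 (convergence to a fixed point).} The sequence $\{|\Qcal^{\ast,(n)}|\}_n$ is non-decreasing by Step~1 and bounded above by $K-1$, since in the congestion regime $|\Qcal^{\ast,(n)}|<K$ for all $n$; being integer-valued, it is eventually constant, say $|\Qcal^{\ast,(n)}|=|\Qcal^{\ast,(n^\star)}|$ for $n\geq n^\star$. Combined with $\Qcal^{\ast,(n-1)}\subseteq\Qcal^{\ast,(n)}$, this forces $\Qcal^{\ast,(n)}=\Qcal^{\ast,(n^\star)}$ for all $n\geq n^\star$. Once the satisfied-user set stops changing, problem~\eqref{alg_SRM:phase23} has identical data (objective, equality constraints, power budget) at every subsequent iteration, hence produces the same optimizer $\{p_{k'}^{\ast,(n)}\}$; the pair $(\{p_{k'}^{\ast,(n)}\},\Qcal^{\ast,(n)})$ is therefore stationary for $n$ large, which is the claimed fixed point. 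Finally, the sum-rate sequence is non-increasing by Step~2 and bounded below (by $0$, or by $\sum_{k\in\Qcal^{\ast,(0)}}\xi_k$ since $\Qcal^{\ast,(0)}\subseteq\Qcal^{\ast,(n)}$), hence convergent.

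\textbf{Main obstacle.} The delicate points are (i) keeping~\eqref{alg_SRM:phase23} feasible at every iteration so that the iterates $\{p_{k'}^{\ast,(n)}\}$ are well defined — here I would rely on the hypothesis of the theorem and on the Perron--Frobenius/standard power-control structure underlying Theorem~\ref{Cardinality}, which guarantees a unique nonnegative power allocation meeting $R_k=\xi_k$ on a subset of users whose interference-coupling spectral radius is below one; and (ii) upgrading, in Step~3, "the constraint data repeats" to "the optimizer repeats," which again invokes uniqueness of that power solution (through the $(\mathbf{I}_K-\mathbf{R}\mathbf{Q})^{-1}$-type closed form restricted to the active users) rather than mere equality of optimal values. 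The monotonicity arguments in Steps~1 and~2 are themselves elementary.
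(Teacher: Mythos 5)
Your proposal is correct and follows essentially the same route as the paper's own proof: the set monotonicity comes from the equality constraints \eqref{alg_SRM:phase23b} forcing every previously satisfied user to remain satisfied under the update \eqref{eq:updateQ}, and the sum-rate monotonicity comes from the nesting of feasible domains as the equality-constrained user set grows. Your version is slightly more streamlined (you obtain \eqref{eq:RSeries} directly from the inclusion of feasible sets rather than via the paper's induction and satisfied/unsatisfied decomposition, and you spell out the integrality and uniqueness points needed for the fixed-point conclusion), but these are presentational refinements rather than a different argument.
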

	\begin{proof}
See	Appendix~\ref{appendix:Coverge}.
	\end{proof}
	Theorem~\ref{theorem:Coverge} indicates an improvement of the satisfied-user set after each iteration by sacrificing an amount of the sum-data throughput that is aligned with the $\pmb{\epsilon}$-properly Pareto optimal solution in Section~\ref{sec:MultiObject}. When the congestion issue appears, the $K$ scheduled  users are split into two sets: the satisfied-user set $\Qcal$ containing the users served by the data throughput at least their demands, and the unsatisfied-user set $\Kcal \setminus \Qcal$ with the other users served by the throughput less than their demands.
	
	\begin{algorithm}[t]
		\begin{algorithmic}[1]
			\protect\caption{An iterative algorithm to obtain a local solution to problem \eqref{probGlobal}}
			\label{alg:glob_alg}
			\global\long\def\algorithmicrequire{\textbf{INPUT:}}
			\REQUIRE Channel vectors $\{\bh_{k}\}$; Maximum power $P_{\max}$; QoS requirement set $\{\xi_k\}$.
			\STATE Compute the precoding vectors $\{\bw_{k'}\}$ based on the channel vectors $\{ \mathbf{h}_{k'} \}$.
			\STATE Compute the matrices $\mathbf{R}, \mathbf{Q},$ and the vector $\pmb{\nu}$. 
			\IF {Conditions \eqref{eq:rho} and \eqref{eq:PowerConstr} are satisfied}
			\STATE Update $\Qcal^\ast = \Kcal$ and solve problem \eqref{RelatedWork} to obtain  $\{ p_{k'}^{\ast} \}$.
			\ELSE
			\STATE Solve problem \eqref{RelatedWorkSumRate}	 to obtain $\{ p_{k'}^{\ast,(0)} \}$ and update $\Qcal^{\ast,(0)}$ as in \eqref{eq:Q0}.
			\STATE Initialize the accuracy $\delta = |\Qcal^{\ast, (0)}|$ and set $n=0$.
			\WHILE {$\delta \neq 0$} 
			\STATE Set iteration index $n = n+1$.	
			\STATE Solve problem \eqref{alg_SRM:phase23} to obtain $\{ p_{k'}^{\ast, (n)} \}$ and then update $\Qcal^{\ast, (n)}$ as in \eqref{eq:updateQ}. 
			\STATE Update the accuracy $\delta = |\Qcal^{\ast, (n)}| - |\Qcal^{\ast, (n-1)}|$. 
			\ENDWHILE
			\ENDIF
			\global\long\def\algorithmicrequire{\textbf{OUTPUT:}}
			\REQUIRE  The satisfied-user set $\Qcal^{\ast} = \Qcal^{\ast, (n)}$ and the optimized power coefficients $ \{p_{k'}^\ast \} = \{p_{k'}^{\ast, (n)} \}$.\\
		\end{algorithmic}
	\end{algorithm}

	\begin{remark}
		Algorithm~\ref{alg:glob_alg} prioritizes on maintaining the QoS requirement for every user in multi-access scenarios. A finite power budget is strategically allocated to maximize the number of satisfied users before the sum-rate maximization is implemented. When the congestion appears, Algorithm~\ref{alg:glob_alg} still  provides service to  unsatisfied users for the fairness enhancement. Even though the proposed algorithm cannot guarantee a global optimum due to the inherent nonconvexity of problem~\eqref{probGlobalv1} as jointly optimizing the satisfied-user set $\Qcal$ and the power coefficients $p_k, \forall k$, it provides a good preliminary mechanism to investigate the demand-based optimization with realistic conditions where the satellite simultaneously serves many users with the same radio resources. 	
	\end{remark}
		\vspace{-0.3cm}

	\vspace{-0.55cm}
	\subsection{Data-Driven Approach} \label{Sec:DataDriven}
	
	In spite of an effective solution to handle the multi-objective problem~\eqref{probGlobal} by solving an alternative version in \eqref{probGlobalv1}, Algorithm~\ref{alg:glob_alg} must update the power coefficients and the satisfied-user set after many iterations until reaching a fixed point solution. The matter  might be, therefore, still burdensome for certain practical scenarios.  In this subsection, we propose to use a neural network model that can learn  the features of Algorithm~\ref{alg:glob_alg}, and then predict the power coefficients for each realization of user locations in the satellite system with extremely low computational complexity. We assume that the power solution obtained by Algorithm~\ref{alg:glob_alg} is available for the following series of the continuous mappings:
	\begin{align} 
		\mathbf{w}_{\ell} =& \tilde{\mathbf{f}}_\ell ( \{ \mathbf{h}_k \} ),\  \forall \ell \in \Kcal, \label{eq:well}\\
		\mu_{kl}  =& |\bh_k^H\bw_\ell|^2,\  \forall k,\ell \in \Kcal, \label{eq:M1}\\
		\alpha_{k}^{\ast}  =& \frac{p_k^\ast  \mu_{kk} }{\sum_{\ell\in\Kcal\backslash \{k\}}p_{\ell}^\ast  \mu_{kl}  +\sigma^2} ,\  k \in \Kcal, \label{eq:M2}\\
		p_k^{\ast}  =& f_k ( a_k^\ast, \{ \mu_{k\ell} \})  \label{eq:M3}\\
		= & \alpha_{k}^{\ast}  \frac{\sigma^2}{\mu_{kk}}  + \alpha_{k}^{\ast} 	\sum\nolimits_{\ell \in \mathcal{K} \setminus \{k\} } p_{\ell }^\ast \frac{\mu_{kl} }{\mu_{kk} }, k \in \Kcal,\non
	\end{align}
	where $\tilde{\mathbf{f}}_\ell ( \{ \mathbf{h}_k \} ): \mathbb{C}^{M \times K} \rightarrow \mathbb{C}^{M}$ is a multivariate function utilized to construct a precoding vector for user~$\ell$ from the instantaneous channels. After \eqref{eq:well}, the set of the $K$ precoding vectors is constructed, which are the input to compute the channel gains in the mapping \eqref{eq:M1} if $k=\ell$. Otherwise, \eqref{eq:M1} is used to compute the strength of the mutual interference. The continuous mapping in \eqref{eq:M2} evaluates the SINR level for an arbitrarily scheduled  user. { The optimized satisfied-user set $\Qcal^\ast$, which is discrete on the definition, can be reformulated by the optimized power coefficients $\{ p_k^\ast \}$ via utilizing $\{ \alpha_k^\ast \}$ in \eqref{eq:M2}, which is continuous. It is of paramount importance to design a low-cost machine learning framework and guarantee the existence of a neural network with a finite number of neurons for our considered framework.} {Finally, the last mapping \eqref{eq:M3} points out a way to update the power coefficient of $\UEk$ in relation to the offered rate to this user and the power allocation to the other scheduled  users in a multi access scenario.}  Since a composition of the continuous mappings is also a continuous mapping \cite{hornik1989multilayer}, Lemma~\ref{lemmaExistNeural} hereby approves the existence of a unique mapping that characterizes all the above procedures.
	\begin{lemma} \label{lemmaExistNeural}
		The power coefficients obtained by Algorithm~\ref{alg:glob_alg}  are characterized by 
		$\{ p_k \} = \mathcal{F} ( \{ \mathbf{h}_k \} )$,
		where $\mathcal{F} ( \{ \mathbf{h}_k \} )$ represents the series of the continuous mappings in \eqref{eq:well}--\eqref{eq:M3}. It implies that there exists at least a neural network to learn and predict $\mathcal{F} ( \{ \mathbf{h}_k \} )$.
	\end{lemma}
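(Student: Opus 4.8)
The plan is to prove the lemma in two stages: first, to show that the composite map $\mathcal{F}$ sending the channel matrix $\mathbf{H}=[\mathbf{h}_1,\dots,\mathbf{h}_K]$ to the power vector $\{p_k\}$ returned by Algorithm~\ref{alg:glob_alg} is a well-defined continuous function on its (physically admissible) domain; second, to invoke the universal approximation property of feedforward neural networks on this continuous map. For the first stage I would verify the continuity of every elementary mapping in \eqref{eq:well}--\eqref{eq:M3}. The precoding map $\tilde{\mathbf{f}}_\ell(\{\mathbf{h}_k\})$ of \eqref{eq:well} is, for the linear schemes of interest, a rational function of the entries of $\mathbf{H}$ followed by an $\ell_2$ normalization, hence continuous wherever the relevant Gram matrix is invertible and the precoder is nonzero; the gain map \eqref{eq:M1} is a polynomial in the real and imaginary parts of $\mathbf{h}_k$ and $\mathbf{w}_\ell$, hence continuous; the SINR map \eqref{eq:M2} and the power-update map \eqref{eq:M3} are rational functions whose denominators, $\sum_{\ell\neq k}p_\ell^\ast\mu_{k\ell}+\sigma^2$ and $\mu_{kk}$, are strictly positive because $\sigma^2>0$ and $\mu_{kk}=|\mathbf{h}_k^H\mathbf{w}_k|^2>0$ for the effective channels considered throughout, so they too are continuous there.

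Next I would argue that Algorithm~\ref{alg:glob_alg} returns, for each channel realization, a single power vector that depends continuously on the data. The test \eqref{eq:rho}--\eqref{eq:PowerConstr} partitions the channel space: on the congestion-free branch the power vector is the unique fixed point guaranteed by Theorem~\ref{Cardinality}, since condition \eqref{eq:rho} makes $(\mathbf{I}_K-\mathbf{R}\mathbf{Q})^{-1}$ well defined and $\mathbf{H}\mapsto(\mathbf{I}_K-\mathbf{R}\mathbf{Q})^{-1}\pmb{\nu}$ a continuous matrix-valued map; on the congested branch, Theorem~\ref{theorem:Coverge} guarantees that the iteration \eqref{alg_SRM:phase23} produces a finite monotone sequence terminating at a fixed point, and each inner (water-filling-type) problem has a unique solution that is itself one of the continuous compositions in \eqref{eq:M2}--\eqref{eq:M3}. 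Composing these maps and using the fact that a composition of continuous mappings is continuous \cite{hornik1989multilayer} yields the single-valued representation $\{p_k\}=\mathcal{F}(\{\mathbf{h}_k\})$; the discrete satisfied-user set $\mathcal{Q}^\ast$ need not appear explicitly, because it is recovered from the continuous quantities $\{\alpha_k^\ast\}$ and $\{p_k^\ast\}$ through the definition in \eqref{eq:updateQ}.

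Finally, restricting $\mathbf{H}$ to the compact set of physically admissible channels (entries upper bounded by the free-space link budget and bounded away from zero), $\mathcal{F}$ is a continuous map on a compact domain, so by the universal approximation theorem \cite{hornik1989multilayer} there exists a single-hidden-layer feedforward network with a finite number of neurons that approximates $\mathcal{F}$ to any prescribed accuracy, establishing the claim. The step I expect to be the main obstacle is the second one: making rigorous that the algorithm's output is single-valued and continuous \emph{across} the branch boundary in \eqref{eq:rho}--\eqref{eq:PowerConstr} and \emph{through} the integer-valued iteration count, which is not continuous in $\mathbf{H}$. I would handle this by either treating the two branches separately (approximating $\mathcal{F}$ piecewise, since Algorithm~\ref{alg:glob_alg} is anyway run branch-wise) or by noting that the boundary set where the branch switches, or where an additional user becomes exactly satisfied, has Lebesgue measure zero and therefore does not affect the supervised-learning objective.
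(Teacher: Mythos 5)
Your proposal follows essentially the same route as the paper's own proof: continuity of each elementary mapping in \eqref{eq:well}--\eqref{eq:M3}, compactness of the channel domain (which the paper obtains from $\|\mathbf{h}_k\|\le\sqrt{N}$ and the Cauchy--Schwarz bound $|\mathbf{h}_k^H\mathbf{w}_\ell|^2\le\tau_k^2$ with unit-norm precoders), and then the universal approximation theorem. The one point where you go beyond the paper is in explicitly confronting the single-valuedness and continuity of the algorithm's output across the branch boundary of \eqref{eq:rho}--\eqref{eq:PowerConstr} and through the integer-valued iteration count; the paper's proof is silent on this (it only asserts that $\mathcal{Q}^\ast$ is recovered from continuous quantities and that the capacity is continuous on a compact set), and the authors themselves concede later that the resulting map is non-smooth, so your piecewise/measure-zero remedy is a legitimate strengthening of the published argument rather than a deviation from it.
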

	\begin{proof}
		See Appendix~\ref{appendixExistNeural}.
	\end{proof}
	
	\begin{figure*}[t]
		\centering
		\includegraphics[width=1\textwidth]{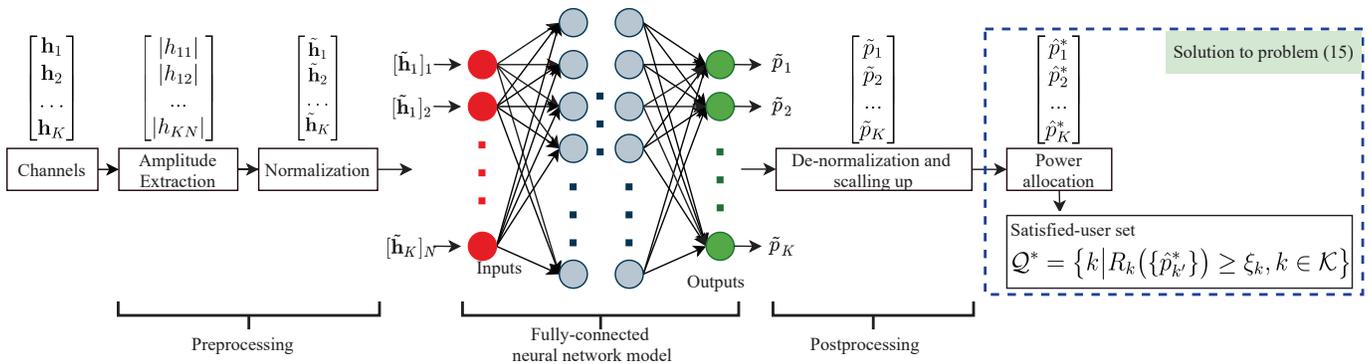}
		\caption{The considered neural network architecture to learn and predict the solution to problem~\eqref{probGlobalv1}.}
		\label{fig:DNN_model}
	\end{figure*}
	As the key point from Lemma~\ref{lemmaExistNeural}, a neural network only distills useful information from the instantaneous channels to learn the continuous mapping $\mathcal{F}(\{ \mathbf{h}_k \})$ and predict the power coefficients with low computational complexity since the satisfied-user set $\Qcal$ can be expressed as in \eqref{eq:M2}, by means of supervised learning. More precisely, different from previous works \cite{eisen2020optimal, xia2019deep}, this paper only makes use of the channel gains to learn a fully-connected neural network as the benefits of \eqref{eq:M1} conditioned by the precoding vectors as sketched in Fig.~\ref{fig:DNN_model}.\footnote{According to the universal approximation theorem \cite{hornik1989multilayer, goodfellow2016deep}, an adequately neural network can approximate a continuous mapping from a provided-input and designed-output data set. For a given accuracy, there may exist more than one neural network structures to learn  the series of continuous mappings in \eqref{eq:well}--\eqref{eq:M3}. The proof-of-concept idea in this paper is to demonstrate the effectiveness of neural networks in predicting the solution to a multi-objective optimization problem with low computational complexity.}
	
	\textit{Forward propagation}: We denote $\tilde{\bh}_{k} = [ |h_{k1}|, \ldots,  |h_{kN}|]^T \in \mathbb{R}_{+}^N$ the channel gain vector, with $h_{kn}$ denoting the $n$-th element. After that, each given realization of those channel gains are stacked into a vector as $\mathbf{x} =[\tilde{\bh}_{1}^T, \dots,\tilde{\bh}_{K}^T]^T \in\mathbb{R}_+^{KN}$. The law of conservation of energy indicates that the channel gain should be in a closed set, but their values might be extremely small due to deep fading. Subsequently, the channel gains are normalized to reducing fluctuations from the propagation environment  before utilizing them as the input to train the neural network. We numerically observe that this procedure will speed up the training phase and moderate the gradient vanishing problem. The normalized vector $\mathbf{x}_{\mathsf{in}} \in \mathbb{R}^{KN}$ is mathematically formulated from $\mathbf{x}$ as follows
	\begin{equation} \label{eq:xinm}
		[\mathbf{x}_{\mathsf{in}}]_m = ([\mathbf{x}]_m -[\mathbf{x}_{\min}]_m)/( [\mathbf{x}_{\max}]_m - [\mathbf{x}_{\min}]_m),
	\end{equation}
	where $[\mathbf{x}]_m$ is the $m$-th element of vector $\mathbf{x}$; $\mathbf{x}_{\min}, \mathbf{x}_{\max} \in \mathbb{R}_{+}^{KN}$ with the $m$-th element $[\mathbf{x}_{\min}]_m, [\mathbf{x}_{\max}]_m$ is respectively defined as $	[\mathbf{x}_{\min}]_m = \min \, \{ [\mathbf{x}]_m \} \mbox{ and } [\mathbf{x}_{\max}]_m = \max \, \{ [\mathbf{x}]_m \}$,
	where $\{ [\mathbf{x}]_m \}$ contains all the realizations of $[\mathbf{x}]_m$ in the training data set. In the considered framework,  both the channel gains and the optimized power coefficients are normalized by applying the same methodology as in \eqref{eq:xinm}, and hence the data set is a compact set. 
	The normalized data $\mathbf{x}_{\mathsf{in}}$ is now considered to be the input of the neural network for learning the set of weights and biases over some hidden layers. Activation functions are executed at neurons of each hidden layer to imitate  nonlinear properties in the data set. In detail, if $\mathbf{x}_{uv}$ denotes the input vector of the $u$-th neuron at the $v$-th hidden layer, then the corresponding output value is defined as $y_{uv}= f_{uv}(\mathbf{w}_{uv}^T \mathbf{x}_{uv} + b_{uv})$, where $\mathbf{w}_{uv}$ and $b_{uv}$ represent the weights and bias associated with this neuron; $f_{uv}(\cdot)$ is the activation function that imitates the nonlinear properties in a data set. After passing through the hidden layers, the output signal of the neural network block is denoted by $\tilde{\mathbf{p}} \in\mathbb{R}_+^K$. The forward propagation is deployed for both the training and testing phases. Furthermore, for the testing phase,
	the predicted data power vector $\hat{\mathbf{p}}  \in \mathbb{R}_+^K$ is obtained by denormalizing as
	\begin{equation}
		[\hat{\mathbf{p}}]_k = [\tilde{\mathbf{p}}]_k \left( [\tilde{\mathbf{p}}_{\max}]_k - [\tilde{\mathbf{p}}_{\min}]_k \right)  + [\tilde{\mathbf{p}}_{\min}]_k,
	\end{equation}
	where $[\cdot]_k$ is the $k$-th element of power vectors, while $[\tilde{\mathbf{p}}_{\max}]_k$ and $[\tilde{\mathbf{p}}_{\min}]_k$ are the maximum and minimum value of the power coefficient for $\UEk$ in the data set. Due to the local normalization that has generated a compact set for the power coefficient of each user, a neural network with a finite number of neurons may not guarantee the limited power budget constraint \eqref{eq:PowerConst}. To get rid of this issue, the following mapping is made as 
	$[\hat{\mathbf{p}}^\ast]_k = P_{\max} [\hat{\mathbf{p}}]_k  \big/ \sum\nolimits_{k' \in \Kcal} [\hat{\mathbf{p}}]_{k'}$, 
	then $\sum\nolimits_{k \in \Kcal}  [\hat{\mathbf{p}}^\ast]_k = P_{\max}$ aligning with the full power consumption to maximize the sum rate \cite{Bjon13:tcit}.

	\textit{Back propagation:} It is only exploited in the training phase with the supports of the optimized power coefficients from Algorithm~\ref{alg:glob_alg}. The mean squared error (MSE) metric is adopted as the loss function for the training phase, which is defined as $\mathcal{L}^\text{MSE}( {\Theta}) = \mathbb{E} \{ \|\tilde{\mathbf{p}} - \tilde{\mathbf{p}}^{\ast}\|^2_2 \}$, 
	where ${\Theta}$ is the set comprising all the weights and biases used in the neural network; $\tilde{\mathbf{p}}^\ast$ is the  vector with the optimized power coefficients obtained from Algorithm~\ref{alg:glob_alg} and after normalization. The loss function $\mathcal{L}^\text{MSE}( {\Theta})$ is expected over many realizations of different user locations and possible combinations over the $N$ overlapping beams. From a set of initial values, the weights and bias are iteratively updated by minimizing $	\mathcal{L}^\text{MSE}( {\Theta})$ with the backward propagation of the data set \cite{Chien19:TCOM}. Thanks to the benefits of supervised learning in training a neural network and to learn the multi-objective problem as analyzed in \eqref{eq:well}--\eqref{eq:M3}, Algorithm~\ref{alg:glob_alg} is utilized to generate the training data. The Adam optimization is used for backpropagation \cite{kingma2017adam}. 
	The momentum and babysitting the learning rate are exploited to reduce training time and get the best performance \cite{goodfellow2016deep}. 
	
	\section{Satellite Communications with Linear Precoding and Water Filling} 
	This section presents an application of our framework with a concrete linear precoding technique. Thanks to the semi-closed form power solution, a fine-tuning should be made to integrate the water filling method into Algorithm~\ref{alg:glob_alg} on a case-by-case basis.
	\vspace{-0.55cm}
	\subsection{Demand-based Optimization with Zero Forcing Precoding}
	We now apply the ZF precoding technique to our framework, which  effectively cancels out  all mutual interference \cite{bjornson2014}.\footnote{{In this paper, the scheduled  users are selected to ensure that the channel matrix is not ill-conditioned for effectively cancelling out mutual interference once the ZF precoding technique is utilized.}} Precisely, for a given channel matrix $\bH$, the precoding matrix $\bW^{\mathrm{zf}} \in \mathbb{C}^{N \times K}$ is formulated as $\bW^{\mathrm{zf}} = \bH (\bH^H\bH )^{-1}$, 
	and the precoding vector $\bw_k^\mathrm{zf}$ defined for $\UEk$ is calculated by $\bw_k^\mathrm{zf} = \bar{\bw}_k^\mathrm{zf} / \|\bar{\bw}_k^\mathrm{zf} \|$,
	where $\bar{\bw}_k^\mathrm{zf}$ is the $k$-th column of the matrix $\bW^{\mathrm{zf}}$. The channel capacity of $\UEk$ is reformulated from \eqref{eq:rate} to an equivalent form as
	\begin{equation}\label{eq:ratezf}
		R_k^\mathrm{zf} ( p_{k} ) = B \log_2 \left(1+ \frac{p_k}{\|\bar{\bw}_k^\mathrm{zf} \|^2 \sigma^2} \right), \mbox{ [Mbps]}, \ \forall k\in\Kcal,
	\end{equation}
	which demonstrates that all mutual interference from the other users to $\UEk$ is completely eliminated and the channel capacity is only the function of its own power coefficient.  
	\begin{algorithm}[t]
		\begin{algorithmic}[1] 
			\protect\caption{An algorithm  to obtain a local solution to problem \eqref{probGlobal} with the ZF precoding technique}
			\label{alg: prob_MMR_ZF}
			\global\long\def\algorithmicrequire{\textbf{INPUT:}}
			\REQUIRE Channel vectors $\{\bh_{k}\}$; Maximum power $P_{\max}$;  QoS requirement set $\{\xi_k\}$.
			\STATE Compute the precoding vectors $\{\bar{\bw}_{k}^{\mathrm{zf}} \}$ as $\bw_k^\mathrm{zf} = \bar{\bw}_k^\mathrm{zf} / \|\bar{\bw}_k^\mathrm{zf} \|$.
			\STATE Compute the minimum power levels $\{p_{\min,k}^{\ast} \}$ as in \eqref{eq:pkzf}.
			\IF {Condition \eqref{eq:PowerConstr} is satisfied}
			\STATE Solve problem \eqref{prob: ZF1} to obtain  $\{\tilde{p}_{k}^\ast \}$ by utilizing \eqref{eq:ratezf_wf}.
			\STATE Update  $p_k^\ast = \tilde{p}_{k}^{\ast} + p_{\min,k}^\ast, \forall k\in\Kcal$ and $\Qcal^\ast = \Kcal$.
			\ELSE
			\STATE Solve problem \eqref{prob: ZF2} with the order in \eqref{eq:AscendOrder} to obtain  $\Qcal^{\ast}$ as in \eqref{eq:Q0set} and $p_k^\ast = p_{\min,k}^\ast, \forall k \in \Qcal^\ast$. 
			\STATE Solve problem \eqref{prob: ZF3} to obtain  $p_{k}^\ast, \forall k \in \Kcal\setminus \Qcal^{\ast}$ as in \eqref{eq:OptPowerZF1}.
			\ENDIF
			\global\long\def\algorithmicrequire{\textbf{OUTPUT:}}
			\REQUIRE  The satisfied-user set $\Qcal^{\ast}$ and the optimized power coefficients $ \{p_{k'}^\ast \} $.\\
		\end{algorithmic} 
	\end{algorithm}
	We now apply the {classical} water filling technique to tackle the joint power allocation and demand-based control as presented in Algorithm \ref{alg: prob_MMR_ZF}. Specifically, we first compute the precoding vectors $\{\bw^\mathtt{zf}_{k}\}$. From the channel capacity \eqref{eq:ratezf}, the minimum required power $p_{\min,k}^\ast$ allocates to $\UEk$ with its demand is 
	\begin{equation}\label{eq:pkzf}
		R_k^\mathrm{zf} ( p_{k} ) = \xi_k \Leftrightarrow p_{\min,k}^\ast = \alpha_k \|\bar{\bw}_k^\mathrm{zf} \|^2 \sigma^2,\quad \forall k\in\Kcal.
	\end{equation}
	Thanks to the closed-form expression in \eqref{eq:pkzf}, after obtaining $\{p_{\min,k}^\ast \}$, we only need to testify the condition \eqref{eq:PowerConstr} to identify if the system can offer the QoS requirements to all the $K$ scheduled  users. Inspirited by  Algorithm~\ref{alg:glob_alg}, qualifying \eqref{eq:PowerConstr} by using $\sum_{k \in \Kcal} p_{\min,k}^\ast$  leads to the two possible cases with separated consequences. In the former case, where $\sum_{k \in \Kcal} p_{\min,k}^\ast \leq P_{\max}$, problem~\eqref{RelatedWork} should be solved to the optimal solution by the interior-point methods and a successive convex approximation in polynomial time \cite{Boyd2004a}. However, to avoid a high cost of computing the first and second derivatives required by the interior-point methods, we propose a low computational complexity algorithm that can apply for practical satellite communications. Motivated by the fact that a certain amount of the power budget will  be dedicated to guaranteeing all the scheduled  users' demands while the remaining power should spend on maximizing the sum rate, the following optimization problem is considered as
	\begin{subequations} \label{prob: ZF1}
		\begin{alignat}{2}
			&\underset{\{\tilde{p}_{k'} \in \Kcal \}}{\mathrm{maximize} }&& \sum\nolimits_{k\in\Kcal}R_k^{\mathrm{zf}} (\tilde{p}_{k}) \label{prob: ZF1a}\\
			&\mbox{subject to}& & \sum\nolimits_{k\in\Kcal} \tilde{p}_k \leq P_{\max}-\sum\nolimits_{k \in\Kcal }p_{\min,k}^\ast \label{prob: ZF1b}.
		\end{alignat}
	\end{subequations}
	The constraint \eqref{prob: ZF1b} implies that the satellite only utilizes the remaining power after consuming  a portion of the power budget to ensure the $K$ scheduled  users served by their QoS requirements. From the water filling, the optimal solution to $\tilde{p}_{k}$ is computed in a semi-closed form as follows
	\begin{equation}\label{eq:ratezf_wf}
		\tilde{p}_k^\ast = \max \left(0, \frac{1}{\lambda^{\ast} \ln2} - ||\bar{\bw}^\mathrm{zf}_k||^2\sigma^2 \right),\ \forall k\in\Kcal,
	\end{equation}
	where $\lambda$ is the optimal solution to the Lagrange multiplier associated with the power constraint \eqref{prob: ZF1b}.
	The transmit power solution $\{p_k^\ast\}$ to problem \eqref{probGlobalv1} is attained by combining the solution $\{\tilde{p}_{k}^\ast \}$ to problem~\eqref{prob: ZF1} and the required powers $\{\hat{p}_{k'}\}$ as $p_k^\ast = \tilde{p}_k^\ast + p_{\min,k}, \forall k$. For the latter, if the condition \eqref{eq:PowerConstr} is not satisfied, i.e., $\sum_{k \in \Kcal} p_{\min,k}^\ast > P_{\max},$ we construct a heuristic mechanism to conquer problem \eqref{probGlobalv1} with the interference cancellation property of the ZF precoding technique. Accordingly, the satisfied-user set $\Qcal^{\ast}$ can be attained by solving the problem
	\begin{subequations} \label{prob: ZF2}
		\begin{alignat}{2}
			&\underset{\Qcal}{\mathrm{maximize}} &\quad & |\Qcal| \label{prob: ZF2a}\\
			&\mbox{subject to}&& \sum\nolimits_{k \in\Qcal} p_{\min,k}^\ast \leq P_{\max}. \label{prob: ZF2b}
		\end{alignat}
	\end{subequations}
	From the benefits of the ZF precoding technique in mitigating mutual interference, an scheduled  user with better the spectral norm of the precoding vector than the other, i.e., computing as $\bar{\mathbf{w}}_k^{\mathrm{zf}}, \forall k,$ will consume less power, and therefore having constructive a contribution to the power resource as demonstrated in \eqref{eq:pkzf}. Hence, one can attain the solution to  problem \eqref{prob: ZF2}  by, first, sorting $\{p_{\min,k}^{\ast} \}$  in ascending order as
	\begin{equation}\label{eq:AscendOrder}
		p_{\min,\pi_1}^{\ast} \leq p_{\min,\pi_2}^{\ast} \leq \ldots p_{\min,\pi_K}^{\ast},
	\end{equation}
	where $\{\pi_1, \ldots, \pi_K \}$ is a permutation of $\{1, \ldots, K\}$. The satisfied-user set $\Qcal^{\ast}$ includes satisfied users, taken one by one, in the sorted-order list \eqref{eq:AscendOrder} such that
	\begin{equation}\label{eq:Q0set}
		\begin{split}
			\Qcal^{\ast} = \bigg\{ k \Big|  &\sum\nolimits_{k=1}^{|\Qcal^{\ast}|}p_{\min,\pi_k}^{\ast} \leq P_{\max},\non\\ &\left.\sum\nolimits_{k=1}^{|\Qcal^{\ast}|+ 1} p_{\min,\pi_k}^{\ast} > P_{\max}, k \in \Kcal \right\}.
		\end{split}
	\end{equation}
	The following power budget of the satellite after allocating to the satisfied users in $\Qcal^{\ast}$ with their QoS requirements
	$\widetilde{P}_{\max} = P_\text{max} - \sum\nolimits_{k=1}^{|\Qcal^{\ast}|}p_{\min,\pi_k}^\ast$
	is dedicated to enhancing the data throughout for the remaining users. It results in $p_k^\ast = p_{\min,k}^\ast, \forall k \in \Qcal^{\ast}$. 
	The optimal power allocation to the unsatisfied users in  $\Kcal\backslash\Qcal^{\ast}$ is attained by performing the water filling method for the optimization problem as
	\begin{subequations} \label{prob: ZF3}
		\begin{alignat}{2}
			\underset{\{p_{k'} \geq 0, k' \in \Kcal\backslash\Qcal^{\ast} \}}{\mathrm{maximize}}  & \sum\nolimits_{k\in\Kcal\backslash\Qcal^{\ast}}R_k^{\mathrm{zf}}(p_{k'}) \\
			\mbox{subject to} \quad& \sum\nolimits_{k \in\Kcal\backslash\Qcal^{\ast}}p_k \leq \widetilde{P}_{\max}. \label{ZF3C1}
		\end{alignat}
	\end{subequations}
	We emphasize that the water filling method can be applied to obtain the global solution to problem~\eqref{prob: ZF3}, for which the optimal power $p_{k}^{\ast}$ of $\UEk$ is computed in a semi closed form as
	\begin{equation} \label{eq:OptPowerZF1}
		p_k^{\ast} = \max \left(0, \frac{1}{\tilde{\lambda}^{\ast} \ln2} - ||\bar{\bw}^\mathrm{zf}_k||^2\sigma^2 \right),\ \forall k\in\Kcal\setminus \Qcal^{\ast},
	\end{equation}
	where $\tilde{\lambda}^{\ast}$ is the optimal solution to the Lagrange multiplier associated with the constraint \eqref{ZF3C1}. By completely mitigating mutual interference among the $K$ scheduled  users, Algorithm~\ref{alg: prob_MMR_ZF} has the main computational complexity on searching for the optimal Lagrangian multipliers $\lambda^\ast$ and $\tilde{\lambda}^{\ast}$.
	
	\vspace{-0.55cm}
	\subsection{Demand-based Optimization with Regularized Zero-Forcing Precoding}
	We now inherit the major benefits of the water filling method  to design a heuristic algorithm for the RZF technique. From the channel matrix $\bH$, the precoding matrx is formulated  as $\bW^{\mathrm{rzf}} = \bH ( \bH^H\bH + \frac{K\sigma^2}{P_{\max}} \bI_K )^{-1}$, 
	where $\mathbf{I}_K$ is the identity matrix of size $K \times K$ and the RZF precoding vector defined for $\UEk$ is $\bw_k^{\mathrm{rzf}} = \bar{\bw}_k^{\mathrm{rzf}} / \|\bar{\bw}_k^{\mathrm{rzf}} \|$, 
	where $\bar{\bw}^{\mathrm{rzf}}$ is the $k$-th column of matrix $\bW^{\mathrm{rzf}}$. The RZF precoding technique does not entirely mitigate mutual interference with regard to its own benefits. Precisely,  it balances the transmit power and mutual interference up to a level \cite{bjornson2014}. Hence, the network should utilize \eqref{eq:rate} to evaluate the channel capacity. To exploit the water-filling method for the power control, with $\forall k\in\Kcal$, \eqref{eq:rate} is upper bounded by
	\begin{align} \label{eq:Upperbound}
		R_k ( \{ p_{k'} \} ) &\leq B \log_2 \left(1+ \frac{p_k|\bh_k^H \bw_k^{\mathrm{rzf}}|^2}{\sigma^2} \right), \mbox{[Mbps]},\non\\
		 &\triangleq\widetilde{R}_k (  p_{k} ) 
	\end{align}
	by neglecting mutual interference from the other scheduled  users. We stress that the upper bound on the channel capacity in \eqref{eq:Upperbound} aligns with the standard form that the water filling method can perform as shown in Algorithm~\ref{alg: MMR_RZF_MRT}. Because of the mutual interference, we should introduce a tolerable rate accuracy for $\UEk$, denoted by $\omega_k \geq 0$. Alternatively, the relaxed-QoS requirement of $\UEk$ should be $\xi_k + \omega_k$.  Similar to \eqref{eq:pkzf}, we thus compute the minimum required power $p_{\min,k}^\ast$ by using \eqref{eq:Upperbound} as
	$p_{\min,k}^\ast = (2^{(\xi_k + \omega_k)/B}-1) \| \mathbf{w}_k^{\mathrm{rzf}} \|^2 \sigma^2, \forall k \in \Kcal$, 
	then if the conditions \eqref{eq:rho} and \eqref{eq:PowerConstr} hold, Algorithm~\ref{alg: MMR_RZF_MRT} solves the sum-rate optimization problem as 
	\begin{subequations} \label{prob: RZF2}
		\begin{alignat}{2}
			&\underset{\{\tilde{p}_{k'} \in \Kcal \}}{\mathrm{maximize} }&& \sum\nolimits_{k\in\Kcal} \widetilde{R}_k (\tilde{p}_{k}) \label{prob: RZF2a}\\
			&\mbox{subject to}& & \sum\nolimits_{k\in\Kcal} \tilde{p}_k \leq P_{\max}-\sum\nolimits_{k \in\Kcal }p_{\min,k}^\ast \label{prob: RZF2b}.
		\end{alignat}
	\end{subequations}
	Let us denote $\{ \tilde{p}_k^{\ast} \}$ the solution to problem~\eqref{prob: RZF2} that is concretely expressed in a semi-closed form as
	\begin{equation}
		\tilde{p}_k^{\ast} = \max\left( 0, \frac{1}{\mu^{\ast} \ln2} - \frac{\sigma^2}{|\bh_k^H \bw_k^{\mathrm{rzf}}|^2}  \right),\ \forall k \in \Kcal,
	\end{equation}
	where $\mu^\ast$ is the optimal Lagrange multiplier associated with the constraint \eqref{prob: RZF2b}, then we obtain the optimized power coefficient of $\UEk$ as $p_k^{\ast} = \tilde{p}_k^{\ast} + p_{\min,k}^{\ast}$ and the satisfied-user set $\Qcal^\ast = \{k|k\in\Kcal, R_k(\{p_{k'}^\ast \}) \geq \xi_k\}$ (Step~5 of Algorithm~\ref{alg: MMR_RZF_MRT}). 
	\begin{algorithm}[t]
		\begin{algorithmic}[1] 
			\protect\caption{An algorithm to obtain a local solution to problem \eqref{probGlobal} with the RZF precoding technique}
			\label{alg: MMR_RZF_MRT}
			\global\long\def\algorithmicrequire{\textbf{INPUT:}}
			\REQUIRE Channel vectors $\{\bh_{k}\}$; Maximum power $P_{\max}$;  QoS requirement set $\{\xi_k\}$; Tolerable rate accuracy set $\{ \omega_k \}$.
			\STATE Compute the precoding vectors $\{\bw_{k}^{\mathrm{rzf}}\}$ as $\bw_k^{\mathrm{rzf}} = \bar{\bw}_k^{\mathrm{rzf}} / \|\bar{\bw}_k^{\mathrm{rzf}} \|$. 
			\STATE Compute $\{{p}_{\min, k}^{\ast}| k\in\Kcal, R_k(\{p_{k'}\}) = \xi_k +\omega_k \}$; the matrices $\mathbf{R}, \mathbf{Q},$ and the vector $\pmb{\nu}$.
			\IF {Conditions \eqref{eq:rho} and \eqref{eq:PowerConstr} are satisfied}
			\STATE Solve problem \eqref{prob: RZF2} to obtain  $\{\tilde{p}_{k}^\ast\}$.
			\STATE Update $p_k^\ast = \tilde{p}^\ast_k + {p}^{\ast}_{\min,k}, \forall k\in\Kcal$ and  $\Qcal^\ast =\{k|k\in\Kcal, R_k(\{p_{k'}\}) \geq \xi_k\}$.
			\ELSE
			\STATE Solve \eqref{RelatedWorkSumRate} with the upper bounded channel capacity in \eqref{eq:Upperbound} to obtain $\{ p_k^{\ast, (0)}\}$ and $\widetilde{\Qcal}^{\ast,(0)} = \{k|k\in\Kcal, R_k(\{p_{k'}^{\ast,(0) }\}) \geq \xi_k + \omega_k \}$. 
			\STATE Initial the accuracy $\delta = |\widetilde{\Qcal}^{\ast,(0)}|$ and set $n=0$.
			\WHILE { $\delta \neq 0$}
			\STATE Set iteration index $n=n+1$.
			\STATE Compute $p_{k}^{\ast, (n-1)}$ for user~$k\in\widetilde{\Qcal}^{\ast,(n-1)}$ as in \eqref{eq:powerkn}.	
			\STATE Solve problem~\eqref{prob: RZF1} to obtain $\{ p_k^{\ast,(n)}\} , \forall k \in\Kcal\backslash\widetilde{\Qcal}^{\ast,(n-1)}$.
			\STATE Update $\widetilde{\Qcal}^{\ast,(n)} = \widetilde{\Qcal}^{\ast,(n-1)} \cup  \widetilde{\Qcal}_1^{\ast,(n)}$. 
			\STATE Update the accuracy $\delta = |\bar{\Qcal}^{\ast, (n)} | -  |\bar{\Qcal}^{\ast, (n-1)} |$.
			\ENDWHILE
			\STATE Update $\{p^*_{k}\} = \{p_{k}^{\ast,(n)}\}$, and $\Qcal^\ast = \{k| R_k(\{p^\ast_{k'}\}) \geq \xi_k, \forall k\in\bar{\Qcal}^{\ast,(n)}\}$.
			\ENDIF
			\global\long\def\algorithmicrequire{\textbf{OUTPUT:}}
			\REQUIRE  The satisfied-user set $\Qcal^{\ast}$ and the optimized power coefficients $ \{p_{k}^\ast \}$.\\
		\end{algorithmic}
	\end{algorithm}
	If the conditions \eqref{eq:rho} and \eqref{eq:PowerConstr} are not satisfied, then the congestion issue appears. Algorithm~\ref{alg: MMR_RZF_MRT} initially solves problem~\eqref{RelatedWorkSumRate} by applying the water filling method to obtain the optimized power coefficients $\{ p_k^{\ast, (0)}\}$ and the relaxed satisfied-user set $\widetilde{\Qcal}^{\ast,(0)} = \{k|k\in\Kcal, R_k(\{p_{k'}^{\ast,(0) }\}) \geq \xi_k + \omega_k \}$. At iteration~$n$, let us decompose $\widetilde{\Qcal}^{\ast,(n-1)} = \widetilde{\Qcal}^{\ast,(n-2)} \cup \widetilde{\Qcal}^{\ast,(n-1)}_1$ where $\widetilde{\Qcal}^{\ast,(n-2)}$ and $\widetilde{\Qcal}^{\ast,(n-1)}$ contains the users satisfied their relaxed-QoS requirements up to iteration~$n-2$ and the new ones at iteration~$n-1$, respectively. Notice that $p_k^{\ast, (n-1)} = p_k^{\ast, (n-2)}$ if $k \in \widetilde{\Qcal}^{\ast,(n-2)} $ and $\widetilde{\Qcal}^{\ast,(n-2)} = \varnothing$ as $n=1$.  From the optimized power solution $\{ p_{k}^{\ast, (n-1)} \}$ to problem \eqref{prob: RZF1}, we can truncate the transmit power of new satisfied user~$k$ to as
	\begin{align} \label{eq:powerkn}
			p_k^{\ast,(n-1)} = (2^{\frac{(\xi_k + \omega_k)}{B}} -1)&\frac{\sum_{\ell\in\Kcal \setminus \{k\}}p_\ell^{\ast,(n-1)} |\bh_k^H\bw_\ell^{\mathrm{rzf}}|^2 +\sigma^2}{|\mathbf{h}_k \mathbf{w}_k^{\mathrm{rzf}}|^2},\non\\
			 &\qquad  \qquad  \quad  k \in \widetilde{\Qcal}_1^{\ast,(n-1)},
	\end{align}
	and the dedicated power $\widetilde{P}_{\max}^{(n)} = P_{\max} - \sum_{k \in \widetilde{\Qcal}^{\ast,(n-1)} } 	p_k^{\ast,(n-1)}$ are utilized to improve the remaining scheduled  users by solving the following optimization problem
	\begin{subequations} \label{prob: RZF1}
		\begin{alignat}{2}
			\underset{\{p_{k'}^{(n)} \geq 0, k' \in \Kcal\backslash\widetilde{\Qcal}^{\ast,(n-1)} \}}{\mathrm{maximize}}  & \sum\nolimits_{k\in\Kcal\backslash\widetilde{\Qcal}^{\ast,(n-1)}} \widetilde{R}_k(p_{k'}^{(n)}) \\
			\mbox{subject to} \qquad& \sum\nolimits_{k \in\Kcal\backslash\widetilde{\Qcal}^{\ast,(n-1)}}p_k^{(n)} \leq \widetilde{P}_{\max}^{(n)}. \label{eq:Power1}
		\end{alignat}
	\end{subequations} 
	By denoting $\{ p_k^{\ast,(n)} \}, \forall k \in \Kcal \setminus \widetilde{\Qcal}^{\ast,(n-1)},$ the solution to  problem~\eqref{prob: RZF1}, which is computed in a semi-closed form as
	\begin{equation}
		p_k^{\ast,(n)} = \max\left( 0, \frac{1}{\mu^{\ast,(n)} \ln2} - \frac{\sigma^2}{|\bh_k^H \bw_k^{\mathrm{rzf}}|^2}  \right), 
	\end{equation}
	where $\mu^{\ast,(n)}$ is the optimal Lagrange multiplier associated with the constraint~\eqref{eq:Power1}, then the algorithm enables to boost data throughput for the unsatisfied users.  Algorithm~\ref{alg: MMR_RZF_MRT} terminates as the cardinality of the satisfied-user set retains, i.e., $ |\bar{\Qcal}^{\ast, (n)} | =  |\bar{\Qcal}^{\ast, (n-1)} |$.

	\begin{figure*}[t]
		\begin{minipage}{0.32\textwidth}
			\centering
			\includegraphics[width=1\textwidth]{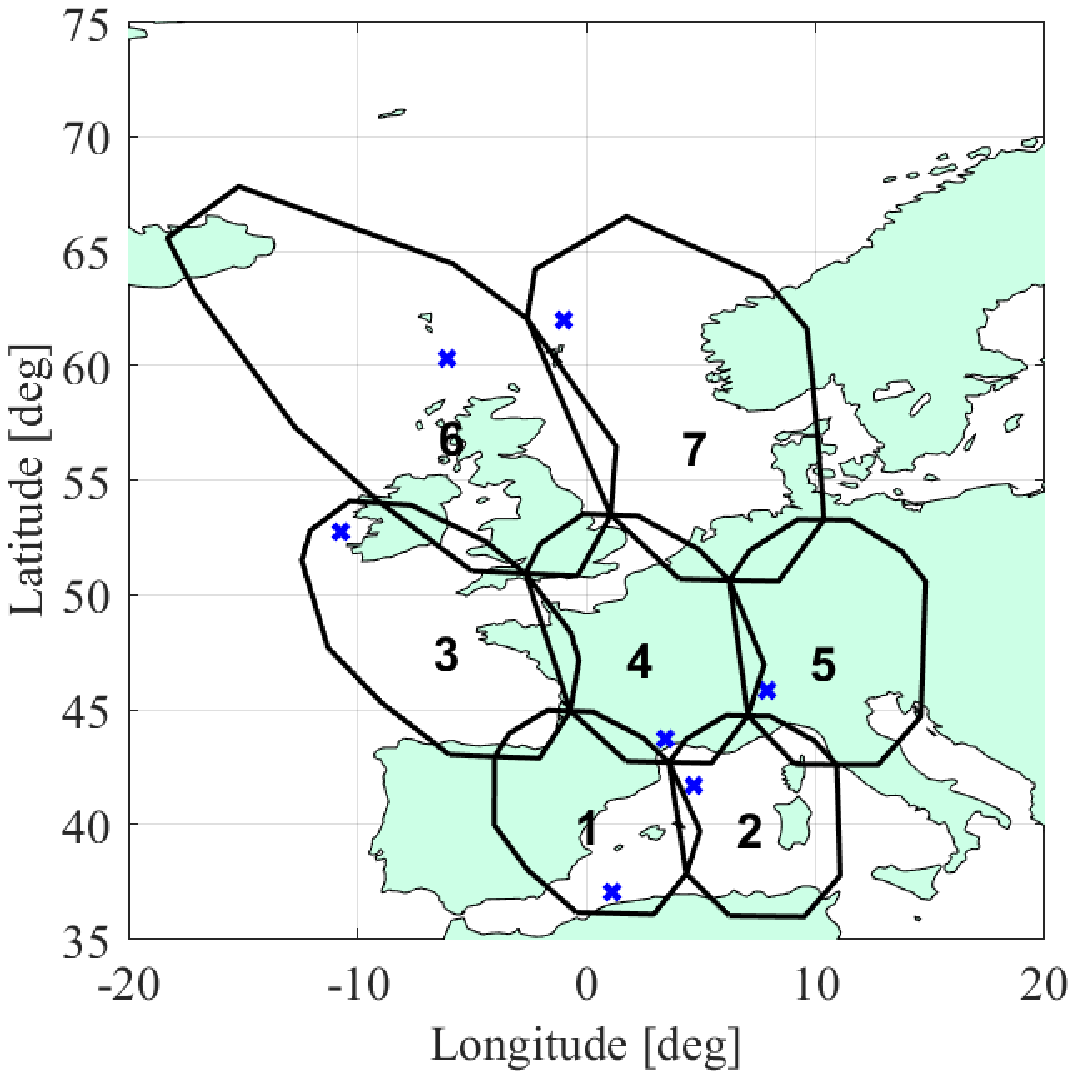} \\
			\centering $(a)$
		\end{minipage}
		\begin{minipage}{0.32\textwidth}
			\centering
			\includegraphics[width=1\textwidth]{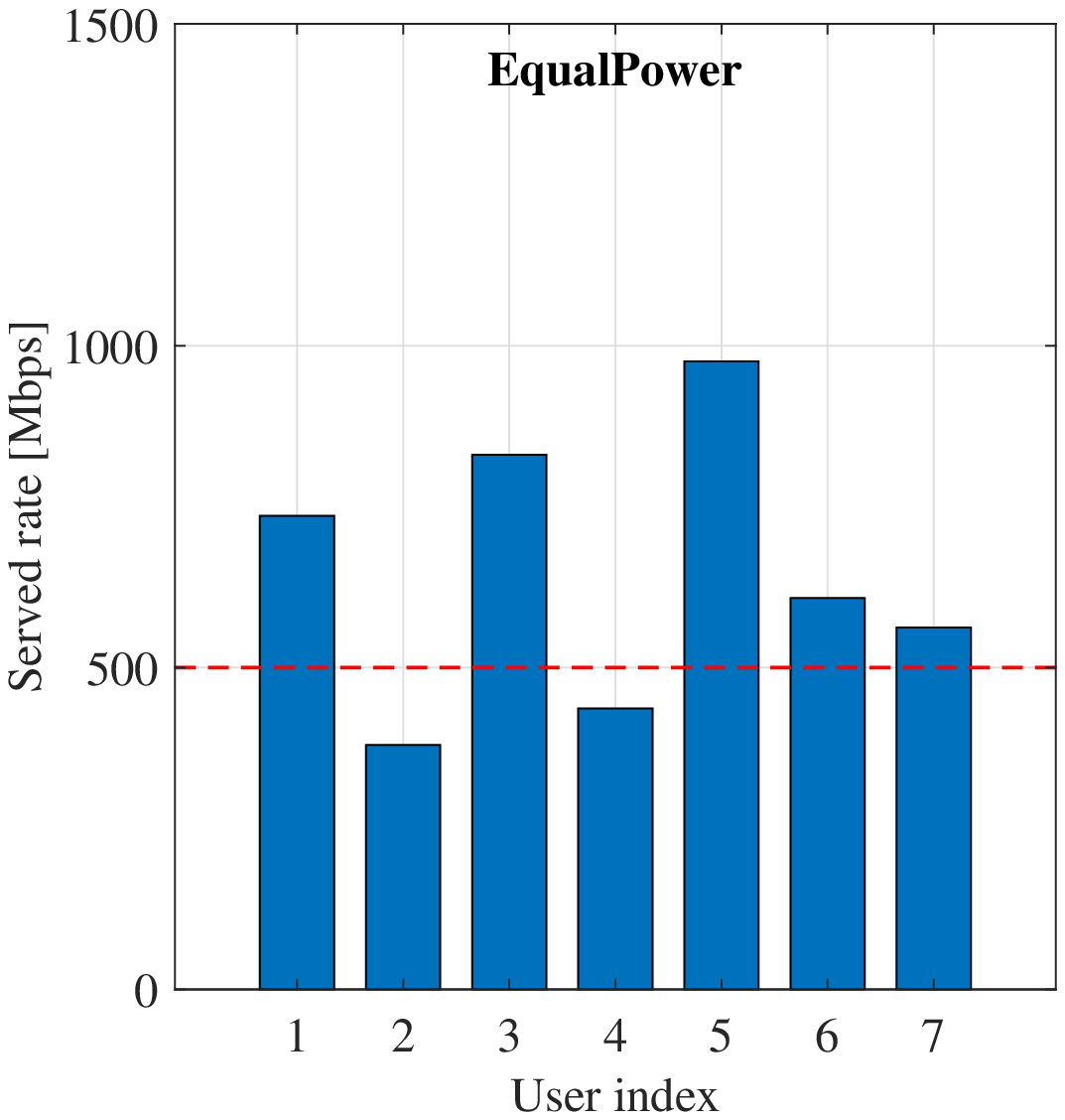}\\
			\centering  $(c)$
		\end{minipage}
		\begin{minipage}{0.32\textwidth}
			\centering
			\includegraphics[width=1\textwidth]{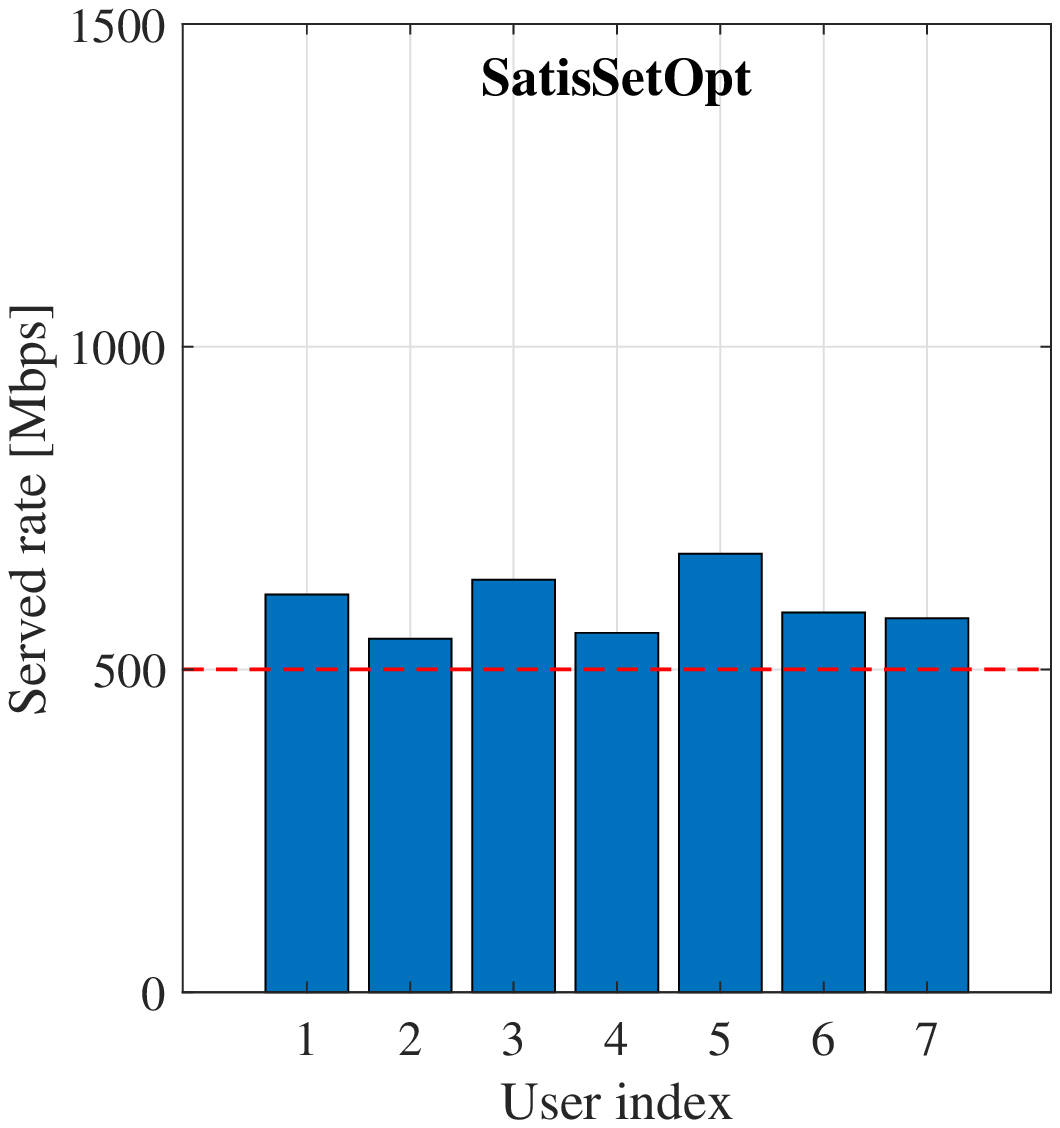} \\
			\centering $(e)$
		\end{minipage}
		\begin{minipage}{0.33\textwidth}
			\includegraphics[width=1\textwidth]{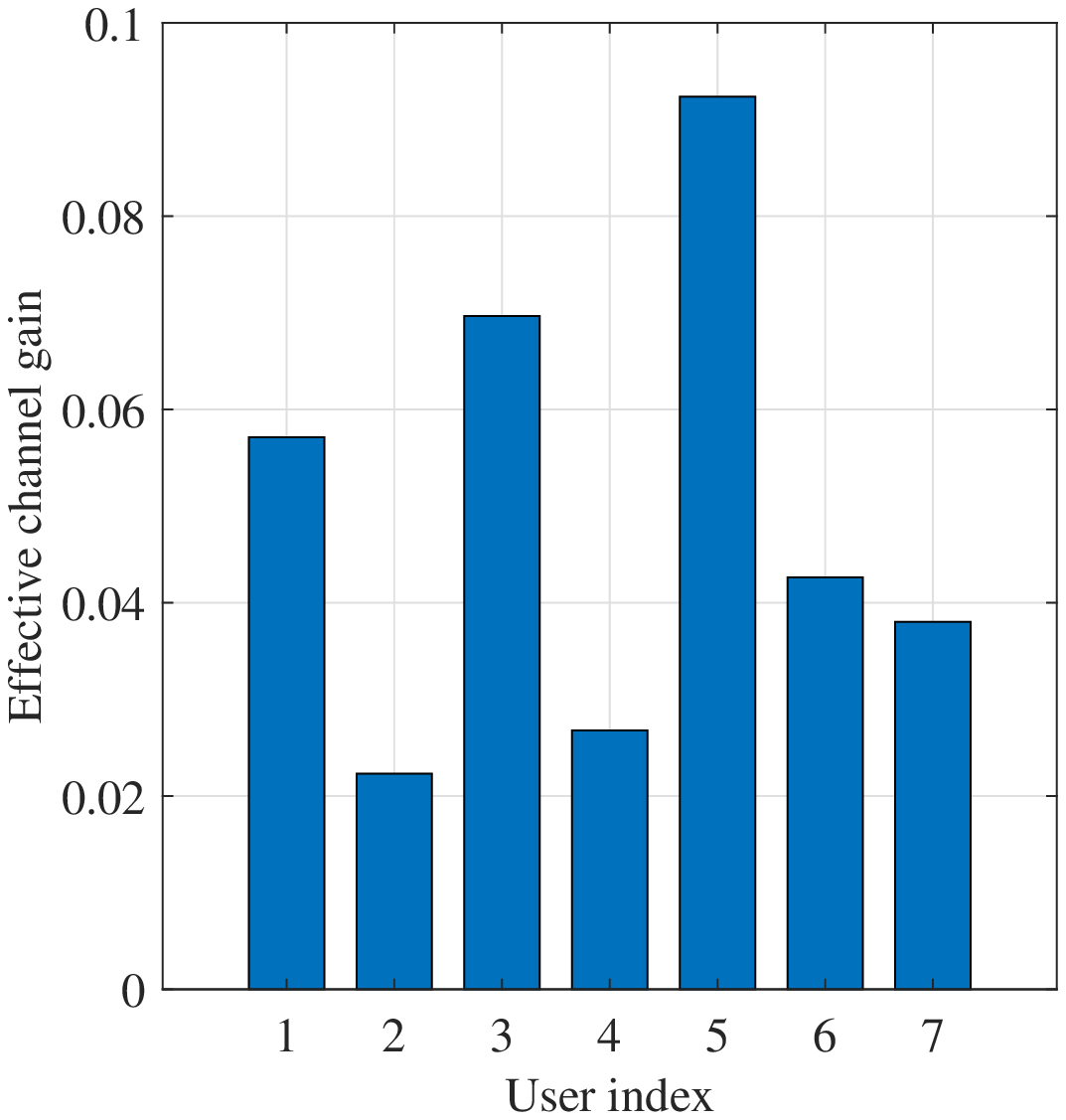} \\
			\centering $(b)$
		\end{minipage}
		\begin{minipage}{0.33\textwidth}
			\centering
			\includegraphics[width=1\textwidth]{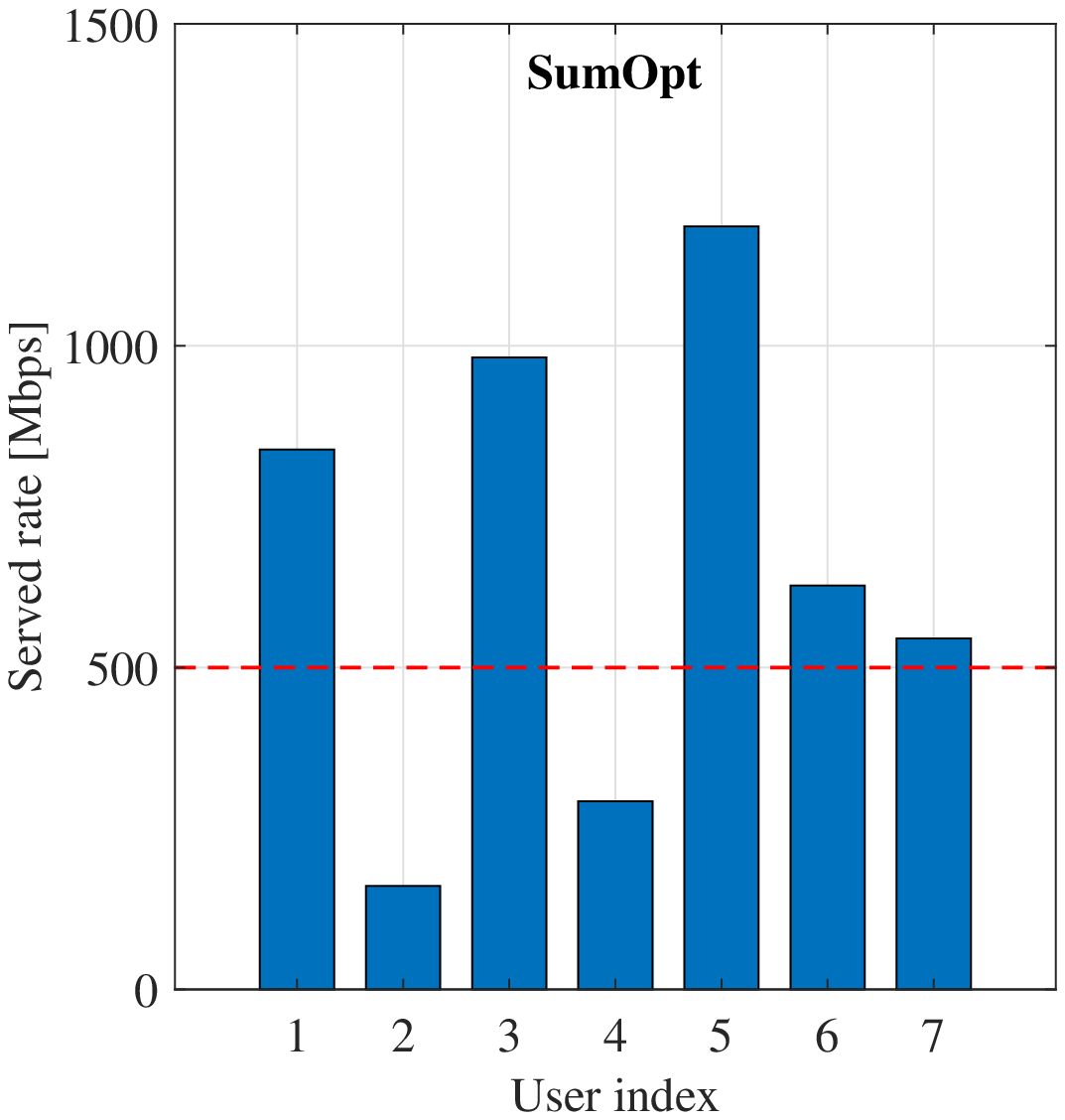}\\
			\centering $(d)$
		\end{minipage}
		\begin{minipage}{0.328\textwidth}
			\centering
			\includegraphics[width=1\textwidth]{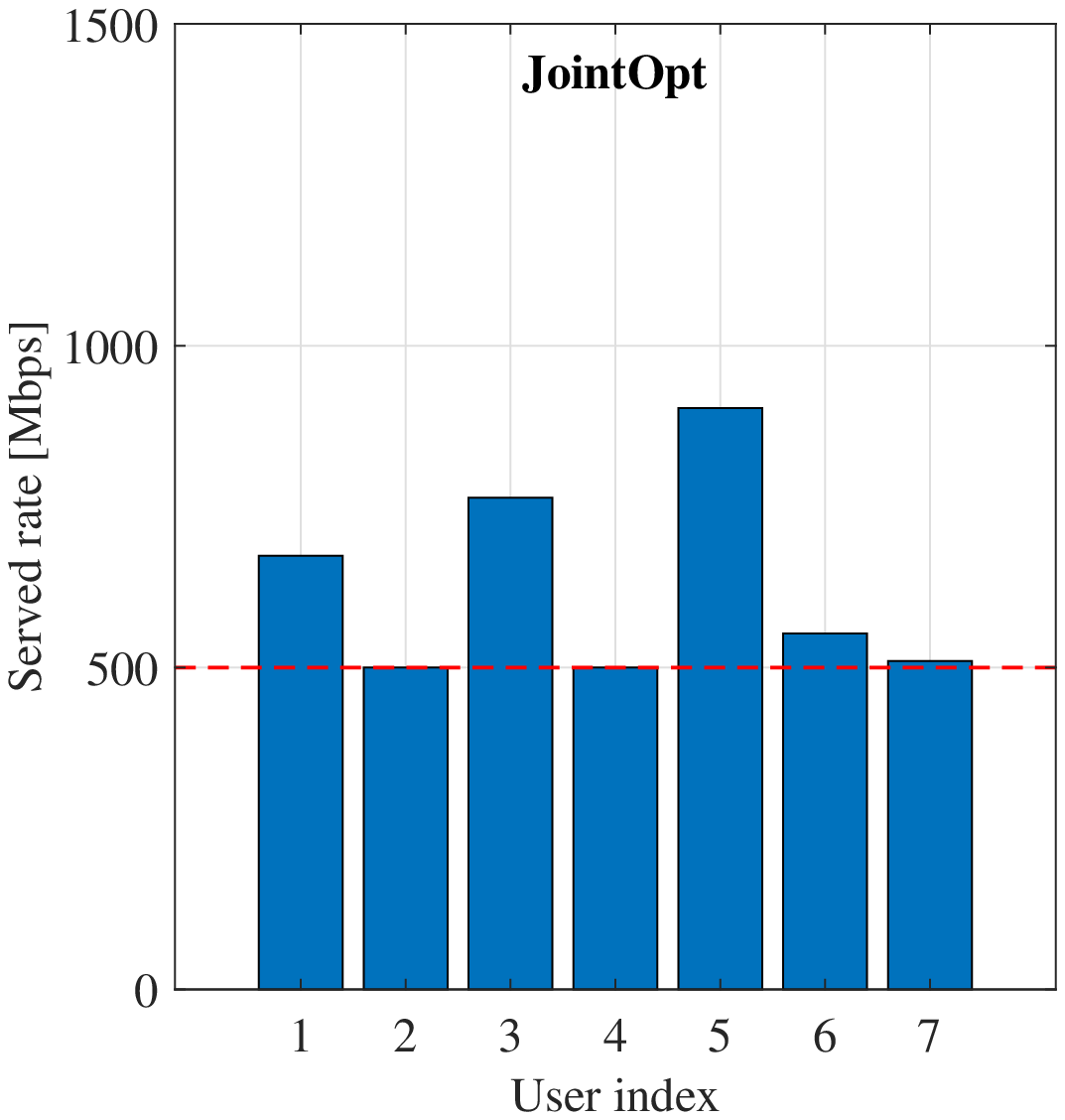} \\
			\centering $(f)$
		\end{minipage}
		\caption{A snapshot of the served rate per user [Mbps] with the ZF precoding technique for the different benchmarks and the QoS requirement per user $500$~[Mbps]: $(a)$ the users' locations; $(b)$ the effective channel gains; $(c)$ the equal power allocation (EqualPower); $(d)$ the sum rate maximization (SumOpt); $(e)$ the satisfied-user set maximization (SatisSetOpt); and $(f)$ the joint sum rate and satisfied-user set maximization (JointOpt).}
		\label{Fig:Snapshot}
	\end{figure*}
	\section{Numerical Results} \label{Sec:Results}
	We consider a GEO satellite system consisting of $N=7$ beams that serve at most $K=7$ scheduled  users in each coherence time interval.\footnote{\textcolor{black}{In practice the entire system is split in terms of geographical coverage or carriers due to the limited feeder link bandwidth and each part is handled by a different gateway. This practical constraint makes our numerical results reasonable in terms of a single gateway managing a cluster of beams.}} {\hili Specifically, in the simulation section, we investigate a satellite system with a total of $35000$ users evenly distributed across beams, i.e., with approximately $5000$ users laying on each beam coverage region. At each time slot, a random user per beam selected for consideration in the power allocation problem (unicast user scheduling). For sake of the simplicity, there is no user mobility. The parameters associated with the satellite and the beam radiation patterns are provided by ESA in the context of \cite{ESA}. In detail, the radiation patterns are based on a Defocused Phased Array-Fed Reflector (PAFR), with reflector size of $2.2$m and an array diameter of roughly $1.2$m. The antenna array before the reflector is a circular array with the space of $2\times$ carrier wavelength  and $511$ elements.} The satellite location is at $13^\circ$~E, and the system operates at Ka band, for which the carrier frequency is $20$~[GHz] \cite{CGD}. The system bandwidth is $500$~[MHz] and the satellite height is $35,786$~[km]. The maximum transmit power is $P_{\max}  = 23.37$~{[dBW]} corresponding to the average beamforming gain $44.4$~[dBi] and the effective isotropic radiated power (EIRP) $-27$ [dBW/Hz]. The receive antenna diameter is $0.6$~m and the noise power per user is $-118.3$~[dB]. For the data-driven approach, we construct a fully-connected neural network comprising  hidden layers with $128$ and $64$ neurons, respectively. The rectified linear unit (ReLU) is used as the activation function. The $25000$ realizations of different user locations are captured for the training phase to learn the continuous mappings in Section~\ref{Sec:DataDriven}. We also use $10000$ realizations for the testing phase to demonstrate the effectiveness of our proposed data-driven approach. Simulation results are implemented by using MATLAB on a personal Dell  Latitude 5510 laptop with CPU Intel Core(TM) i7-10610U @ 1.8-2.3~[GHz], and 16~[GB] RAM.
	\begin{figure*}[t]
		\begin{minipage}{0.325\textwidth}
			\centering
			\includegraphics[width= 1.1 \textwidth]{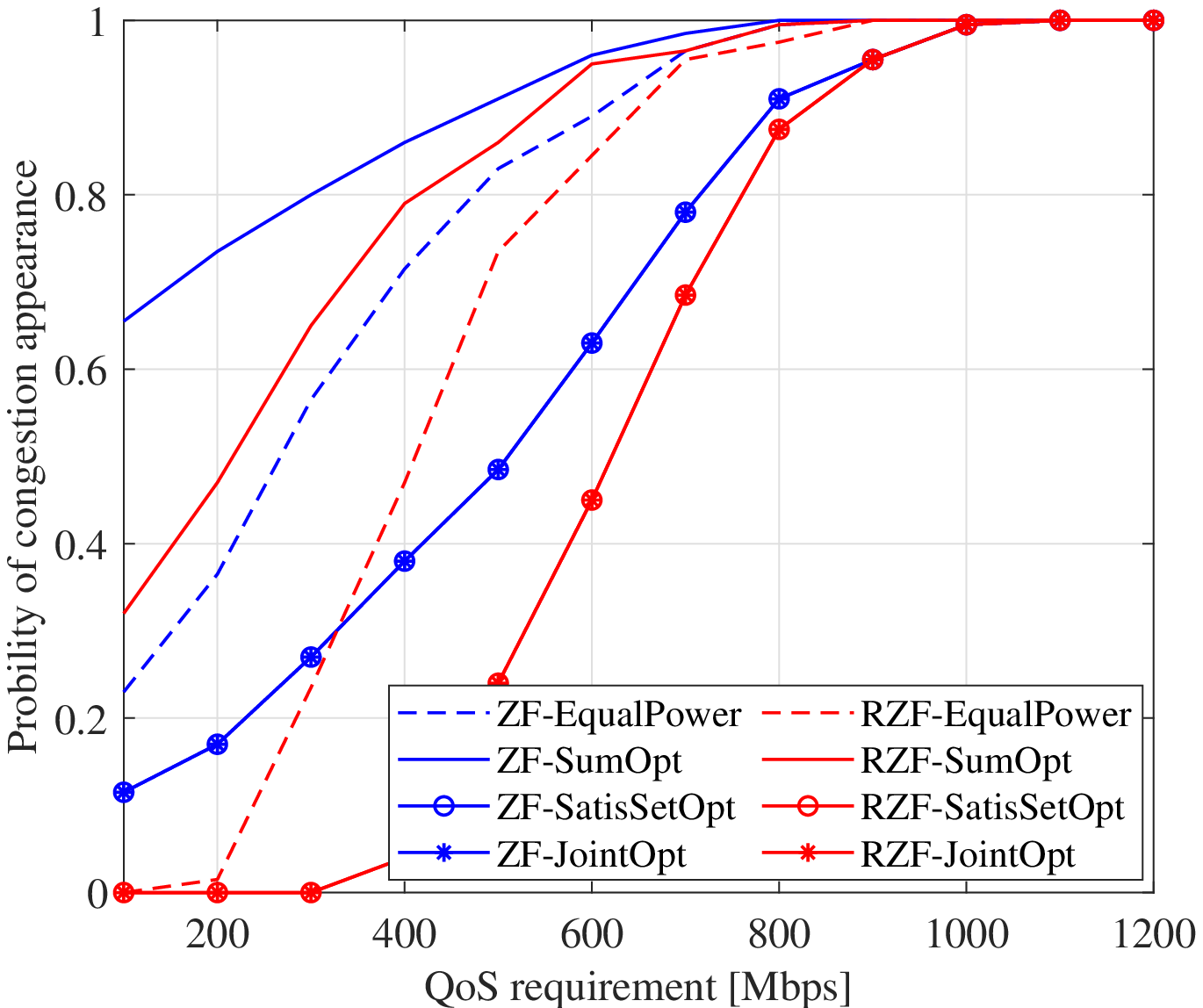}
			\caption{The probability of congestion appearance versus the QoS requirement.}
			\label{Fig:ProbInfea}
		\end{minipage}
		\hfill
		\begin{minipage}{0.325\textwidth}
			\centering
			\includegraphics[width= 1.1 \textwidth]{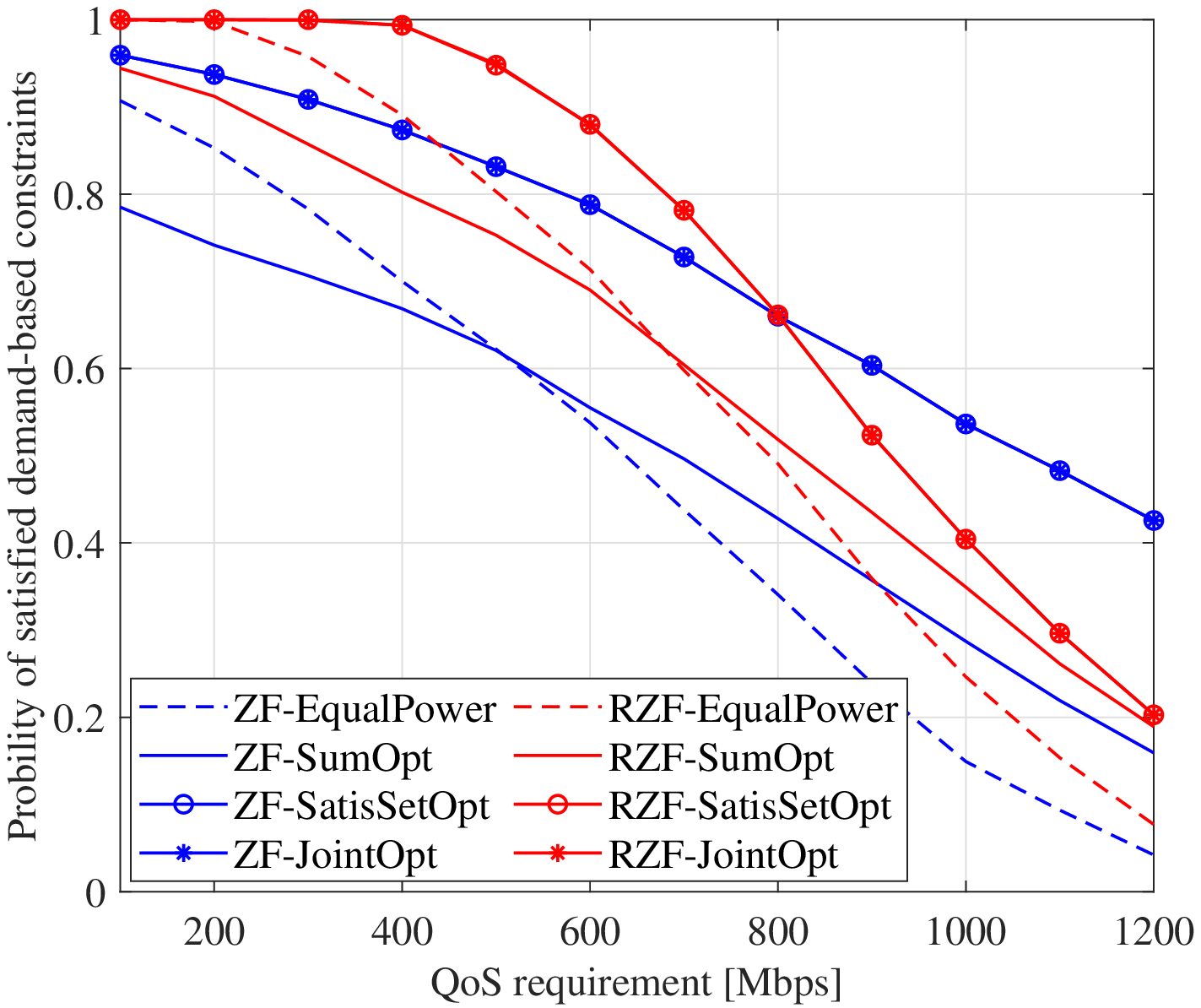}
			\caption{The probability of satisfied users versus the QoS requirement.}
			\label{Fig:ProbSatis}
		\end{minipage}
		\hfill
		\begin{minipage}{0.325\textwidth}
			\centering
			\includegraphics[width= 1.1 \textwidth]{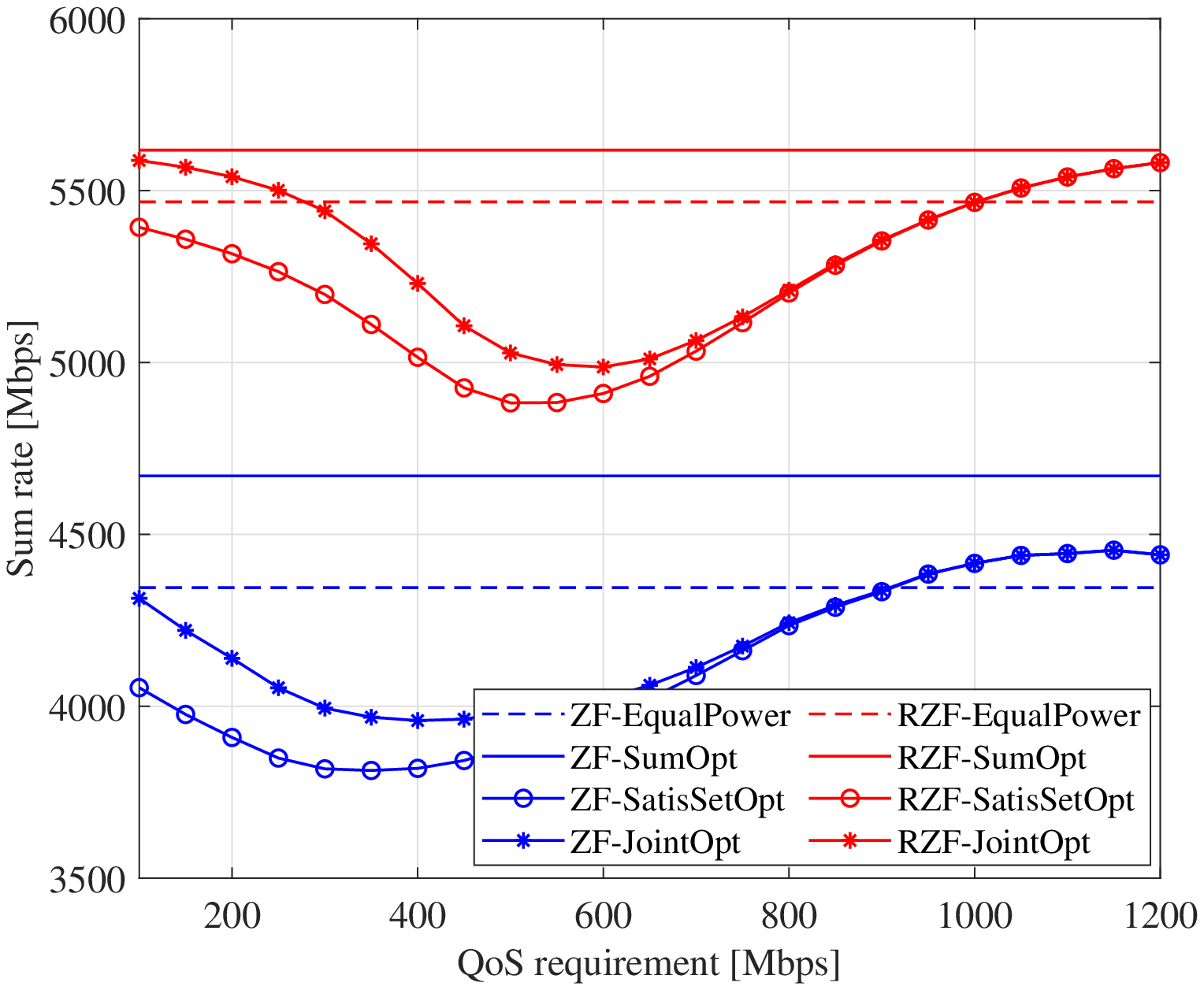}
			\caption{The sum rate versus the QoS requirement.}
			\label{Fig:Sum}
		\end{minipage}
		\hfill
		\begin{minipage}{0.325\textwidth}
			\centering
			\includegraphics[width= 1.1 \textwidth]{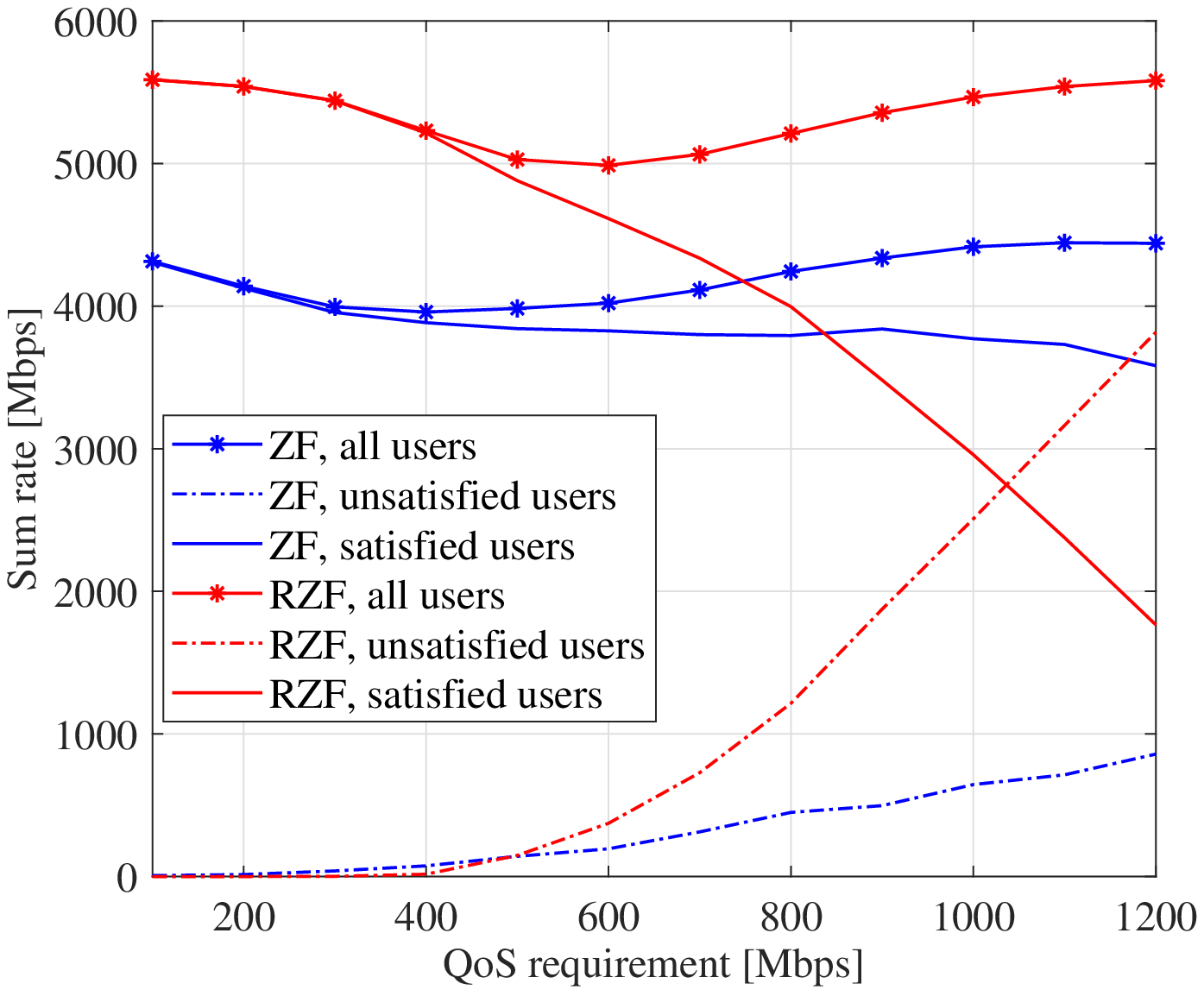}
			\caption{{\hili The sum rate versus the QoS requirement obtained by JointOpt.}}
			\label{Fig:SumSeperateRate}
		\end{minipage}
		\hfill
		\begin{minipage}{0.325\textwidth}
			\centering
			\includegraphics[width= 1.1 \textwidth]{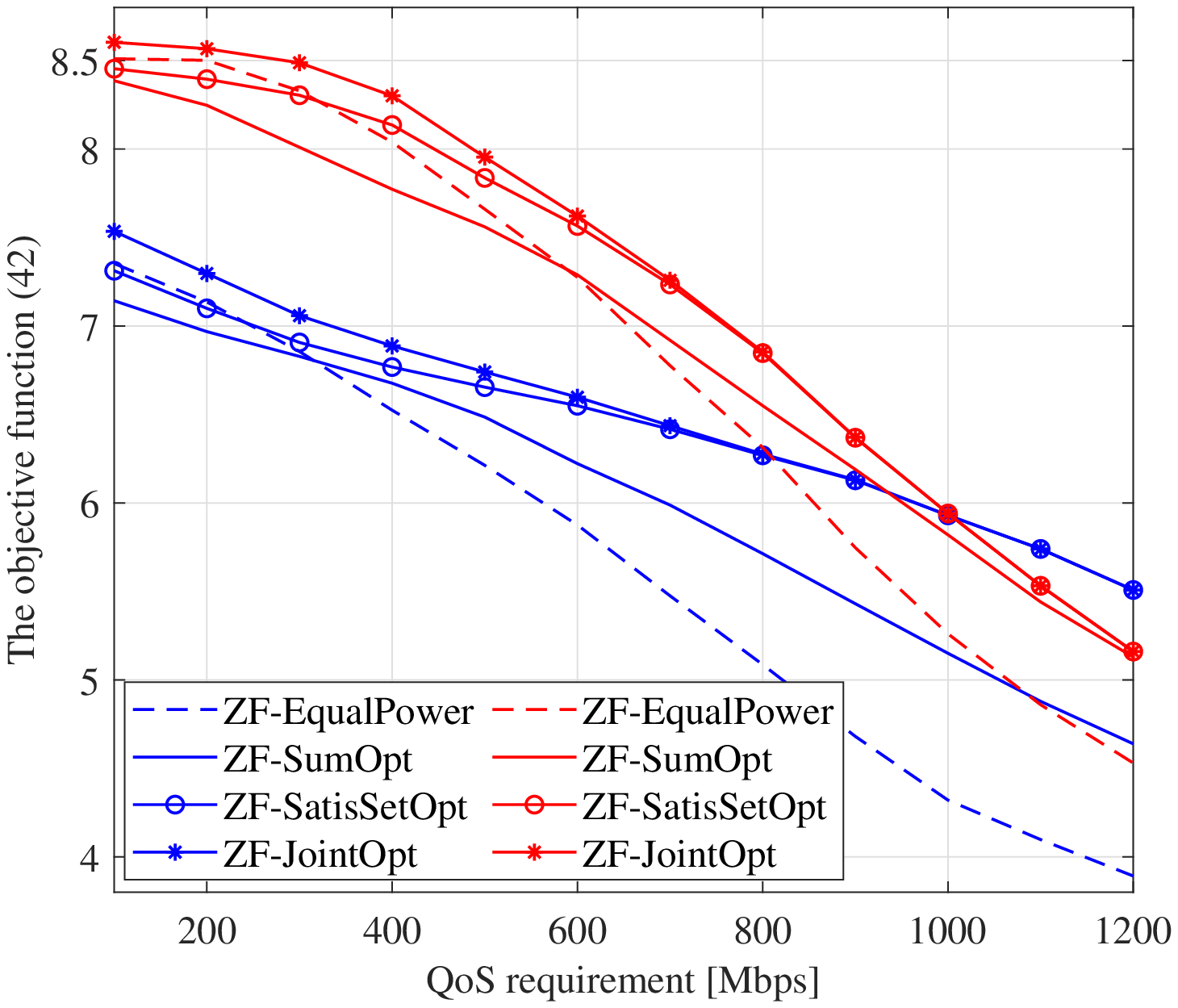}
			\caption{{\hili The objective function defined in \eqref{specific_obj} versus the QoS requirement.}}
			\label{Fig:NormObj}
		\end{minipage}
		\hfill
		\begin{minipage}{0.325\textwidth}
			\centering
			\includegraphics[width= 1.1 \textwidth]{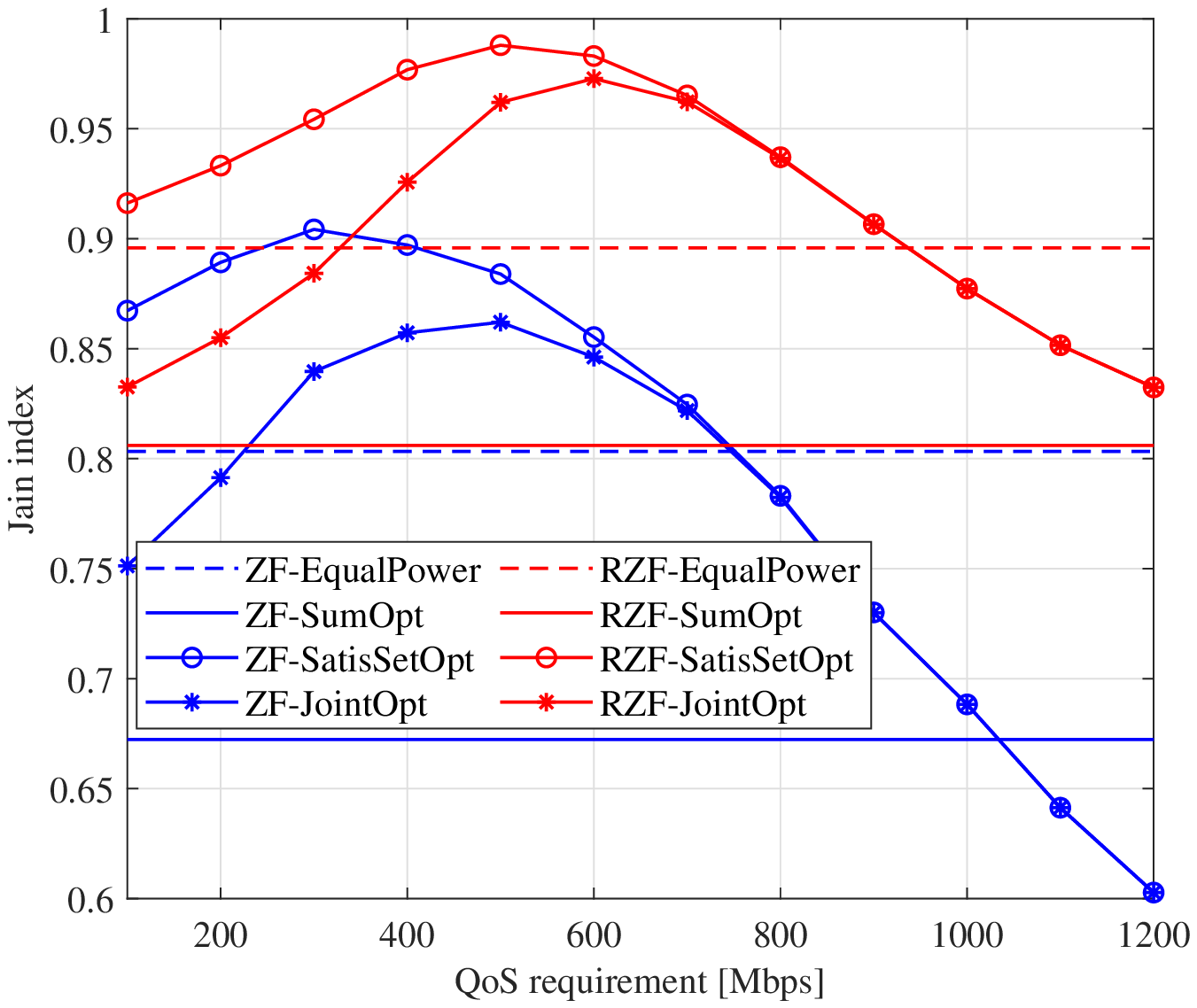}
			\caption{The Jain's index versus the QoS requirement.}
			\label{Fig:JainIndex}
		\end{minipage}	
	\end{figure*}
	By exploiting the ZF and RZF precoding, the following benchmarks are involved for comparison:
	\begin{itemize}
		\item[$i)$] \textit{Joint sum rate and satisfied-user set maximization} is presented by Algorithm~\ref{alg:glob_alg} for a general framework, and by Algorithm~\ref{alg: prob_MMR_ZF} and \ref{alg: MMR_RZF_MRT} for the ZF and RZF precoding technique, respectively. This benchmark is denoted as ``JointOpt'' in the figures.
		\item[$ii)$] \textit{Satisfied-user set maximization} is a relaxation of JointOpt that only focuses on maintaining users' demand, especially users with bad channel conditions. If all the $K$ scheduled  users are served with their demands, the remaining power budget is equally assigned to every user. This benchmark is denoted as ``SatisSetOpt" in the figures.
		\item[$iii)$] \textit{Sum rate maximization} has been previously demonstrated in \cite{lu2019robust}, which only maximizes the total data throughput for which users with extreme channel conditions may be out of service to dedicate the power budget to other users. This benchmark is denoted as ``SumOpt" in the figures.
		\item[$iv)$] \textit{Equal power allocation} serves as a baseline to demonstrate the benefits of power allocation and satisfied-user set optimization \cite{trinh2021user,Krivochiza:21:access}. The transmit power level $14.92$~dB is assigned to each user without a guarantee on users' demand. This benchmark is denoted as ``EqualPower" in the figures.
	\end{itemize}
	
	In Fig.~\ref{Fig:Snapshot}, we plot the served rate [Mbps] for every user relying on \eqref{eq:ratezf} by a given realization of user locations (see Fig.~\ref{Fig:Snapshot}(a)). Fig~\ref{Fig:Snapshot}(b) shows the effective channel gains, with users~$2$ and $4$ as the worst who are located near the boundary of the overlapping beams. For a fixed power level, EqualPower cannot guarantee the QoS requirements and those users get lower data throughput than their requests, which is $500$~[Mbps]. If the system deploys the sum-rate optimization to maximize the total data throughput of the entire network, users~$2$ and $4$ even get $1.5\times$ to $2\times$ lower data throughput than that of EqualPower. Both SatisSetOpt and JointOpt offer satisfactory data throughput to all the users. Nonetheless, JointOpt gives  $200$ [Mbps] higher the sum rate than SatisSetOpt, corresponding to the $4.8\%$ improvement. \textcolor{black}{In the following, we report the average system performance over $200$ different realizations of users' locations.} 
	
    In Fig.~\ref{Fig:ProbInfea}, we evaluate the probability of congestion appearance, which is defined for time instances when at least one scheduled user does not satisfy its QoS requirement. If the QoS requirement increases, our proposed algorithms provide the lowest probability of congestion appearance for both the ZF and RZF precoding techniques, especially at a low QoS regime. By maximizing the total system sum rate only, SumOpt always causes the highest congestion since scheduled users with lower channel gains are allocated less power since there is no QoS guarantee. 
    In Fig.~\ref{Fig:ProbSatis}, the probability of satisfying demand-based constraints is defined as a ratio between the number of satisfied users and the total users in the networks, i.e., $\mathbb{E}\{ |\Qcal| \}/K$.  When the QoS requirement per user increases, the satisfaction reduces since the network faces difficulties in maintaining the demands for many users with a limited power budget. If each user requires a QoS requirement level less than $400$ [Mpbs], SumOpt offers the lowest probability of satisfying  demand-based constraints. \textcolor{black}{After the effort to maximize the number of satisfied users, our proposed approaches allow some users served by a data throughput less than requested to maximize the sum rate. Another possible option is to schedule these unsatisfied users later in the next time slots. The joint congestion control and sum-rate maximization over multiple time slots are left for future work.}
     
    Besides, Fig.~\ref{Fig:Sum} demonstrates the scarification of the sum rate to improve the number of satisfied users. Both EqualPower and SumOpt allocate the transmit powers to the users without any guarantee of the individual QoSs, thus they should provide the constant sum rate of $5618$~[Mbps] and $5467$~[Mbps] on average by exploiting the RZF precoding technique. Meanwhile, the system with the ZF precoding technique is $4346$~[Mbps] and $4670$~[Mbps]. By using the RZF precoding technique, SatisSetOpt needs to lower the sum rate $420$~[Mbps] compared with SumOpt to enhance the QoS, while the reduction is only about $177$~[Mbps] if the network deploys JointOpt. 
    {\hili In Fig.~\ref{Fig:SumSeperateRate}, we explain the features of the sum rate [Mbps] when the demand-based constraints are involved by utilizing JointOpt with the different sets, including the set of all scheduled  users $\Kcal$, the satisfied-user set $\Qcal$, and the unsatisfied-user set $\Kcal \setminus \Qcal$. The sum rate of all the users is synthesized from the sum rate of satisfied- and unsatisfied-user sets as a consequence of problem~\eqref{probGlobal}. 
    }
	\begin{figure*}[t]
		\begin{minipage}{0.325\textwidth}
			\includegraphics[width=1.1\textwidth]{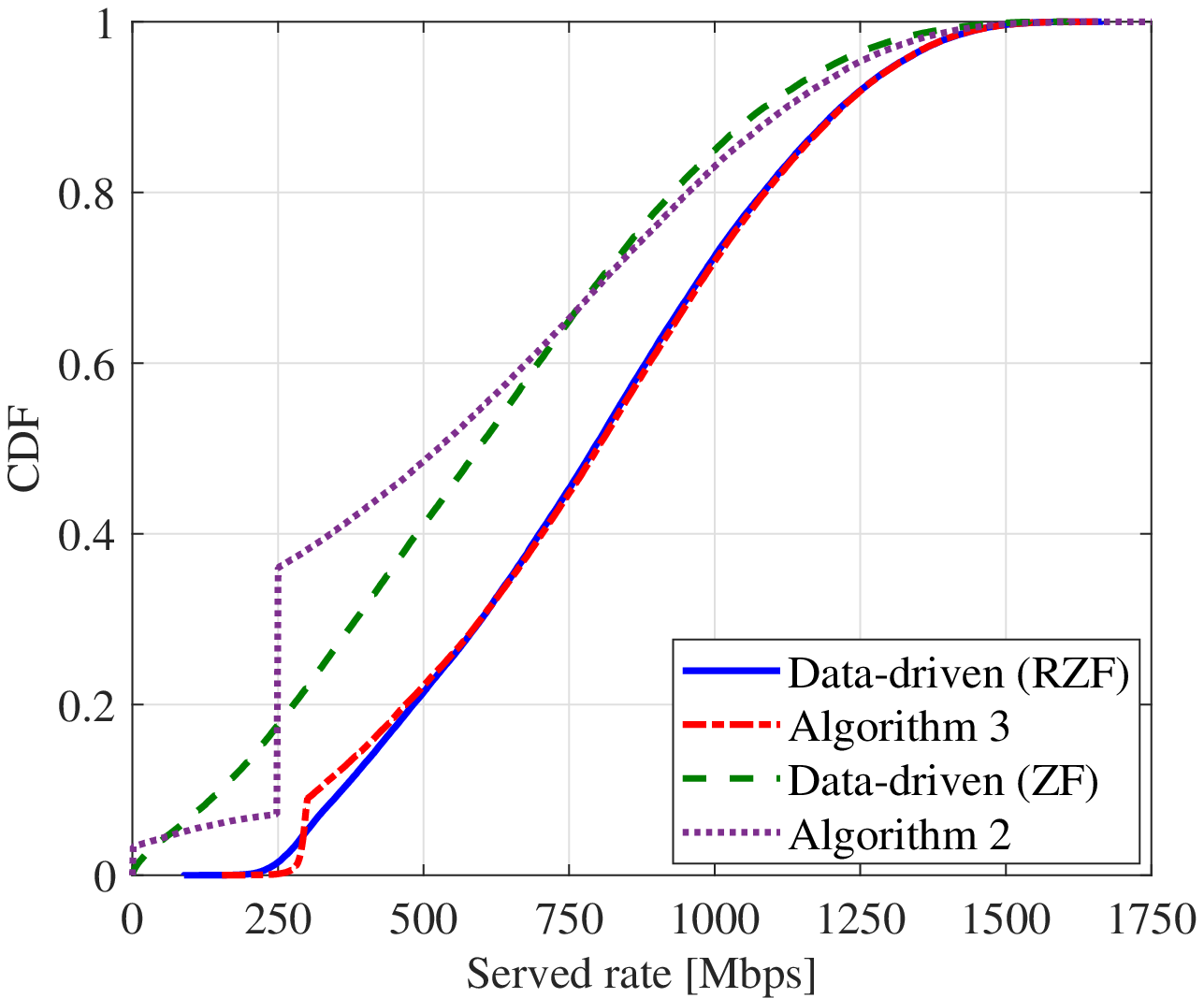} \\
			\centering $(a)$
		\end{minipage}
		\begin{minipage}{0.325\textwidth}
			\includegraphics[width=1.1\textwidth]{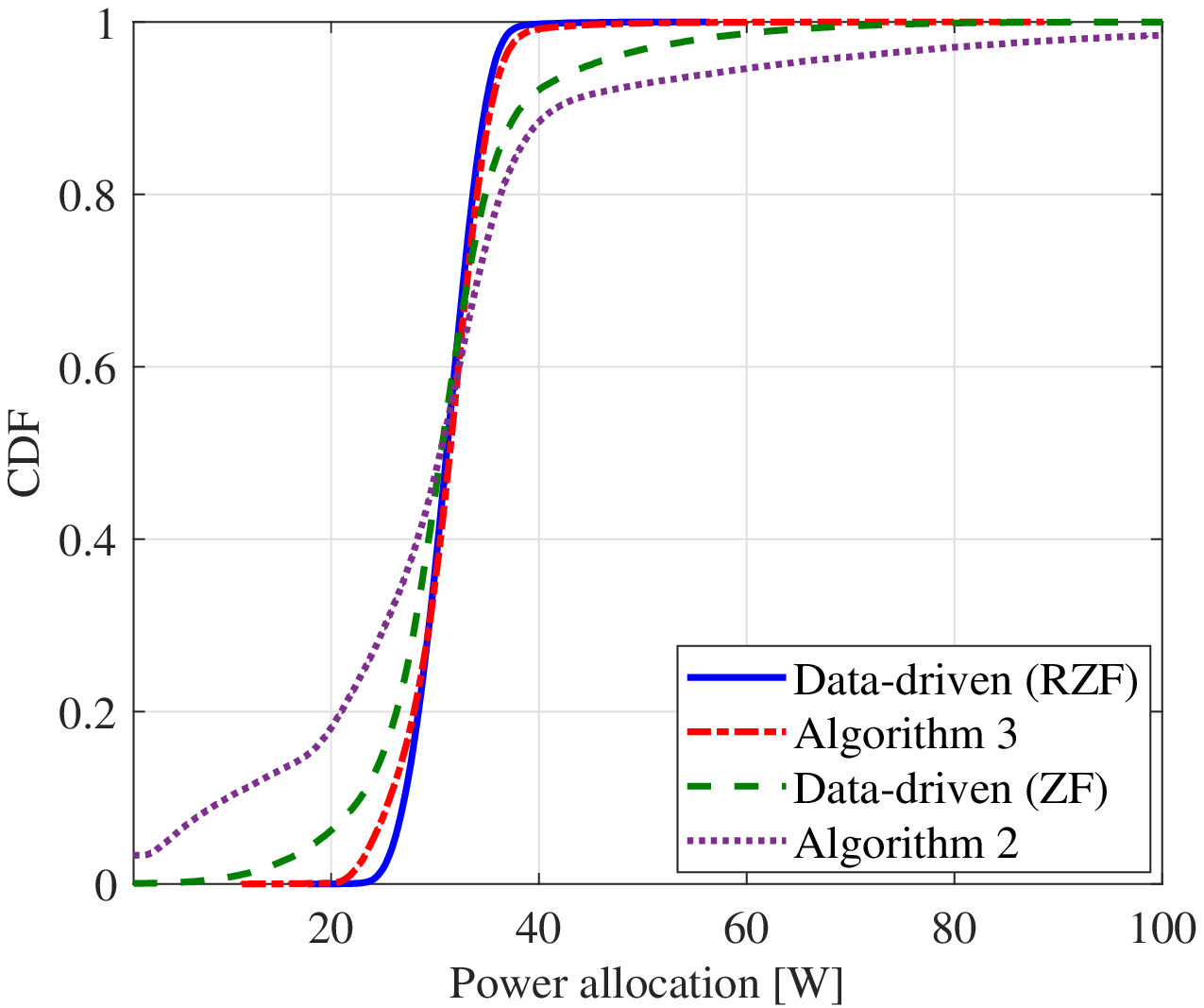}\\
			\centering $(b)$
		\end{minipage}
		\begin{minipage}{0.325\textwidth}
			\includegraphics[width=1.1\textwidth]{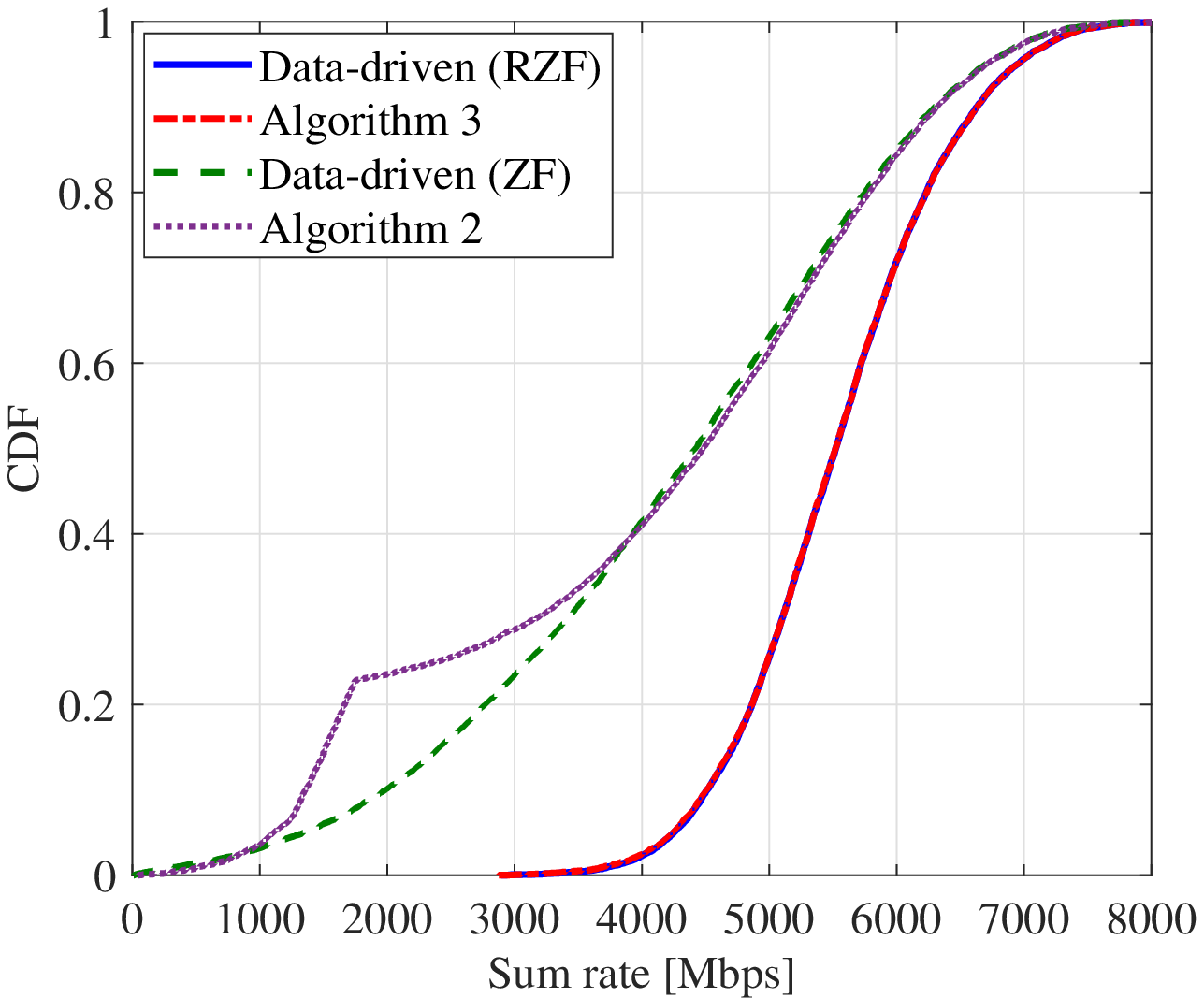} \\
			\centering $(c)$
		\end{minipage}
		\caption{The cumulative distribution function (CDF) of the different metrics provided by the model-based and data-driven approaches with the individual QoS requirement $250$~[Mbps]: $(a)$ the served rate per user; $(b)$ the power allocation to each user; $(c)$ the sum rate. }
		\label{Fig:MLDR}
	\end{figure*}
	
	\begin{table}[t]
		\caption{The performance and run time (milliseconds) comparison of the model-based and data-driven approaches}
			\fontsize{8}{9.5}\selectfont
		\label{runtime}
		\centering
		\begin{tabular}{|l|c|c|c|c|}
			\hline
			& \begin{tabular}[c]{@{}l@{}}QoS \\ require.\\ $\mbox{[Mbps]}$\end{tabular} & \begin{tabular}[c]{@{}l@{}}Time \\ {[}ms{]}\end{tabular} & \begin{tabular}[c]{@{}l@{}}Sum \\ rate \\ {[}Mbps{]}\end{tabular} & \begin{tabular}[c]{@{}l@{}}Percentage of\\ satisfactions\\ {[}\%{]}\end{tabular} \\ 
			\hline
			\multirow{2}{*}{Model-based (ZF)}& $\xi_k = 250$ & $17.38$  & $4054$ & $92.14$ \\ 
			\cline{2-5} 
			& $\xi_k = 300$   & $19.97$ & $4048$ & $91.27$  \\ 
			\hline
			\multirow{2}{*}{Model-based (RZF)} & $\xi_k = 250$ & $19.26$& $5542$ & $99.86$ \\ 
			\cline{2-5} 
			& $\xi_k = 300$  & $20.15$ & $5483$ & $99.83$  \\ 
			\hline
			\multirow{2}{*}{Data-driven (ZF)}  & $\xi_k = 250$ & $2.0$ & $4260$& $82.38$ \\ \cline{2-5} 
			& $\xi_k = 300$ & $2.2$ & $4248$ & $79.82$ \\ 
			\hline
			\multirow{2}{*}{Data-driven (RZF)} & $\xi_k = 250$ & $1.3$ & $5545$ & $98.47$ \\ \cline{2-5} 
			& $\xi_k = 300$ & $2.1$ & $5491$& $97.64$ \\ 
			\hline
		\end{tabular}
	\end{table}
	{\hili For evaluating the balance between the sum rate and the satisfied-user set, we now define a specific case of objective function as
		\begin{align}\label{specific_obj}
			\Lambda \triangleq \Omega \bigg(\frac{|\mathcal{Q}|}{K}  + \frac{\sum\nolimits_{k\in\mathcal{K}}R_{k} (\{p_{k'}^\ast\} )}{ \sum\nolimits_{k\in\mathcal{K}}R_{k}^\text{SumOpt} (\{p_{k'}^\ast \} )}\bigg),
		\end{align}
		where $\Omega= \frac{K\sum_{k\in\mathcal{K}}R^\text{SumOpt}_k(\{p^*_{k'}\})}{K+\sum_{k\in\mathcal{K}}R^\text{SumOpt}_k(\{p^*_{k'}\})}$ stands for the normalized factor and $R_{k}^\text{SumOpt}(\{p_{k'}^\ast\} )$ is  the channel capacity of $\UEk$ obtained by solving problem~\eqref{RelatedWorkSumRate}. We compare the performance of all the benchmarks versus the different QoS requirements as in Fig.~\ref{Fig:NormObj}. JointOpt gives the highest performance as the individual QoS requirement $\{\xi_k\}$ varies.  Algorithms~\ref{alg: prob_MMR_ZF} and \ref{alg: MMR_RZF_MRT} can handle the conflict utility met rices well. The other benchmarks, i.e., EqualPower and SumOpt, give significantly lower performance due to ignoring the users' demands. Moreover, the higher QoS requirements expand the gap between EqualPower and SumOpt and our proposed algorithms. If $\UEk$ requests $\xi_k = 200$~[Mbps] and the system uses the ZF precoding technique, JointOpt and EqualPower is almost overlapped. However, the gap expands  $10\times$ when the individual QoS requirement is $1200$~[Mbps]. 
		}
	
The Jain's fairness index is a good metric to measure how the offered data throughput matches the demands at the user levels \cite{jain1984quantitative}. By computing the satisfaction demand of each user, i.e., denoted by $o_{k}$ as a ratio between the offered data throughput and the QoS requirement of user~$k, \forall k$, then the Jain's index is $J = {(\sum_{k \in \Kcal} o_k)^2}/({K\sum_{k \in \Kcal} o_k^2})$, 
		which varies from $1/K$ to $1$. Fig.~\ref{Fig:JainIndex} plots the Jain's index for all the benchmarks as a function of the QoS requirement. To maximize the sum rate, the power should be dedicated to users with good channel conditions. This unfair policy leads SumOpt to a very low Jain's index. Consequently, an equal power allocation strategy offers a better fairness level with the Jain's index $1.2\times$ and $1.1\times$ higher than SumOpt by utilizing the ZF and RZF precoding techniques, respectively. The two conflict objective functions, i.e., the satisfied-user set and the sum rate, results in the second-best Jain's index with up to $1.23\times$  better than SumOpt with the RZF precoding. 
	
	Figure~\ref{Fig:MLDR} shows the CDF of some metrics for both the model-based and data-driven approaches. The continuous mappings in \eqref{eq:well}--\eqref{eq:M3} may  not be isomorphisms since the codomains are non-smooth functions, especially for the achievable rates in \eqref{eq:M2}  (see Figs.~\ref{Fig:MLDR}(a) and (c)). The fact manifests difficulties in training and predicting the joint power allocation and satisfied-user set optimization. However, the neural network learns pretty well for some regimes with smooth CDFs. 
	Fig~\ref{Fig:MLDR}(b) shows that the power allocation difference between the data-driven and model-based approaches are $30.16\%$ and $12.35\%$ on average with the ZF and RZF precoding techniques, respectively. Furthermore, we show in detail the performance and run time of those two approaches in Table~\ref{runtime}. Although there is a slightly increasing the run time when the individual QoS requirement increases, all the proposed approaches yield the results in milliseconds (ms). Specifically, 
	the data-driven approach reduces run times up to about { $14\times$} compared with the model-based approach. 
	
		\begin{figure*}[t]
		\begin{minipage}{0.5\textwidth}
			\includegraphics[width=0.8\textwidth]{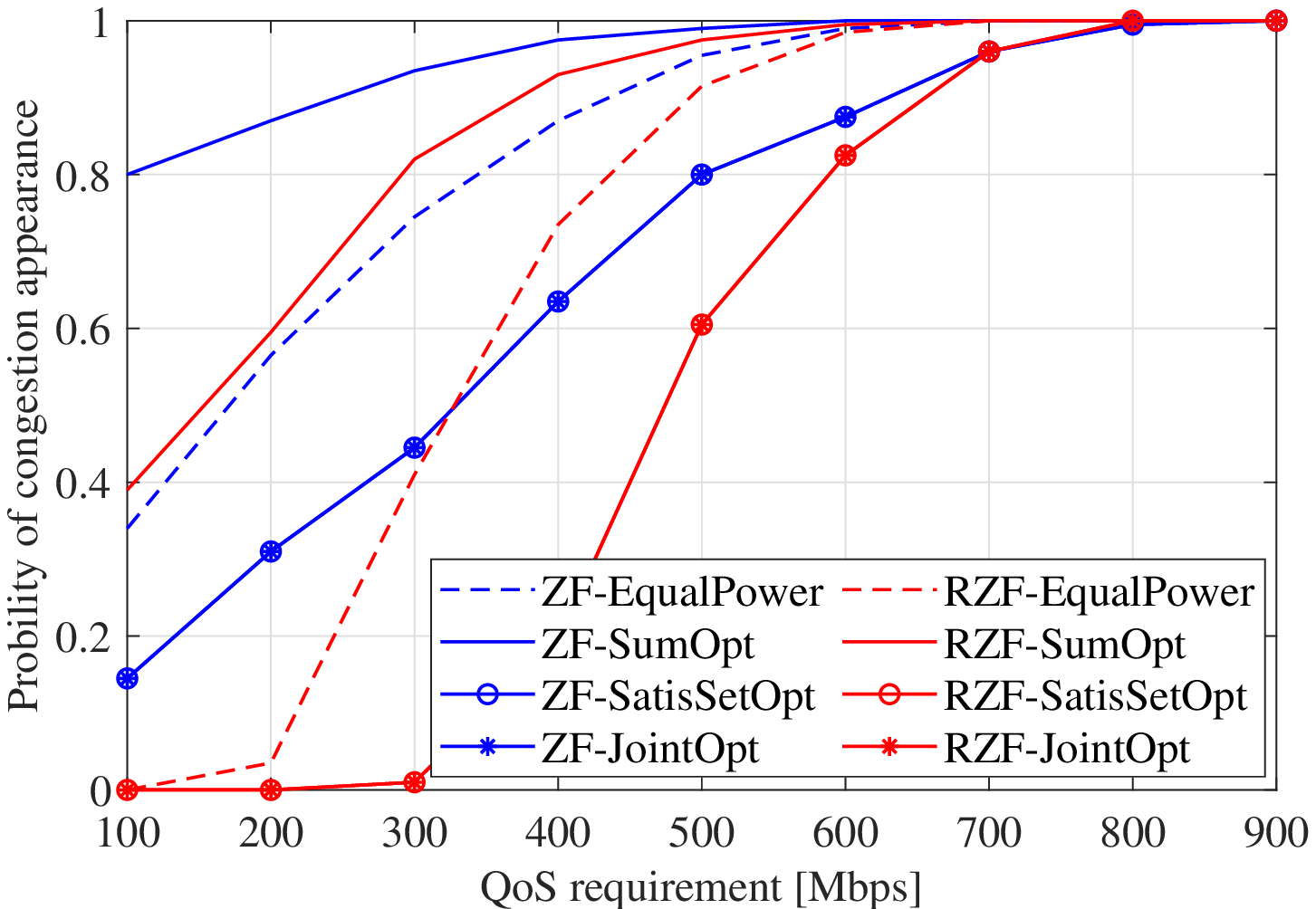}\\
			\centering $(a)$
		\end{minipage}
		\begin{minipage}{0.5\textwidth}
			\includegraphics[width=0.8\textwidth]{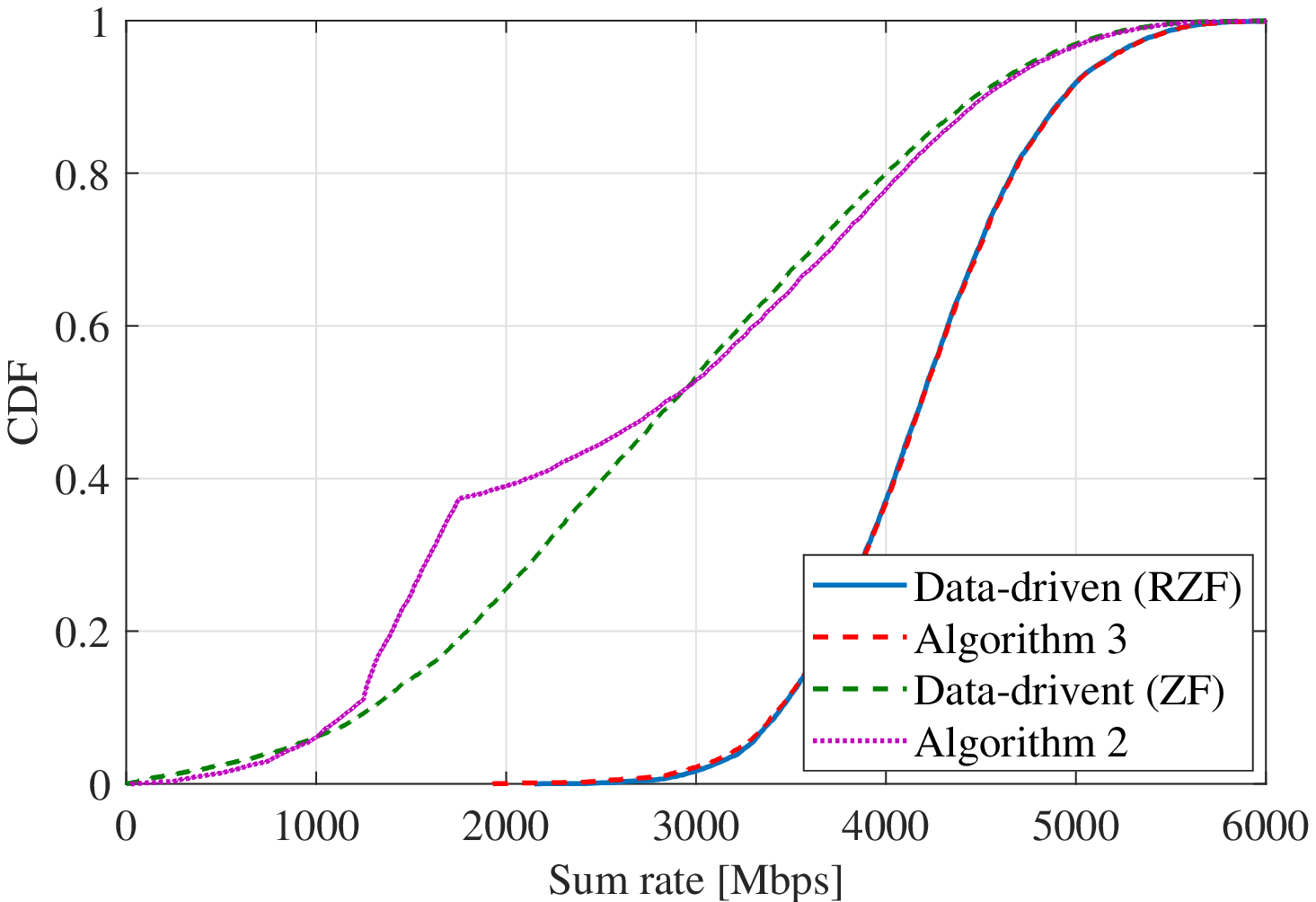} \\
			\centering $(b)$
		\end{minipage}
		\caption{The performance evaluation under the propagation environments including the rain and cloud attenuation: $(a)$ the probability of congestion appearance versus the QoS requirement.; $(b)$ the CDF of the  sum rate [Mbps] provided by the model-based and data-driven approaches with the QoS requirement per user 250 [Mbps]. }
		\label{Fig:rain_cloud}
	\end{figure*}
	{\color{black} In Fig. \ref{Fig:rain_cloud}, we show the impact of the atmosphere loss including rain and cloud attenuation on the system performance. In particular, rain fading model can be modeled by a log-normal distribution, whose parameters such as mean and variance have been selected according to the European climate \cite{6184256}. Salonen-Uppala model \cite{8718480, el19910687} is used for modeling cloud attenuation, which depends on several features, i.e., the elevation angle toward the satellite, user's location and the carrier frequency.  Particularly, the channel model from the satellite to $\mathtt{UE}_k$ is formulated as
			$	\tilde{\bh}_k = \bh_k\sqrt{r_k}/\sqrt{c_k}, k\in\mathcal{K},$ 
			where $r_k$ and $c_k$ is the rain fading and cloud attenuation at $\mathtt{UE}_k$, respectively. Herein, $r_k$ is modeled as a lognormal random variable with mean $-2.6$~[dB] and variance $1.63$~[dB] \cite{6184256}. $c_k$ is mathematically defined as
			\begin{equation}
				c_k = \frac{0.819fW_\text{red}}{\varepsilon''(1+\zeta^2)}\frac{1}{\sin(E_k)}, k\in\mathcal{K},
			\end{equation}
			where $f$ [GHz] is the carrier frequency, $W_\text{red}=0.6$ [kg/m$^2$] is the statistics for the integrated reduced liquid water content, $E_k$ denotes the elevation angle between $\mathtt{UE}_K$ and the satellite. We define $\zeta = (2+\varepsilon')/\varepsilon''$ with $\varepsilon'$
			and $\varepsilon''$ present the real and imaginary parts of the permittivity of water, which is calculated as \cite{el19910687}
			\begin{align}
					\varepsilon' &= \varepsilon_2 + \frac{\varepsilon_0 - \varepsilon_1}{1+(\frac{f}{f_p})^2} + \frac{\varepsilon_1-\varepsilon_2}{1 + (\frac{f}{f_s})^2},\\
					\varepsilon'' & = \frac{f(\varepsilon_0 - \varepsilon_1)}{f_p\big(1+(\frac{f}{f_p})^2\big)} + \frac{f(\varepsilon_1-\varepsilon_2)}{f_s\big(1 + (\frac{f}{f_s})^2\big)},
			\end{align}
			with $\varepsilon_0=77.66 + 103.3(\vartheta - 1)$, $\varepsilon_1=5.48, \varepsilon_2 = 3.51$. $f_p = 20.09 - 142(\vartheta-1) + 294(\vartheta-1)^2$ [GHz] and $f_s = 590-1500(\vartheta-1)$ [GHz] are the principal and secondary relaxation frequencies, respectively. Finally, $\vartheta = 300/T$ with $T=273.15$ is temperature measured in Kevin. 
			In Fig.~\ref{Fig:rain_cloud}(a), it is numerically observed that even though the congestion probability increases in all algorithms because of the consideration of the atmosphere loss, our proposed algorithms still outperforms other benchmarks. {\color{black} Furthermore, Fig.~\ref{Fig:rain_cloud}(b) manifests that the data-driven approaches work well with the updated channel models.}}

	\section{Conclusions}\label{sec:conclusion}
	This paper has investigated the congestion issue in the demand-based optimization for multi-beam multi-user satellite communications. Two for one, under the methodology of multi-objective optimization, we jointly maximized the sum rate and satisfied-user set with all the channel conditions when many users share the same time and frequency resource plane. Conditioned on maintaining the QoS requirements as the priority, we have designed the heuristic algorithms that can effectively solve the optimization problem and operate in both feasible and infeasible domains under the limited power budget and the individual QoS requirements. By exploiting the water filling method and the linear precoding technique, numerical results confirmed that the number of satisfied users is significantly increased by utilizing our framework compared with the state-of-the-art benchmarks. Furthermore, the run time by deploying a neural network reduces to be far away to $10$~ms enabling real-time power allocation and satisfied-user control in satellite systems where the solution needs to be updated even at the millisecond time sale because of variety in the user scheduling decisions or individual user demands.
	\appendix
	\subsection{Proof of Theorem~\ref{Cardinality}} \label{Appendix:Cardinality}
	From Assumption~\ref{AssumpPrior}, the system first prioritizes on maximizing the number of satisfied users with the minimum transmit power consumption. This priority will lead to the maximum amount of the remaining power budget for the objective function $f_0(\{ p_{k'} \})$. 
	By assuming that the solution to power control is available and $\Qcal = \Kcal$, the total transmit power minimization problem is formulated as follows
	\begin{subequations} \label{probGlobalApp1}
	\begin{alignat}{2}
		&	\underset{ \{ p_{k'} \in \mathbb{R}_{+} \}}{\mathrm{minimize} }&& \sum\nolimits_{k \in \mathcal{K}} p_k ,\\
		&\mbox{subject to}\ & & R_k( \{ p_{k'} \}) \geq \xi_k,  \forall k\in\Kcal,\\
		&&& \sum\nolimits_{k \in\Kcal} p_k \leq P_{\max}.
	\end{alignat}
\end{subequations}
	We notice that problem~\eqref{probGlobalApp1} has a non-empty feasible set, and it is indeed a convex problem. By denoting $\alpha_k = 2^{\xi_k/B} -1, \forall k,$  \eqref{probGlobalApp1} is converted from the demand-based constraints to the SINR requirements as
	\begin{subequations} \label{probGlobalApp2}
	\begin{alignat}{2}
		&\underset{ \{ p_{k'} \in \mathbb{R}_{+} \}}{\mathrm{minimize} }&& \quad  \sum\nolimits_{k \in \mathcal{K}} p_k,\\
		&\mbox{subject to}\ & & \gamma_k( \{ p_{k'} \}) = \alpha_k,  \forall k\in\Kcal, \\
		&&& \sum\nolimits_{k \in\Kcal} p_k \leq P_{\max}.
	\end{alignat}
\end{subequations}
	The equality constraints in \eqref{probGlobalApp2} is obtained by the fact that problems \eqref{probGlobalApp1} and \eqref{probGlobalApp2} share the same global optimum. By exploiting the SINR expression in \eqref{eq:SINR} for $\UEk$ into the corresponding SINR constraint in \eqref{probGlobalApp2}, we now recast this SINR constraint into an equivalent form as
	\begin{equation} \label{eq:pk}
		\begin{split}
			p_k |\mathbf{h}_K^H \mathbf{w}_k|^2 &= \alpha_k \sigma^2  + \alpha_k	\sum\nolimits_{\ell \in \mathcal{K} \setminus \{k\} } p_{\ell } | \mathbf{h}_k^H \mathbf{w}_{\ell} |^2 \\
			\stackrel{(a)}{\Leftrightarrow} p_k &= \frac{\alpha_k \sigma^2}{(\alpha_k +1)|\mathbf{h}_K^H \mathbf{w}_k|^2}\\
			&\quad + \frac{\alpha_k}{(\alpha_k +1)|\mathbf{h}_k^H \mathbf{w}_k|^2} \sum\nolimits_{\ell \in \mathcal{K}} p_{\ell } | \mathbf{h}_k^H \mathbf{w}_{\ell} |^2,
		\end{split}
	\end{equation}
	where $(a)$ is obtained by adding the extra term $\alpha_k p_{k } | \mathbf{h}_k^H \mathbf{w}_{k} |^2$ into both sides of the first equality in \eqref{eq:pk}, then doing some algebraic manipulation. Repeating the same steps for the SINR constraints of all the $K-1$ scheduled  users and then stacking them in the matrix form, we obtain the linear equation as
	\begin{equation} \label{eq:LinearApp}
		(\mathbf{I}_K - \mathbf{R} \mathbf{Q}) \mathbf{p} = \pmb{\nu},
	\end{equation}
	where $\mathbf{R}, \mathbf{Q},$ and $\pmb{\nu}$ are  given in the theorem. In \eqref{eq:LinearApp}, $\mathbf{p} = [p_1, \ldots, p_K] \in \mathbf{R}_{+}^K$.  We observe that $\mathbf{R} \mathbf{Q}$ has nonnegative elements. By applying the Perron-Frobenius theorem \cite{pillai2005perron}, the spectral radius of matrix $\mathbf{R} \mathbf{Q}$ should satisfy $\rho(\mathbf{R} \mathbf{Q}) = \max\{ |\lambda_1|, \ldots, |\lambda_K|\} < 1$. 
	After that, the unique solution to \eqref{eq:LinearApp} exists since $(\mathbf{R} \mathbf{Q})^m \rightarrow \mathbf{0}$ as $m \rightarrow \infty$, which implies that $(\mathbf{I}_K - \mathbf{R}\mathbf{Q})^{-1} = \sum_{m=0}^{\infty} (\mathbf{R} \mathbf{Q})^m$ converges, and each element is nonnegative. Consequently, the first condition as shown in the theorem.  The minimum power solution that the satellite spends on serving all the $K$ scheduled  users with the QoS requirements is 
		\begin{equation} \label{eq:PowerSol}
			\mathbf{p}^{\ast} = (\mathbf{I}_K - \mathbf{R}\mathbf{Q})^{-1} \pmb{\nu}.
		\end{equation}
	Combining the power solution and the limited power budget constraint in \eqref{probGlobalApp1}, we obtain the second condition as shown in the theorem.
	\subsection{Proof of Theorem~\ref{theorem:Coverge}} \label{appendix:Coverge}
	Let us define $\Qcal^{(n)}$ a feasible satisfied-user set to problem~\eqref{alg_SRM:phase23} that contains all the scheduled  users with at least their QoS requirements at iteration~$n$, which is defined as follows
	\begin{equation}
		\begin{split}
			\Qcal^{(n)} = \Big\{ k | R_k ( \{ p_{k'}^{(n)} \}) &= \xi_k, \forall k \in \Qcal^{\ast,(n-1)}, \\
			R_k ( \{ p_{k'}^{(n)} \}) &\geq \xi_k, k \in \Kcal \setminus \Qcal^{\ast,(n-1)} \Big\}.
		\end{split}
	\end{equation}
	We further introduce $\widetilde{\Qcal}^{(n)}$ being the feasible region that contains all the possibilities $\Qcal^{(n)}$, then we obtain the following properties
	$\Qcal^{\ast, (n-1)} \in \widetilde{\Qcal}^{(n)} \mbox{ and }  |\Qcal^{\ast, (n-1)}| \leq  |\Qcal^{\ast, (n)}|$,
	where the first property is attained by the fact that $\Qcal^{\ast, (n-1)}$ is involved in the demand-based constraint at iteration~$n$. The second property is because problem~\eqref{alg_SRM:phase23} should give a solution to the satisfied-user set not worse than the previous one. This establishes the monotonically increasing property in \eqref{eq:QSeries}. We only consider a finite set of scheduled  users, i.e., $|\Qcal^{(n)}| < K, \forall n$, thus \eqref{eq:QSeries} should be bounded from above. If the convergence holds at iteration~$n$, then the optimal satisfied-user set must be also a solution to iteration~$n+1$. Otherwise, it results in $|\Qcal^{\ast,(n+1)}| \geq |\Qcal^{\ast,(n)}|$. Algorithm~\ref{alg:glob_alg} ensures the cardinality of the satisfied-user set $\Qcal^{\ast}$ non-decreasing along with iterations and converges to a fixed point. 
	
	We prove the monotonic decreasing function of the sum rate in \eqref{eq:RSeries} by induction. Indeed, the first inequality holds, i.e., $\sum_{k \in \Kcal} R_k ( \{ p_{k'}^{\ast,(0)} \} ) \geq \sum_{k \in \Kcal} R_k ( \{ p_{k'}^{\ast,(1)} \} )$ since the feasible domain of problem~\eqref{RelatedWorkSumRate} corresponding to the weight values $\mu_1 = 1$ and $\mu_2 =0$ that provides a better sum rate solution than that of problem~\eqref{alg_SRM:phase23}. Assume that the inequality holds up to iteration~$n$, i.e., $\sum_{k \in \Kcal} R_k ( \{ p_{k'}^{\ast,(n-1)} \} ) \geq \sum_{k \in \Kcal} R_k ( \{ p_{k'}^{\ast,(n)} \} )$, and the proof should confirm that it also holds at iteration~$n+1$:
	\begin{equation} \label{eq:Rn1ineq}
		\sum\nolimits_{k \in \Kcal} R_k \big( \{ p_{k'}^{\ast,(n)} \} \big) \geq \sum\nolimits_{k \in \Kcal} R_k \big( \{ p_{k'}^{\ast,(n+1)} \} \big).
	\end{equation}
	We reformulate the left-hand side of \eqref{eq:Rn1ineq} by decomposing the scheduled -user set $\Kcal$ into the satisfied-user set and the unsatisfied-user set as follows 
	\begin{equation}  \label{eq:LHSn}
		\begin{aligned}
			&\sum\nolimits_{k \in \Kcal} R_k \big( \{ p_{k'}^{\ast,(n)} \} \big) \\
			&= \sum\nolimits_{k \in  \Qcal^{\ast,(n)}} R_k \big( \{ p_{k'}^{\ast,(n)} \} \big) + \sum\nolimits_{k \in \Kcal\setminus \Qcal^{\ast,(n)}} R_k \big( \{ p_{k'}^{\ast,(n)} \} \big)   \\
			&=\sum\nolimits_{k \in \Qcal^{\ast,(n-1)}} \xi_k +	\sum\nolimits_{k \in  \bar{\Qcal}^{\ast,(n-1)}} R_k \big( \{ p_{k'}^{\ast,(n)} \} \big) \\
			&\quad + \sum\nolimits_{k \in \Kcal\setminus \Qcal^{\ast,(n)}} R_k \big( \{ p_{k'}^{\ast,(n)} \} \big),
		\end{aligned}
	\end{equation}	
	%
	with noting that $\Qcal^{\ast,(n)} = \Qcal^{\ast,(n-1)} \cup \bar{\Qcal}^{\ast,(n-1)}$, where $\bar{\Qcal}^{\ast,(n-1)}$ is the satisfied-user set at iteration $n-1$ consisting of users with better data throughput than requested. Since the first part of \eqref{eq:LHSn} provides users with rates equal to their demands, we can formulate an optimization problem to maximize the left-hand side of \eqref{eq:Rn1ineq} with  the feasible domain $\mathcal{D}^{(n)}$ defined as follows
		\begin{align} \label{eq:Dn}
			\mathcal{D}^{(n)} = \Big\{ p_{k}^{(n)},\forall k \in \Kcal \big| R_{k} &(\{ p_{k'}^{(n)}\}) = \xi_k, \forall k \in \Qcal^{\ast,(n-1)}, \non\\
			&\sum\nolimits_{k\in \Kcal} p_{k}^{(n)} \leq P_{\max} \Big\}.
		\end{align}
	Next, we recast the right-hand side of \eqref{eq:Rn1ineq} to an equivalent form as
	\begin{equation} \label{eq:CostRHSn1}
	\begin{split}
		 \sum\limits_{k \in \Kcal} &R_k \big( \{ p_{k'}^{\ast,(n+1)} \} \big) \\
		&=  \sum\nolimits_{k \in  \Qcal^{\ast,(n+1)}} R_k \big( \{ p_{k'}^{\ast,(n+1)} \} \big) \\
		&\quad + \sum\nolimits_{k \in \Kcal\setminus \Qcal^{\ast,(n+1)}} R_k \big( \{ p_{k'}^{\ast,(n+1)} \} \big) \\
		& = \sum\nolimits_{k \in \Qcal^{\ast,(n)}} \xi_k +  \sum\nolimits_{k \in \bar{\Qcal}^{\ast,(n)}} R_k \big( \{ p_{k'}^{\ast,(n+1)} \} \big)   \\
		&\quad +
		\sum\nolimits_{k \in \Kcal\setminus \Qcal^{\ast,(n+1)}} R_k \big( \{ p_{k'}^{\ast,(n+1)} \} \big)\\
		& = \sum\nolimits_{k \in \Qcal^{\ast,(n-1)}} \xi_k + \sum\nolimits_{k \in \bar{\Qcal}^{\ast,(n-1)}} \xi_k  \\
		& \quad +   \sum\nolimits_{k \in \bar{\Qcal}^{\ast,(n)}} R_k \big( \{ p_{k'}^{\ast,(n+1)} \} \big)  \\
		&\quad +  
		\sum\nolimits_{k \in \Kcal\setminus \Qcal^{\ast,(n+1)}} R_k \big( \{ p_{k'}^{\ast,(n+1)} \} \big). 
	\end{split}
\end{equation}
	where $\Qcal^{\ast,(n+1)} = \Qcal^{\ast,(n)} \cup \bar{\Qcal}^{\ast,(n)}$, and \eqref{eq:CostRHSn1} is obtained since \eqref{eq:RSeries} holds until iteration~$n$ by induction. By observing \eqref{eq:CostRHSn1},  an optimization problem is formulated to maximize the right-hand side of \eqref{eq:Rn1ineq} with  the feasible domain $\mathcal{D}^{(n+1)}$ defined as follows
		\begin{align}
			\mathcal{D}^{(n+1)} &= \Big\{ p_{k}^{(n+1)},\forall k \in \Kcal \big| R_{k} (\{ p_{k'}^{(n+1)}\}) = \xi_k, \label{eq:Dn1}\\
			& \forall k \in \Qcal^{\ast,(n-1)} \cup \bar{\Qcal}^{\ast,(n-1)} , \sum\nolimits_{k\in \Kcal} p_{k}^{(n+1)} \leq P_{\max} \Big\}.\non
		\end{align}
	Combining \eqref{eq:Dn} and \eqref{eq:Dn1}, it holds that $\mathcal{D}^{(n+1)} \subseteq \mathcal{D}^{(n)}$ since $\varnothing$ is a subset of $\bar{\Qcal}^{\ast,(n-1)}$. Hence, \eqref{eq:Rn1ineq} holds and we conclude the proof.
	\subsection{Proof of Lemma~\ref{lemmaExistNeural}} \label{appendixExistNeural}
	From the given optimized power coefficients $\{ p_{k}^\ast \}$ to the $K$ scheduled  users, the satisfied-user set $\Qcal^{\ast}$ is defined as
	$\Qcal^{\ast} = \big\{ k \big|  R_k ( \{ p_{k'}^{\ast}\}) \geq \xi_k, k \in \Kcal \big\}$,
	where $R_k ( \{ p_{k'}^{\ast}\})$ is given as in \eqref{eq:rate} with $p_{k'} = p_{k'}^\ast, \forall k \in \Kcal$. The result indicates that the discrete set $\Qcal$ is explicitly characterized by  the propagation channels  and the power coefficients, which are continuous variables. This result is obtained by noting that the precoding vectors are defined by the instantaneous channel state information. Let us define $ \tau_k = \| \mathbf{h}_k \|$ and the law of conservation of energy points out that $0 \leq \tau_k \leq \sqrt{N}$, which is bounded from above. We observe that
	$0 \leq | \mathbf{h}_k^H \mathbf{w}_{\ell} |^2 \stackrel{(a)}{\leq} \| \mathbf{h}_k^H  \|^2  \| \mathbf{w}_{\ell} \|^2 \stackrel{(b)}{=} \tau_k^2$,
	where $(a)$ is obtained by the Cauchy-Schwarz inequality and $(b)$ is due to each precoding vector having the unit norm. 
	From this, the channel capacity of $\UEk$ is a continuous function and its feasible set is compact, which fulfill all the conditions of the universal approximation theorem \cite{goodfellow2016deep, hornik1989multilayer}. Consequently, we can construct a neural network with a finite number of neurons to learn the sum-rate optimization problem respect to both the power coefficients and satisfied user set. 

	\bibliographystyle{IEEEtran}
	\bibliography{Journal1}
	\begin{IEEEbiographynophoto} 
		{Van-Phuc Bui} received the B.Sc. degree (Hons.) in Electronics and Telecommunications from Ho Chi Minh city University of Technology (HCMUT), Vietnam, in 2018, and the M.Sc. degree in Electronics and Telecommunications from Soongsil University, Seoul, South Korea, in 2020. He was  a research assistant at University of Luxembourg. He is currently pursuing the Ph.D. degree with the Aalborg University, Aalborg, Denmark. His research interests are in convex optimization techniques and machine learning applications for wireless communications, and satellite communications.
	\end{IEEEbiographynophoto}

		\begin{IEEEbiographynophoto} 
		{Trinh Van Chien} (S'16-M'20) received the B.S. degree in Electronics and Telecommunications from Hanoi University of Science and Technology (HUST), Vietnam, in 2012. He then received the M.S. degree in Electrical and Computer Enginneering from Sungkyunkwan University (SKKU), Korea, in 2014 and the Ph.D. degree in Communication Systems from Link\"oping University (LiU), Sweden, in 2020. He was  a research associate at University of Luxembourg. He is now with the School of Information and Communication Technology (SoICT), Hanoi University of Science and Technology (HUST), Vietnam. His interest lies in convex optimization problems and machine learning applications for wireless communications and image \& video processing. He was an IEEE wireless communications letters exemplary reviewer for 2016, 2017, and 2021. He also received the award of scientific excellence in the first year of the 5Gwireless project funded by European Union Horizon's 2020.
	\end{IEEEbiographynophoto}

	\begin{IEEEbiographynophoto} 
	{Eva Lagunas} received the M.Sc. and Ph.D. degrees in telecommunications engineering from the Polytechnic University of Catalonia (UPC), Barcelona, Spain, in 2010 and 2014, respectively. She was Research Assistant within the Department of Signal Theory and Communications, UPC, from 2009 to 2013. During the summer of 2009 she was a guest research assistant within the Department of Information Engineering, Pisa, Italy. From November 2011 to May 2012 she held a visiting research appointment at the Center for Advanced Communications (CAC), Villanova University, PA, USA. In 2014, she joined the Interdisciplinary Centre for Security, Reliability and Trust (SnT), University of Luxembourg, where she currently holds a Research Scientist position. Her research interests include terrestrial and satellite system optimization, spectrum sharing, resource management and machine learning.
\end{IEEEbiographynophoto}

	\begin{IEEEbiographynophoto} 
	{Jo\"el Grotz} (Senior Member IEEE) graduated in Electrical Engineering from the University of Karlsruhe and the Grenoble Institute of Technology in 1999 and completed his Ph.D. in Telecommunications jointly at the University of Luxembourg and KTH in Stockholm in 2008. He has worked for SES in Betzdorf, Luxembourg on the development of satellite broadband communication system design for GEO and MEO high throughput satellite systems on different ground segment and space segment topics and system optimization aspects. He has previously worked in the Technical Labs at ST Engineering iDirect (former Newtec Cy at Sint-Niklaas in Belgium) on topics of system design and signal processing in satellite modems. He is currently working as Senior Manager at SES on the development of a dynamic resource management system for novel flexible satellite systems, including SES-17 and O3b mPOWER as well as future satellite systems under planning. 
\end{IEEEbiographynophoto}

\begin{IEEEbiographynophoto}
	{Symeon Chatzinotas} (S'06, M'09, SM'13) is Full Professor and Head of the SIGCOM Research Group at SnT, University of Luxembourg. He is coordinating the research activities on communications and networking across a group of 70 researchers, acting as a PI for more than 40 projects and main representative for 3GPP, ETSI, DVB. He is currently serving in the editorial board of the IEEE Transactions on Communications, IEEE Open Journal of Vehicular Technology and the International Journal of Satellite Communications and Networking.
	
	In the past, he has been a Visiting Professor at the University of Parma, Italy and was involved in numerous R\&D projects for NCSR Demokritos, CERTH Hellas and CCSR, University of Surrey.
	
	He was the co-recipient of the 2014 IEEE Distinguished Contributions to Satellite Communications Award and Best Paper Awards at WCNC, 5GWF, EURASIP JWCN, CROWNCOM, ICSSC. He has (co-)authored more than 600 technical papers in refereed international journals, conferences and scientific books. 
\end{IEEEbiographynophoto}

\begin{IEEEbiographynophoto}
	{Bj\"orn Ottersten} (S'87–M'89–SM'99–F'04) received the M.S. degree in electrical engineering and applied
	physics from Linköping University, Linköping, Sweden, in 1986, and the Ph.D. degree in electrical
	engineering from Stanford University, Stanford, CA, USA, in 1990. He has held research positions with
	the Department of Electrical Engineering, Linköping University, the Information Systems Laboratory,
	Stanford University, the Katholieke Universiteit Leuven, Leuven, Belgium, and the University of
	Luxembourg, Luxembourg. From 1996 to 1997, he was the Director of Research with ArrayComm, Inc., a
	start-up in San Jose, CA, USA, based on his patented technology. In 1991, he was appointed Professor of
	signal processing with the Royal Institute of Technology (KTH), Stockholm, Sweden. Dr. Ottersten has
	been Head of the Department for Signals, Sensors, and Systems, KTH, and Dean of the School of
	Electrical Engineering, KTH. He is currently the Director for the Interdisciplinary Centre for Security,
	Reliability and Trust, University of Luxembourg. He is a recipient of the IEEE Signal Processing Society
	Technical Achievement Award, the EURASIP Group Technical Achievement Award, and the European
	Research Council advanced research grant twice. He has co-authored journal papers that received the
	IEEE Signal Processing Society Best Paper Award in 1993, 2001, 2006, 2013, and 2019, and 9 IEEE
	conference papers best paper awards. He has been a board member of IEEE Signal Processing Society,
	the Swedish Research Council and currently serves of the boards of EURASIP and the Swedish
	Foundation for Strategic Research. Dr. Ottersten has served as Editor in Chief of EURASIP Signal
	Processing, and acted on the editorial boards of IEEE Transactions on Signal Processing, IEEE Signal
	Processing Magazine, IEEE Open Journal for Signal Processing, EURASIP Journal of Advances in Signal
	Processing and Foundations and Trends in Signal Processing. He is a fellow of EURASIP.
\end{IEEEbiographynophoto}
\end{document}